\newcommand{\ubar}[1]{\underaccent{\bar}{#1}}
\newtheorem{theorem}{Theorem}
\definecolor{darkgreen}{rgb}{0.0, 0.4, 0.0}
\definecolor{mygreen}{rgb}{0.0, 0.75, 0.0}
\definecolor{hunterblazeorange}{rgb}{1.0, 0.5098, 0.0}
\definecolor{lightgray}{gray}{0.9}
\DeclareMathAlphabet\mathbfcal{OMS}{cmsy}{b}{n}
\DeclareMathOperator*{\st}{subject~to}
\DeclareMathOperator*{\argmin}{arg\,min}
\newcommand{\norm}[1]{\left\lVert#1\right\rVert}
\newcommand{\normtwo}[1]{\left\lVert#1\right\rVert_{2}}
\newtheorem{lemma}{Lemma}
\newtheorem{proposition}{Proposition}
\newtheorem{corollary}{Corollary}
\newtheorem{assumption}{Assumption}
\begin{document}

\title{\LARGE On the Guarantees of Minimizing Regret in Receding Horizon}

\author{Andrea Martin, Luca Furieri, Florian D\"orfler, John Lygeros, and Giancarlo Ferrari-Trecate
\thanks{Research supported by the Swiss National Science Foundation under the NCCR Automation (grant agreement 51NF40\textunderscore 80545) and the Ambizione grant PZ00P2
\textunderscore 208951.}
\thanks{A. Martin, L. Furieri, and G. Ferrari-Trecate are with the Institute of Mechanical Engineering, EPFL, Switzerland. E-mail addresses: \{andrea.martin, luca.furieri, giancarlo.ferraritrecate\}@epfl.ch.}
\thanks{F. D\"orfler and J. Lygeros are with the Department of Information Technology and Electrical Engineering, ETH Z\"urich, Switzerland. E-mail addresses: \{dorfler, jlygeros\}@ethz.ch.}}

\maketitle

\begin{abstract}
    Towards bridging classical optimal control and online learning, regret minimization has recently been proposed as a control design criterion. This competitive paradigm penalizes the loss relative to the optimal control actions chosen by a clairvoyant policy, and allows tracking the optimal performance in hindsight no matter how disturbances are generated. In this paper, we propose the first receding horizon scheme based on the repeated computation of finite horizon regret-optimal policies, and we establish stability and safety guarantees for the resulting closed-loop system. Our derivations combine novel monotonicity properties of clairvoyant policies with suitable terminal ingredients. We prove that our scheme is recursively feasible, stabilizing, and that it achieves bounded regret relative to the infinite horizon clairvoyant policy. Last, we show that the policy optimization problem can be solved efficiently through convex-concave programming. Our numerical experiments show that minimizing regret can outperform standard receding horizon approaches when the disturbances poorly fit classical design assumptions – even when the finite horizon planning is recomputed less frequently.
\end{abstract}

\section{Introduction}
Many feedback control methods aim to optimize a performance measure with respect to a specific class of exogenous perturbations. For instance, classical $\mathcal{H}_2$ and $\mathcal{H}_\infty$ control paradigms assume that a stochastic or adversarial disturbance process drives the system dynamics, respectively, and minimize the expected or worst-case control cost accordingly \cite{hassibi1999indefinite}. As these control policies are tailored to the assumed nature of the exogenous perturbations, they may incur significant loss or be conservative if the true disturbances do not fulfill the modeling assumptions \cite{goel2022power, karapetyan2022regret}.

For striking a balance between nominal performance and robustness, several approaches have been proposed, including risk-sensitive control \cite{jacobson1973optimal}, mixed $\mathcal{H}_2 {/} \mathcal{H}_\infty$ \cite{bernstein1988lqg, doyle1989optimal, rotea1991h2, scherer1995mixed}, and adversarially robust control \cite{lee2022performance}. However, these control strategies often settle upon a robustness level selected a priori, with no regard for the disturbance sequence observed during online operation. The resulting policies therefore lack the adaptivity that is required to take full advantage of dynamic environments.

In the computer science literature, the design of sequential decision-making algorithms that learn from experience has traditionally been approached from the perspective of regret minimization \cite{shalev2011online, hazan2016introduction}. This framework encourages agents to dynamically adjust their strategy – based on information deduced from previous rounds – by penalizing the loss relative to the optimal policy in hindsight. Regret-based methods offer attractive guarantees in terms of the performance of the best policy an agent could have counterfactually played, which hold independently of the stochastic or adversarial nature of the disturbances.

While classical online learning theory has mostly considered memoryless environments (see, for instance, the rich literature on bandit problems \cite{lattimore2020bandit}), recent years have witnessed increasing interest in applying modern statistical and algorithmic techniques to settings with dynamics. Initiated by \cite{abbasi2011regret}, several learning algorithms have been proposed for adaptively controlling a linear dynamical system driven by a stochastic \cite{dean2018regret, cohen2019learning, lale2020logarithmic} or adversarial \cite{agarwal2019online, foster2020logarithmic, hazan2020nonstochastic, simchowitz2020improper} disturbance process. The methods above attain sublinear regret against the classes of state-feedback and disturbance-action policies, that is, the average difference between the cost they incur and that of the best fixed strategy converges to zero over time.
On the other hand, policy regret methods offer no guarantee that this static benchmark incurs a low control cost, leading to possibly loose performance certificates \cite{hazan2009efficient}. For instance, the authors of \cite{goel2023regret} have shown that no single static feedback controller can perform well in a scenario where the disturbances alternate between being drawn according to a well-behaved stochastic process and being chosen adversarially. Hence, any online algorithm that tries to learn the best fixed state-feedback law will also incur a high control cost.

The drawbacks of policy regret methods have motivated recent literature to design algorithms that compete against a series of time-varying benchmarks \cite{zinkevich2003online}. Gradient-based methods that achieve low dynamic regret against the class of disturbance-action policies have been proposed, e.g., in \cite{zhao2020dynamic} and \cite{zhou2023efficient}. A control law that exactly minimizes the loss relative to the globally optimal sequence of control actions in hindsight has been computed in \cite{goel2023regret, sabag2021regret} via reductions to classical $\mathcal{H}_\infty$ synthesis and Nehari noncausal approximation problems \cite{nehari1957bounded}. Algorithms that compete against time-varying benchmarks have been shown to effectively interpolate between the performance of classical $\mathcal{H}_2$ and $\mathcal{H}_\infty$ controllers in a variety of applications of interest, ranging from longitudinal motion control of a helicopter to control of a wind energy conversion system \cite{sabag2022optimal}. However, the resulting control policies cannot comply with the safety requirements of many practical applications due to a lack of provable robustness guarantees.

With the aim of allowing a reliable deployment of regret minimization methods, in earlier work \cite{martin2022safe} we leveraged the system level synthesis (SLS) framework \cite{wang2019system} to formulate safe regret-optimal control problems as semidefinite optimization programs; a similar approach was adopted in \cite{didier2022system} to tackle the case where the exogenous perturbations satisfy instantaneous ellipsoidal bounds. The poor scalability of semidefinite optimization, however, hinders the widespread application of the results in \cite{martin2022safe, didier2022system, didier2022generalised, martin2022follow, martin2023regret}; efficiently minimizing dynamic regret over an infinite
horizon – while guaranteeing stability and safety of the closed-loop system – remains an open challenge.

\emph{Contributions:}
Motivated by the ability of robust model predictive control (MPC) to handle constraints on the physical variables of a system \cite{rawlings2017model}, we propose a stabilizing receding horizon scheme based on the repeated computation of finite horizon safe regret-optimal feedback policies. Unlike classical architectures with $\mathcal{H}_2$ and $\mathcal{H}_\infty$ objectives, static policies do not attain minimal dynamic regret \cite{sabag2021regret} in this setting. Further, the cost-to-go function – commonly employed as a control Lyapunov function in proving asymptotic stability of the closed-loop system \cite{mayne2000constrained} – cannot be computed as a quadratic function of the state only.
To address the issues above, we shape the terminal ingredients based on an auxiliary $\mathcal{H}_\infty$ policy that induces a quadratic upper bound on the regret to go. Moreover, to compete against the optimal benchmark along each planning horizon, we repeatedly compare with finite horizon clairvoyant policies. Yet, we establish regret guarantees of our receding horizon control law relative to the infinite horizon clairvoyant policy. %
As a result of our analysis, we show that the proposed strategy enjoys recursive feasibility and guarantees $\ell_2$-stability of the closed-loop system. %
A main challenge has been that the same guarantees cannot be established by reducing the regret minimization to an $\ell_2$-gain attenuation problem through the construction of an auxiliary system \cite{goel2023regret} combined with receding horizon $\mathcal{H}_\infty$ control \cite{kim2004disturbance, goulart2009control}. Indeed, the dynamics of the augmented system resulting from a series of Riccati recursions are time-varying in general, even if the dynamics of the original system are not \cite{goel2023regret}. Last, we present numerical experiments to assess the efficacy of dynamic regret as a receding horizon design criterion.

\emph{Organization:} Section \ref{sec:problem_statement_preliminaries} formalizes the competitive problem of interest, reviews the system level approach to controller synthesis, and recalls useful properties of the clairvoyant optimal policy. Section \ref{sec:finite_horizon} studies regret minimization over a finite horizon, summarizing and extending the results of our previous work \cite{martin2022safe}. Section \ref{sec:infinite_horizon} establishes novel monotonicity properties of the clairvoyant optimal policy, and presents conditions that guarantee recursive feasibility and bounded regret of the proposed receding horizon control scheme. Section \ref{sec:numerical} discusses efficient computational methods and collects our numerical results. Finally, Section \ref{sec:conclusion} summarizes the contributions and outlines directions for future research.

\emph{Notation:}
We denote the sets of natural, integer and real numbers by $\mathbb{N}$, $\mathbb{Z}$ and $\mathbb{R}$, respectively. We use $\mathbb{R}_{\geq 0}$ to denote the non-negative reals and $\mathbb{I}_{[a,b]}$ to denote the set of integers in the interval $[a, b] \subseteq \mathbb{R}$. We use lower and upper case letters such as $x$ and $A$ to denote vectors and matrices, respectively, and lower and upper case boldface letters such as $\mathbf{x}$ and $\mathbf{A}$ to denote finite horizon signals and operators, respectively. We use calligraphic letters such as $\mathcal{X}$ to denote sets. We use $I$ to denote the identity matrix when dimensions are clear from the context. We denote by $\lambda_{\operatorname{max}}(A)$ and $\lambda_{\operatorname{min}}(A)$ the maximum and minimum eigenvalue of the matrix $A$, respectively. We write $A \succ 0$ or $A \succeq 0$ if the symmetric matrix $A$ is positive definite or positive semidefinite, respectively. Inequalities involving vectors are applied element-wise. We use $\otimes$ to denote the Kronecker product. We use $A^{[i,j]}$ to denote the $n \times n$ submatrix obtained extracting the $(i,j)$ block entry of a block matrix $A$. Similarly, we write $A^{[i_1:i_2,j]}$ to denote the $(i_2 - i_1 + 1)n \times n$ submatrix obtained stacking horizontally the $(i,j)$ block entries of a block matrix $A$, where the index $i$ ranges from $i_1$ to $i_2$, inclusive.

\section{Problem Statement and Preliminaries}
\label{sec:problem_statement_preliminaries}
We consider discrete-time linear time-invariant dynamical systems described by the state-space equations
\begin{equation}
    \label{eq:lti_system_dynamics}
    x_{t+1} = A x_t + B u_t + w_t\,,
\end{equation}
where $x_t \in \mathbb{R}^n$, $u_t \in \mathbb{R}^m$, and $w_t \in \mathbb{R}^n$ are the system state, the control input, and an exogenous signal, respectively. We do not make any assumptions about the statistical distribution of the disturbance process, and we let the realizations $w_t$ be drawn arbitrarily, potentially by an adversary, from a set $\mathcal{W}$. As standard in the robust MPC literature \cite{rawlings2017model}, we assume that $\mathcal{W}$ is convex, compact and contains the origin in its interior.

Our objective is to construct a stabilizing control policy with infinite horizon safety and performance guarantees. To do so, we repeatedly optimize the system behavior over a finite planning horizon of length $T \in \mathbb{N}$, and only apply the first $s \in \mathbb{I}_{[1, T]}$ control moves in a receding horizon fashion. For compactness, we write the evolution of the system state and input trajectories over $T$ as
\begin{equation*}
    \mathbf{x} = \mathbf{Z}\mathbf{A}\mathbf{x} + \mathbf{Z}\mathbf{B}\mathbf{u} + \bm{\delta}\,,
\end{equation*}
where $\mathbf{Z}$ denotes the block-downshift operator, that is, %
\begin{equation*}
    \mathbf{Z} = \begin{bmatrix}
        0_{n \times n} & \cdots & \cdots & \cdots & 0_{n \times n}\\
        I_n & \ddots & & & \vdots\\
        0_{n \times n} & \ddots & \ddots & & \vdots\\
        \vdots & \ddots & \ddots & \ddots & \vdots\\
        0_{n \times n} & \cdots & 0_{n \times n} & I_n & 0_{n \times n}
    \end{bmatrix}\,,
\end{equation*}
$\mathbf{A} = I_{T+1} \otimes A$, $\mathbf{B} = \operatorname{col}(I_{T} \otimes B, 0)$, $\bm{\delta} = \operatorname{col}(x_0, \mathbf{w})$ and
\begin{alignat*}{3}
    \mathbf{x} &= \operatorname{col}(x_0, x_1, \dots, x_T) =
    \begin{bmatrix}
        x_0^\top & x_1^\top & \cdots & x_{T}^\top \end{bmatrix}^\top\,,\\
    \mathbf{u} &= \operatorname{col}(u_0, u_1, \dots, u_{T-1})\,, ~ \mathbf{w} = \operatorname{col}(w_0, w_1, \dots, w_{T-1})\,.
\end{alignat*}

To regulate the behavior of system \eqref{eq:lti_system_dynamics}, we consider time-varying dynamic linear feedback control policies of the form
\begin{equation}
    \label{eq:control_policy}
    u_t = \pi_t(x_{0:t}) = K_{t,t} x_t + \cdots + K_{t,0} x_0\,, ~ \forall t \in \mathbb{I}_{[0, T-1]}\,;
\end{equation}
equivalently, $\mathbf{u} = \mathbf{K} \mathbf{x}$, where $\mathbf{K}$ exhibits the following lower block-triangular structure due to causality:
\begin{equation*}
    \mathbf{K} =
    \begin{bmatrix}
    	K_{0,0} & 0_{m \times n} & \dots & 0_{m \times n}\\
    	K_{1,0} & K_{1,1} & \ddots & \vdots\\
    	\vdots & \vdots & \ddots & 0_{m \times n}\\
    	K_{T-1, 0} & K_{T-1, 1}& \cdots & K_{T-1, T-1}
    \end{bmatrix}\,.
\end{equation*}
We note that this choice ensures that minimum dynamic regret over $T$ is attained in unconstrained scenarios \cite{goel2023regret, sabag2021regret}, and, as we will discuss in Section~\ref{sec:numerical}, it also allows for policy optimization via convex programming techniques through a suitable re-parametrization \cite{goulart2006optimization, wang2019system}. We then restrict our attention to control laws that are admissible, namely that guarantee compliance with the following safety constraints:
\begin{align}
    \label{eq:safety_constraints}
    (x_t, u_t) \in \mathcal{Z}\,, ~
    x_T \in \mathcal{Z}_f\,, ~ \forall w_t \in \mathcal{W}\,, ~ \forall t \in \mathbb{I}_{[0, T-1]}\,,
\end{align}
where $\mathcal{Z} \subseteq \mathbb{R}^n \times \mathbb{R}^m$ and $\mathcal{Z}_f \subseteq \mathbb{R}^n$ are closed convex sets that contain the origin in their interior. We denote the set of all admissible policies $\bm{\pi}_T = (\pi_0, \dots, \pi_{T-1})$ of the form \eqref{eq:control_policy} as $\bm{\Pi}_T(x_0)$. Similarly, we denote the set of initial conditions for which an admissible linear control policy exists as $\mathcal{X}_T = \{x \in \mathbb{R}^n : \bm{\Pi}_T(x) \neq \emptyset\}$.

For any possible realization of the disturbance sequence $\mathbf{w}$, we measure the control cost that an admissible policy $\bm{\pi}_T \in \bm{\Pi}_T(x_0)$ incurs along the planning horizon by
\begin{equation}
    \label{eq:lqr_cost}
    J_T(\bm{\pi}_T, \bm{\delta}) = \sum_{k = 0}^{T-1} x_{k}^\top Q x_{k} + u_k^\top R u_k\,,%
\end{equation}
where $Q \succeq 0$ and $R \succ 0$ are the state and control stage cost matrices, respectively.  Since \eqref{eq:lqr_cost} depends on yet unknown disturbance realizations, computing the cost-minimizing sequence of control actions given causal information becomes an ill-posed problem. To break this deadlock, classical $\mathcal{H}_2$ and $\mathcal{H}_\infty$ paradigms assume that $\mathbf{w}$ follows a known probability distribution and that $\mathbf{w}$ is selected adversarially, respectively \cite{hassibi1999indefinite}. Then, $\mathcal{H}_2$ methods optimize the average cost $\mathbb{E}[J_T(\bm{\pi}_T, \bm{\delta}) + V_f(x_T)]$, where $V_f(x_T) = x_T^\top P x_T$ denotes a terminal cost weighted by $P \succeq 0$, whereas $\mathcal{H}_\infty$ methods minimize the performance level $\gamma \in \mathbb{R}_{\geq 0}$ for which the bound
\begin{equation}
    \label{eq:h_infinity_bound}
    J_T(\bm{\pi}_T, \bm{\delta}) + V_f(x_T) \leq \gamma^2 \norm{\mathbf{w}}_2^2 + \beta(x_0, \gamma)\,,
\end{equation}
with $\beta(x_0, \gamma) \in \mathbb{R}_{\geq 0}$, holds for all $\mathbf{w}$ with bounded energy. Instead, at each planning stage, we consider the objective of safely minimizing the worst-case loss relative to the clairvoyant optimal policy $\bm{\pi}^c_T$. The latter is defined by
\begin{equation}
    \label{eq:clairvoyant_argmin_definition}
    \bm{\pi}^c_T = \operatorname{argmin}_{\bm{\pi} \in \bm{\Pi}_{T}^c} ~ J_T(\bm{\pi}, \bm{\delta}) = \operatorname{argmin}_{\mathbf{u}(\bm{\delta})} ~ J_T(\mathbf{u}, \bm{\delta})\,,
\end{equation}
where $\bm{\Pi}_{T}^c$ denotes the set of all possibly nonlinear and clairvoyant (noncausal) control laws that could be designed if we had complete foreknowledge of the disturbance realizations. As such, policies in $\bm{\Pi}_{T}^c$ yield an ideal, yet unattainable, closed-loop performance that we strive to reproduce by designing a causal policy.

Specifically, given a performance level $\gamma \in \mathbb{R}_{\geq 0}$, let $\mathtt{Reg}_T(\bm{\pi}_T, \gamma, \bm{\delta})$ be defined according to
\begin{equation}
    \label{eq:regret_definition}
    \mathtt{Reg}_T(\cdot) = J_T(\bm{\pi}_T, \bm{\delta}) - J_T(\bm{\pi}^c_T, \bm{\delta}) + V_f(x_T) - \gamma^2 \norm{\mathbf{w}}_2^2\,,
\end{equation}
Then, we aim at synthesizing an admissible control policy $\bm{\pi}_T \in \bm{\Pi}_T(x_0)$ and a non-negative real scalar $\beta(x_0, \gamma)$ such that the dynamic regret bound
\begin{equation}
    \label{eq:dynamic_regret_bound}
    \mathtt{Reg}_T(\bm{\pi}_T, \gamma, \bm{\delta}) \leq \beta(x_0, \gamma)\,,
\end{equation}
holds for all disturbance sequences $\mathbf{w} \in \mathbfcal{W} = \mathcal{W}^{T} = \mathcal{W} \times \cdots \times \mathcal{W}$. Note that, to synthesize a control policy whose dynamic regret has optimal dependence on the energy of the disturbance,\footnote{Dynamic regret bounds are often expressed in terms of some ``regularity'' of the perturbation sequence since dynamic regret scales linearly with time in the worst-case \cite{goel2022power, zhao2020dynamic}.} we aim at fulfilling the regret bound in \eqref{eq:dynamic_regret_bound} with performance level $\gamma \in \mathbb{R}_{\geq 0}$ as small as possible. Finally, we point out that the role of the terminal penalty $V_f(x_T)$ will be to compute a quadratic upper bound to the regret-to-go, and to design, jointly with the terminal constraint set $\mathcal{Z}_f$, a receding horizon control law with guaranteed recursive feasibility and stability properties.

\subsection{System Level Synthesis}
\label{subsec:system_level_synthesis}
We now briefly outline the necessary background on the SLS approach to optimal controller synthesis, and refer to \cite{wang2019system} for a complete discussion. Akin to the Youla parametrization \cite{youla1976modern} and disturbance-feedback controllers \cite{goulart2006optimization}, the SLS approach shifts the synthesis problem from directly designing the controller to shaping the closed-loop maps from the exogenous disturbance to the state and input signals \cite{zheng2020equivalence, furieri2019input, furieri2023near}.

Along the planning horizon $T$, the behavior of the closed-loop system under the feedback interconnection $\mathbf{u} = \mathbf{K} \mathbf{x}$ in \eqref{eq:control_policy} can be described through the relations
    \begin{align}
        \mathbf{x} &= \left(\mathbf{I} - \mathbf{Z}\left(\mathbf{A} +  \mathbf{B}\mathbf{K}\right)\right)^{-1} \bm{\delta} := \bm{\Phi}_x \bm{\delta} = \bm{\Phi}_x^0 x_0 + \bm{\Phi}_x^w \mathbf{w}\,, \label{eq:closed_loop_behavior_x_K}\\
        \mathbf{u} &= \mathbf{K}\bm{\Phi}_x \bm{\delta} := \bm{\Phi}_u \bm{\delta} = \bm{\Phi}_u^0 x_0 + \bm{\Phi}_u^w \mathbf{w}\,, \label{eq:closed_loop_behavior_u_K}
    \end{align}
where $\bm{\Phi}_x = \operatorname{col}(\bm{\Phi}_x^0, \bm{\Phi}_x^w)$ and $\bm{\Phi}_u = \operatorname{col}(\bm{\Phi}_u^0, \bm{\Phi}_u^w)$. By inspection, the expressions above are non-convex in $\mathbf{K}$, yet they are linear in the closed-loop responses $\{\bm{\Phi}_x, \bm{\Phi}_u\}$ that the controller $\mathbf{K}$ induces. Note also that these operators inherit a lower block-triangular causal structure from their definition. The SLS framework then shows that there exists a feedback controller $\mathbf{K}$ such that $\mathbf{x} = \bm{\Phi}_x \bm{\delta}$ and $\mathbf{u} = \bm{\Phi}_u \bm{\delta}$ if and only if the closed-loop maps lie in the affine subspace defined by
\begin{align}
    \label{eq:sls_affine_subspace_constraints}
    (\mathbf{I} - \mathbf{Z} \mathbf{A}) \bm{\Phi}_x - \mathbf{Z} \mathbf{B} \bm{\Phi}_u = \mathbf{I}\,;
\end{align}
pairs of closed-loop responses $\{\bm{\Phi}_x, \bm{\Phi}_u\}$ that satisfy \eqref{eq:sls_affine_subspace_constraints} are said to be achievable \cite{wang2019system}.

Based on \eqref{eq:sls_affine_subspace_constraints}, many optimal control problems of practical interest, including classical $\mathcal{H}_2$ and $\mathcal{H}_\infty$ control synthesis \cite{hassibi1999indefinite}, can be equivalently posed as an optimization over the convex set of system responses.
Thanks to convexity, the optimal closed-loop responses $\{\bm{\Phi}^\star_x, \bm{\Phi}^\star_u\}$ can be computed efficiently, and the corresponding optimal control policy in the form of \eqref{eq:control_policy} can then be recovered by $\mathbf{K}^\star = \bm{\Phi}^\star_u{\bm{\Phi}^\star_x}^{-1}$. In particular, note that the structure that causality imposes on the closed-loop responses ensures that $\bm{\Phi}_x$ is always invertible.

In light of this correspondence, throughout the paper we will interchangeably identify a control policy $\bm{\pi}_T$ through the associated feedback gain matrix  $\mathbf{K}$ or through the closed-loop responses $\{\bm{\Phi}_x, \bm{\Phi}_u\}$ that it induces.

\subsection{The Clairvoyant Optimal Policy}
In the full-information setting we consider, there exists a unique noncausal control law that outperforms any other controller for every disturbance realization – the clairvoyant optimal policy $\bm{\pi}^c_T$. We formalize this statement in the following lemma, which we prove in the appendix by adapting the derivations of \cite{hassibi1999indefinite, martin2022safe}. For compactness, we define $\mathbf{Q} = \operatorname{blkdiag}(I_T \otimes Q, 0)$ and $\mathbf{R} = I_{T} \otimes R$.
\begin{lemma}
    For every disturbance realization $\bm{\delta}$, the clairvoyant optimal policy in \eqref{eq:clairvoyant_argmin_definition} is given by:
    \label{le:clairvoyant_optimal_policy}
    \begin{equation}
        \label{eq:clairvoyant_controller_definition}
        \bm{\pi}^c_T(\bm{\delta}) = -(\mathbf{R} + \mathbf{F}^\top\mathbf{Q} \mathbf{F})^{-1}\mathbf{F}^\top\mathbf{Q}\mathbf{G}\bm{\delta}\,,
    \end{equation}
    where $\mathbf{F} = (\mathbf{I} - \mathbf{Z} \mathbf{A})^{-1} \mathbf{Z} \mathbf{B}$ and $\mathbf{G} = (\mathbf{I} - \mathbf{Z} \mathbf{A})^{-1}$. Moreover, the cost that the clairvoyant optimal policy incurs is given by:
    \begin{equation}
        \label{eq:clairvoyant_controller_cost_incurred}
        J_T(\bm{\pi}^c_T, \bm{\delta}) = \bm{\delta}^\top \underbrace{\mathbf{G}^\top \mathbf{Q}(\mathbf{I} + \mathbf{F}\mathbf{R}^{-1}\mathbf{F}^\top\mathbf{Q})^{-1} \mathbf{G}}_{= \mathbf{C}_T} \bm{\delta}\,.%
    \end{equation}
\end{lemma}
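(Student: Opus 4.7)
The plan is to recast the clairvoyant problem as an unconstrained quadratic program in the signal $\mathbf{u}$, solve it in closed form via first-order conditions, and then simplify the incurred cost using a matrix inversion identity.

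First I would eliminate the state from the cost. Writing the dynamics as $\mathbf{x} = \mathbf{Z}\mathbf{A}\mathbf{x} + \mathbf{Z}\mathbf{B}\mathbf{u} + \bm{\delta}$ and noting that $\mathbf{I} - \mathbf{Z}\mathbf{A}$ is block lower-triangular with identity diagonal blocks (hence invertible), I can solve explicitly to get $\mathbf{x} = \mathbf{F}\mathbf{u} + \mathbf{G}\bm{\delta}$. Substituting into \eqref{eq:lqr_cost}, the cost takes the quadratic form
\begin{equation*}
    J_T(\mathbf{u}, \bm{\delta}) = \mathbf{u}^\top (\mathbf{R} + \mathbf{F}^\top \mathbf{Q} \mathbf{F}) \mathbf{u} + 2 \mathbf{u}^\top \mathbf{F}^\top \mathbf{Q} \mathbf{G} \bm{\delta} + \bm{\delta}^\top \mathbf{G}^\top \mathbf{Q} \mathbf{G} \bm{\delta}\,.
\end{equation*}
Since $\mathbf{R} \succ 0$ and $\mathbf{F}^\top \mathbf{Q} \mathbf{F} \succeq 0$, the Hessian is positive definite, so the minimizer is unique and characterized by the first-order condition
\begin{equation*}
    (\mathbf{R} + \mathbf{F}^\top \mathbf{Q} \mathbf{F}) \mathbf{u} = -\mathbf{F}^\top \mathbf{Q} \mathbf{G} \bm{\delta}\,,
\end{equation*}
yielding \eqref{eq:clairvoyant_controller_definition}. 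Because the clairvoyant problem places no causality or structural restrictions on $\mathbf{u}(\bm{\delta})$, this unconstrained minimizer coincides with $\bm{\pi}^c_T$.

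For the cost expression, I would use the first-order condition to avoid expanding the full quadratic. Left-multiplying it by $\mathbf{u}^\top$ shows that $\mathbf{u}^\top \mathbf{R} \mathbf{u} + \mathbf{u}^\top \mathbf{F}^\top \mathbf{Q} \mathbf{F} \mathbf{u} = -\mathbf{u}^\top \mathbf{F}^\top \mathbf{Q} \mathbf{G} \bm{\delta}$ at $\mathbf{u} = \mathbf{u}^c$. Plugging this into the quadratic expansion above collapses two of the cross terms and gives
\begin{equation*}
    J_T(\bm{\pi}^c_T, \bm{\delta}) = \bm{\delta}^\top \mathbf{G}^\top \mathbf{Q} \bigl(\mathbf{I} - \mathbf{F}(\mathbf{R} + \mathbf{F}^\top \mathbf{Q} \mathbf{F})^{-1} \mathbf{F}^\top \mathbf{Q}\bigr) \mathbf{G} \bm{\delta}\,.
\end{equation*}

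The final step, and arguably the only non-routine algebra, is to recognize the push-through (Woodbury) identity
\begin{equation*}
    \mathbf{I} - \mathbf{F}(\mathbf{R} + \mathbf{F}^\top \mathbf{Q} \mathbf{F})^{-1} \mathbf{F}^\top \mathbf{Q} = (\mathbf{I} + \mathbf{F} \mathbf{R}^{-1} \mathbf{F}^\top \mathbf{Q})^{-1}\,,
\end{equation*}
which I would verify by right-multiplying both sides by $\mathbf{I} + \mathbf{F} \mathbf{R}^{-1} \mathbf{F}^\top \mathbf{Q}$ and simplifying via $\mathbf{R}^{-1}(\mathbf{R} + \mathbf{F}^\top \mathbf{Q} \mathbf{F})(\mathbf{R} + \mathbf{F}^\top \mathbf{Q} \mathbf{F})^{-1} = \mathbf{R}^{-1}$. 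Substituting yields \eqref{eq:clairvoyant_controller_cost_incurred}. The main subtlety is justifying that the minimization over all noncausal $\mathbf{u}(\bm{\delta})$ indeed reduces to pointwise minimization over $\mathbf{u} \in \mathbb{R}^{mT}$ for each fixed $\bm{\delta}$, which follows because $\bm{\Pi}_T^c$ places no restrictions on how $\mathbf{u}$ depends on $\bm{\delta}$ and the cost decouples across disturbance realizations.
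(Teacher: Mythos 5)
Your proposal is correct and follows essentially the same route as the paper: eliminate the state via $\mathbf{x} = \mathbf{F}\mathbf{u} + \mathbf{G}\bm{\delta}$, minimize the resulting unconstrained quadratic in $\mathbf{u}$, and simplify the optimal cost with the same Woodbury-type identity (the paper phrases it as $\mathbf{Q}\mathbf{F}\mathbf{P}^{-1}\mathbf{F}^\top\mathbf{Q} + \mathbf{Q}(\mathbf{I}+\mathbf{F}\mathbf{R}^{-1}\mathbf{F}^\top\mathbf{Q})^{-1} = \mathbf{Q}$, which is just your push-through identity left-multiplied by $\mathbf{Q}$). The only cosmetic difference is that you use first-order conditions and back-substitution where the paper completes the square; both are valid and yield identical expressions.
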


Interestingly, the globally optimal clairvoyant control actions can be computed as a linear combination of past, present, and future disturbances \cite{hassibi1999indefinite}. As we have shown in \cite[Lem. 2]{martin2022safe}, the optimal control sequence in hindsight \eqref{eq:clairvoyant_controller_definition} can be equivalently computed by solving a quadratic SLS problem, i.e., an $\mathcal{H}_2$ problem, that imposes no causal constraints on the structure of closed-loop responses $\{\bm{\Phi}^c_x, \bm{\Phi}^c_u\}$ associated with the clairvoyant policy $\bm{\pi}^c_T$, i.e.,
\begin{align}
    \hspace{-.25mm}
    \{\bm{\Phi}^c_x, \bm{\Phi}^c_u\} = &~\argmin_{\bm{\Phi}_x, \bm{\Phi}_u} ~ \norm{
    \begin{bmatrix}
        \mathbf{Q}^{\frac{1}{2}} & 0\\
        0 & \mathbf{R}^{\frac{1}{2}}
    \end{bmatrix}
    \begin{bmatrix}
        \bm{\Phi}_x\\\bm{\Phi}_u
    \end{bmatrix}
    }_{F}^2 \label{eq:clairvoyant_policy_sls}\\
    &\st~
    \begin{bmatrix}
        \mathbf{I} - \mathbf{Z} \mathbf{A} & - \mathbf{Z} \mathbf{B}
    \end{bmatrix}
    \begin{bmatrix}
        \bm{\Phi}_x\\\bm{\Phi}_u
    \end{bmatrix} = \mathbf{I} \nonumber\,.
\end{align}
Note that, besides allowing for efficient computation of optimal controllers, \eqref{eq:clairvoyant_policy_sls} can be used to include additional convex requirements, such as safety structural constraints, in the definition of the clairvoyant policy \cite{martin2022safe}.

\section{Minimal Regret in finite horizon Control}
\label{sec:finite_horizon}
Towards establishing guarantees for regret-based receding horizon control, we first enable a tractable formulation of the optimal control problem that we will solve online. %
Building on our previous work \cite{martin2022safe}, we exploit the system level parametrization of achievable closed-loop responses to characterize a safe regret-optimal policy at performance level $\gamma$ as the solution of a convex optimization problem.
To do so, we consider the following objective function for $(\bm{\Phi}_x, \bm{\Phi}_u)$ satisfying \eqref{eq:safety_constraints} and \eqref{eq:sls_affine_subspace_constraints}:
\begin{equation}
    \label{eq:objective_maximization}
    \mathtt{Reg}_T^\star(\bm{\Phi}_x, \bm{\Phi}_u, \gamma, x_0) =
    \max_{\mathbf{w} \in \mathbfcal{W}} ~ \mathtt{Reg}_T(\bm{\Phi}_x, \bm{\Phi}_u, \gamma, \bm{\delta})\,,
\end{equation}
where,  with slight abuse of notation, we rewrite the regret $\mathtt{Reg}_T(\bm{\Phi}_x, \bm{\Phi}_u, \gamma, \bm{\delta})$ defined in \eqref{eq:regret_definition} as a function of the closed-loop system responses as
\begin{align}
    \mathtt{Reg}_T(\cdot) %
    = \norm{\bar{\mathbf{S}}^{\frac{1}{2}} \begin{bmatrix}
        \bm{\Phi}_x \\ \bm{\Phi}_u
    \end{bmatrix} \bm{\delta}}_2^2  \hspace{-.15cm} - \hspace{-.01cm}
    \norm{\mathbf{S}^{\frac{1}{2}} \begin{bmatrix}
        \bm{\Phi}_x^c \\ \bm{\Phi}_u^c
    \end{bmatrix} \bm{\delta}}_2^2 \hspace{-.15cm} -\gamma^2\norm{\mathbf{w}}_2^2\,,\label{eq:objective_negatively_weighted}
\end{align}
with $\mathbf{S} = \operatorname{blkdiag}(\mathbf{Q}, \mathbf{R})$, $\bar{\mathbf{Q}} = \operatorname{blkdiag}(I_T \otimes Q, P)$, and $\bar{\mathbf{S}} = \operatorname{blkdiag}(\bar{\mathbf{Q}}, \mathbf{R})$. In particular, note that \eqref{eq:objective_negatively_weighted} follows by combining \eqref{eq:closed_loop_behavior_x_K} and \eqref{eq:closed_loop_behavior_u_K} with \eqref{eq:lqr_cost}.
Further, observe that \eqref{eq:objective_negatively_weighted} is negatively weighted in the disturbance energy, as common in the classical literature on finite horizon $\mathcal{H}_\infty$ control \cite{james1995robust, green2012linear}. Crucially, however, the regret-optimal framework also introduces system-dependent optimal performance weights by subtracting the cost incurred by the clairvoyant optimal policy $\bm{\pi}^c_T$. By maximizing over $\mathbf{w}$ in \eqref{eq:objective_maximization}, we then penalize the worst-case loss relative to $\bm{\pi}^c_T$. Intuitively, this metric forces our controller to closely track the performance of $\bm{\pi}^c_T$ whenever a low cost can be attained, while bearing a higher cost whenever the clairvoyant optimal policy $\bm{\pi}^c_T$ does so. Loosely speaking, minimizing regret induces adaptivity in the controllers by letting them know what disturbances are worth spending more control effort on.

We now establish desirable properties of the objective function in \eqref{eq:objective_maximization}. Then, we provide conditions that make the computation of a minimax control policy amenable to standard techniques in convex optimization. For ease of exposition, we defer the discussion on the numerical implementation to Section~\ref{sec:numerical}. We first show that the worst-case dynamic regret objective $\mathtt{Reg}_T^\star(\bm{\Phi}_x, \bm{\Phi}_u, \gamma, x_0)$ is convex in the closed-loop system responses; the proof is detailed in Appendix~\ref{app:proof_convexity_lower_bound}.
\begin{proposition}
    \label{prop:convexity_lower_bound}
    For any fixed performance level $\gamma \in \mathbb{R}_{\geq 0}$, the function $(\bm{\Phi}_x, \bm{\Phi}_u) \mapsto \mathtt{Reg}_T^\star(\bm{\Phi}_x, \bm{\Phi}_u, \gamma, x_0)$ is convex, lower semicontinuous, proper, and bounded below by zero.
\end{proposition}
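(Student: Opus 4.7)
The plan is to verify each of the four properties separately by exploiting the structural observation that, for each fixed disturbance realization, the function being maximized is a convex quadratic in the closed-loop responses, plus a term independent of $(\bm{\Phi}_x, \bm{\Phi}_u)$.

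First I would fix $\mathbf{w} \in \mathbfcal{W}$ and inspect the inner expression \eqref{eq:objective_negatively_weighted}. The map
\begin{equation*}
    (\bm{\Phi}_x, \bm{\Phi}_u) \mapsto \bar{\mathbf{S}}^{\frac{1}{2}}\begin{bmatrix}\bm{\Phi}_x\\\bm{\Phi}_u\end{bmatrix}\bm{\delta}
\end{equation*}
is affine, and composing it with the (convex, continuous) squared Euclidean norm yields a convex, continuous function of $(\bm{\Phi}_x, \bm{\Phi}_u)$. The two remaining terms in \eqref{eq:objective_negatively_weighted}, namely the clairvoyant cost and $-\gamma^2 \norm{\mathbf{w}}_2^2$, do not depend on $(\bm{\Phi}_x, \bm{\Phi}_u)$ and therefore only shift the function by a constant. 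Taking the pointwise supremum over $\mathbf{w} \in \mathbfcal{W}$ preserves convexity and lower semicontinuity (in fact it preserves continuity, because $\mathbfcal{W}$ is compact and the integrand is jointly continuous), which establishes the first two claims at once.

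Next, for properness, I would use compactness of $\mathbfcal{W}$ together with joint continuity of the integrand in $(\bm{\Phi}_x, \bm{\Phi}_u, \mathbf{w})$ to conclude that the maximum in \eqref{eq:objective_maximization} is attained and finite for every $(\bm{\Phi}_x, \bm{\Phi}_u)$. Hence $\mathtt{Reg}_T^\star$ never takes the value $+\infty$, it never equals $-\infty$ (being a supremum of real-valued functions), and it is therefore proper.

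Finally, to show the nonnegative lower bound, I would evaluate the inner objective at the particular choice $\mathbf{w} = 0$, which belongs to $\mathbfcal{W}$ because the origin lies in the interior of $\mathcal{W}$ by assumption. At this point $\norm{\mathbf{w}}_2^2 = 0$, so
\begin{equation*}
    \mathtt{Reg}_T(\bm{\Phi}_x, \bm{\Phi}_u, \gamma, \bm{\delta}) = J_T(\bm{\pi}_T, \bm{\delta}) - J_T(\bm{\pi}^c_T, \bm{\delta}) + V_f(x_T)\,.
\end{equation*}
The first difference is nonnegative by the defining optimality of the clairvoyant policy in \eqref{eq:clairvoyant_argmin_definition} over the broader class $\bm{\Pi}_T^c$, while $V_f(x_T) \geq 0$ follows from $P \succeq 0$. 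Hence the maximum over $\mathbf{w}$ dominates this evaluation and is itself nonnegative. The main subtlety, and the step I would be most careful about, is this last argument: one must ensure the feasibility of $\mathbf{w}=0$ inside $\mathbfcal{W}$ and invoke clairvoyant optimality in the correct direction; everything else amounts to standard preservation properties of suprema of convex continuous functions.
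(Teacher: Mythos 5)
Your proposal is correct and follows essentially the same route as the paper's proof: convexity and lower semicontinuity via the pointwise supremum of convex continuous functions of the affine closed-loop responses, properness from compactness of $\mathbfcal{W}$, and the nonnegative lower bound by evaluating at $\mathbf{w}=0 \in \mathbfcal{W}$ together with clairvoyant optimality and $P \succeq 0$. No gaps to report.
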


An admissible control policy that minimizes \eqref{eq:objective_maximization} can be efficiently computed if the function $\mathbf{w} \mapsto \mathtt{Reg}_T(\bm{\Phi}_x, \bm{\Phi}_u, \gamma, \bm{\delta})$ is concave. In this case, the optimization problem in \eqref{eq:objective_maximization} simply calls for the maximization of a concave function over the convex set $\mathbfcal{W}$, and a minimax control policy can be expressed as the solution of a tractable convex-concave optimization problem. Let $\bm{\Phi}_0 = \operatorname{col}(\bm{\Phi}^0_x, \bm{\Phi}^0_u)$ and $\bm{\Phi}_w = \operatorname{col}(\bm{\Phi}^w_x, \bm{\Phi}^w_u)$.
With this notation in place, we equivalently rewrite \eqref{eq:objective_negatively_weighted} as
\begin{align*}
    \mathtt{Reg}_T(\bm{\Phi}_x, \bm{\Phi}_u, \gamma, \bm{\delta}) &= x_0^\top
    \left(
    \bm{\Phi}_0^\top
    \bar{\mathbf{S}}
    \bm{\Phi}_0 -
    \mathbf{C}_{T}^{[0, 0]}
    \right)
    x_0 \\
    & + 2 x_0^\top
    \left(
    \bm{\Phi}_0^\top
    \bar{\mathbf{S}}
    \bm{\Phi}_w -
    \mathbf{C}_{T}^{[0, 1:T]}
    \right)
    \mathbf{w} \\
    & + \mathbf{w}^\top
    \left(
    \bm{\Phi}_w^\top
    \bar{\mathbf{S}}
    \bm{\Phi}_w -
    \mathbf{C}_{T}^{[1:T, 1:T]}
    - \gamma^2 \mathbf{I}
    \right)
    \mathbf{w}\,,
\end{align*}
where $\mathbf{C}_T$ is as in \eqref{eq:clairvoyant_controller_cost_incurred}.
Therefore, for a given performance level $\gamma \in \mathbb{R}_{\geq 0}$, the desired concavity condition is met if there exists a pair of achievable closed-loop responses $\bm{\Phi}_x$ and $\bm{\Phi}_u$ that satisfy the quadratic matrix inequality
\begin{equation}
    \label{eq:regret_objective_qmi}
    \mathbf{R}_T^\gamma(\bm{\pi}_T) =
    \gamma^2 \mathbf{I}
    +
    \mathbf{C}_{T}^{[1:T, 1:T]}
    -
    \bm{\Phi}_w^\top
    \bar{\mathbf{S}}
    \bm{\Phi}_w
    \succeq 0\,.
\end{equation}
Equation \eqref{eq:regret_objective_qmi} sets a lower bound on the best achievable performance level $\gamma$ that we can attain while maintaining tractability. Note that \eqref{eq:regret_objective_qmi} is necessary and sufficient for concavity if the perturbation $\mathbf{w}$ is an arbitrary disturbance sequence with bounded energy, compare also \cite[Th. 3]{martin2022safe}. Instead, even though \eqref{eq:regret_objective_qmi} is not necessary for particular choices of  $\mathbfcal{W}$ as those considered in \cite{didier2022system}, in these cases only upper and lower bounds on the minimum performance level are currently available.
Finally, we also note that \eqref{eq:regret_objective_qmi} guarantees the existence of a saddle point solution for the considered minimax problem \cite[Th. 2.3]{bacsar2008h}.

Applying the Schur complement to \eqref{eq:regret_objective_qmi} leads to:
\begin{equation}
    \label{eq:regret_lmi_maximization_concave}
    \begin{bmatrix}
        \gamma^2 \mathbf{I} + \mathbf{C}_{T}^{[1:T, 1:T]} & \bm{\Phi}_w^\top
        \bar{\mathbf{S}}^\frac{1}{2}\\
        \bar{\mathbf{S}}^\frac{1}{2}
        \bm{\Phi}_w
        & \mathbf{I}
    \end{bmatrix} \succeq 0\,.
\end{equation}
For a given initial condition $x_0 \in \mathbb{R}^n$ and planning horizon $T \in \mathbb{N}$, we denote the set of all admissible control policies that satisfy the linear matrix inequality \eqref{eq:regret_lmi_maximization_concave} with performance level $\gamma$ by $\bm{\Pi}_T^\gamma(x_0) \subseteq \bm{\Pi}_T(x_0)$, and the set of initial conditions for which one such policy exists by
\begin{equation}
    \label{eq:X_T_gamma_definition}
    \mathcal{X}_T^\gamma = \{x \in \mathbb{R}^n : \bm{\Pi}_T^\gamma(x) \neq \emptyset\} \subseteq \mathcal{X}_T\,.
\end{equation}
In particular, we remark that, while $x_0$ does not affect \eqref{eq:regret_lmi_maximization_concave} directly, the value of the initial state plays a role in the definition of the set $\bm{\Pi}_T^\gamma(x_0)$ due to the presence of the mixed state and input constraints \eqref{eq:safety_constraints}. Further, as at every state we are interested in satisfying \eqref{eq:dynamic_regret_bound} with as small $\gamma$ as possible, we define the minimum gain function $\gamma_T^\star : \mathcal{X}_T \to \mathbb{R}_{\geq 0}$ as
\begin{equation}
\label{eq:min_gain_gamma}
  \gamma^\star_T(x) = \min_{\gamma} \{\gamma : \bm{\Pi}_T^\gamma(x) \neq \emptyset\}\,.
\end{equation}
Then, for any $x_0 \in \mathcal{X}_T$ and any $\gamma \geq \gamma^\star_T(x_0)$, we consider the following tractable optimization problem:\footnote{Reformulations that enable efficient implementation via convex optimization techniques are discussed in Section~\ref{sec:numerical}.}
\begin{alignat}{3}
    V_T^\star(x_0, \gamma) = & && \hspace{-1.5cm} \min_{\bm{\Phi}_x,\bm{\Phi}_u} ~ \max_{\mathbf{w} \in \mathbfcal{W}} ~ \mathtt{Reg}_T(\bm{\Phi}_x, \bm{\Phi}_u, \gamma, \bm{\delta}) \label{eq:finite_horizon_value_function}\\
        & \st && ~ \eqref{eq:sls_affine_subspace_constraints}\,, \eqref{eq:regret_objective_qmi}\,, \nonumber\\ %
        & && (\mathbf{x}, \mathbf{u}) \in \mathbfcal{Z}\,, ~ \forall \mathbf{w} \in \mathbfcal{W}\,, \nonumber\\
        & && \bm{\Phi}_x, \bm{\Phi}_u \text{ with causal sparsities} \nonumber \,,%
\end{alignat}
where $\mathbfcal{Z}$ compactly denote the set of admissible state and input signals that satisfy \eqref{eq:safety_constraints} along the entire planning horizon. By design, the minimax control policy computed as a solution to \eqref{eq:finite_horizon_value_function} complies with the safety constraints \eqref{eq:safety_constraints} at all times, and drives the system to the terminal set $\mathcal{Z}_f$ in face of the uncertain disturbance realizations. Further, the dynamic regret bound \eqref{eq:dynamic_regret_bound} holds with $\beta(x_0, \gamma)$ equal to the value function $V_T^\star(x_0, \gamma)$.

\section{Receding Horizon Regret-Optimal Control}
\label{sec:infinite_horizon}
In this section, we turn our attention to the infinite horizon control problem and present our main contributions. We show that, if the terminal ingredients $\mathcal{Z}_f$ and $V_f(x_T)$ are appropriately chosen, the receding horizon control law $\bm{\mu}_T^s$ that implements only the first $s \in \mathbb{I}_{[1, T]}$ control actions of a minimax policy in \eqref{eq:finite_horizon_value_function} is recursively feasible as per Theorem~\ref{th:recursive_feasibility_terminal_rpi} and $\ell_2$-stable. In particular, our method enjoys finite regret with respect to the infinite horizon clairvoyant optimal policy $\bm{\pi}^c_\infty$ as per Theorem~\ref{th:rhc_finite_gain_infinite_horizon}. In other words, if \eqref{eq:finite_horizon_value_function} is feasible for $(x_0, \gamma)$, then the closed-loop system is guaranteed to satisfy the constraints at all times, and there exists $\overline{\gamma} \in \mathbb{R}_{\geq 0}$ and $\overline{\beta} \in \mathbb{R}_{\geq 0}$ such that the bound
\begin{align}
    \label{eq:infinite_horizon_dynamic_regret_bound}
    \hspace{-.2cm}
    J_\infty(\bm{\mu}_T^s, \mathbf{w}_{\infty})
    -
    J_\infty(\bm{\pi}_{\infty}^c, \mathbf{w}_{\infty})
    -
    \overline{\gamma}^2 \norm{\mathbf{w}_{\infty}}_2^2 \leq \overline{\beta}\,,
\end{align}
holds for any infinite disturbance sequence $\mathbf{w}_{\infty}$ with realizations $w_t \in \mathcal{W}$. We remark that, unlike in the $\mathcal{H}_\infty$ control case \eqref{eq:h_infinity_bound},
the disturbance sequence $\mathbf{w}_\infty$ directly affects the performance evaluation through the term $J_\infty(\bm{\pi}_{\infty}^c, \mathbf{w}_{\infty})$ in \eqref{eq:infinite_horizon_dynamic_regret_bound} – thus teaching the policy how much effort to spend in trying to counteract each individual disturbance sequence $\mathbf{w}_\infty$.

To streamline the presentation of our results, we consider the standard implementation of MPC that corresponds to choosing $s = 1$, see also \cite{rawlings2017model, mayne2000constrained}; analogous results hold for a general $s \in \mathbb{I}_T$. Hence, we formally define our receding horizon control law $\bm{\mu}_T^1 = \mu_T : \mathcal{X}_T \times \mathbb{R}_{\geq 0} \to \mathbb{R}^m$ as:
\begin{equation}
    \label{eq:receding_horizon_control_law}
    \mu_T(x_0, \gamma) = u_0^\star(x_0, \gamma) = {\bm{\Phi}_u^{\star}}^{[0,0]} x_0\,,
\end{equation}
so that the closed-loop system behavior can be described by
\begin{equation}
    \label{eq:closed_loop_system_receding_horizon_law}
    x_{t+1} = A x_t + B \mu_T(x_t, \gamma) + w_t\,.
\end{equation}
We make the following assumptions on the terminal constraint set $\mathcal{Z}_f$ and terminal penalty $V_f(x_T) = x_T^\top P x_T$. Note that these are adaptations of standard assumptions in the robust MPC literature \cite{rawlings2017model, mayne2000constrained}.
\begin{assumption}
    \label{ass:stab_unit_circle_obs}
    $(A, B)$ is stabilizable and $(A, Q^{\frac{1}{2}})$ is observable on the unit circle \cite{hassibi1999indefinite}.
\end{assumption}
\begin{assumption}
    \label{ass:terminal_cost}
    The terminal cost $P \succ 0$ satisfies the sign indefinite algebraic Riccati equation of an unconstrained linear $\mathcal{H}_\infty$ state-feedback control problem at performance level $\gamma_f$ \cite{hassibi1999indefinite}:
\begin{align}
    \label{eq:sign_indefinite_dare}
    P &= Q + A^\top \bar{P} A - A^\top \bar{P} B (R + B^\top \bar{P} B)^{-1} B^\top \bar{P} A\,,
\end{align}
where $\bar{P} = P + P (\gamma_f^2 I - P)^{-1} P$ and $\gamma_f^2 I - P \succ 0$.
\end{assumption}
Now, let $K_f = -(R + B^\top \bar{P} B)^{-1} B^\top P A$ denote the \textit{auxiliary stabilizing state feedback gain} derived from the solution of the above $\mathcal{H}_\infty$ problem. %
\begin{assumption}
    \label{ass:terminal_set}
    The terminal set $\mathcal{Z}_f$ is constraint admissible, $\mathcal{Z}_f \subseteq \{x : (x, K_f x) \in \mathcal{Z}\}$ and robust positively invariant under the local control law $u_t = K_f x_t$,
    \begin{equation*}
        (A + BK_f)x_t + w_t \in \mathcal{Z}_f\,, ~\forall x_t \in \mathcal{Z}_f\,, ~\forall w_t \in \mathcal{W}\,.
    \end{equation*}
\end{assumption}

Unlike classical $\mathcal{H}_2$ and $\mathcal{H}_\infty$ controllers, evaluating the infinite horizon regret value function requires aggregate information about the entire history of disturbance realizations. Hence, showing that the regret-to-go decreases along the trajectories of the closed-loop system \eqref{eq:closed_loop_system_receding_horizon_law} and can serve as a Lyapunov function is a challenging task. To get around this problem, Assumption~\ref{ass:terminal_cost} allows us to compute a quadratic upper bound on the infinite horizon value function that only depends on the current state. %
Once the system has reached the terminal safe set, Assumption~\ref{ass:terminal_set} guarantees that $u_t = K_f x_t$ is constraint admissible. Moreover, being $K_f$ optimal in the $\mathcal{H}_\infty$ sense, the optimal regret cannot be higher than the worst-case cost incurred by this policy. As we will show, this choice ensures that the finite horizon regret-optimal computation is recursively feasible and stabilizing, at the cost of some conservatism in our theoretical performance bound.%
 \subsection{Monotonicity Properties of the Clairvoyant Optimal Policy}
To prove recursive feasibility of the receding horizon control law \eqref{eq:receding_horizon_control_law}, we need to ensure that \eqref{eq:regret_objective_qmi} recursively holds when moving from time $T$ to $T+1$. To do so, we first study how the cost incurred by $\bm{\pi}_T^c$ relates to that of $\bm{\pi}_{T+1}^c$ – that is, the clairvoyant optimal policy over a horizon of $T+1$ – when focusing on subsequences of length $T$. %
\begin{lemma}
    \label{le:relations_clairvoyant}
    Let $\mathbf{C}_T$ and $\mathbf{C}_{T+1}$ denote the matrices defining the cost of the clairvoyant optimal policies according to \eqref{eq:clairvoyant_controller_cost_incurred} for a planning horizon of length $T$ and $T + 1$, respectively.
    Then, the following chain of matrix inequalities holds:
    \begin{equation}
        \label{eq:clairvoyant_inequalities_chain}
        \mathbf{C}_{T+1}^{[1:T+1, 1:T+1]} \preceq \mathbf{C}_{T} \preceq \mathbf{C}_{T+1}^{[0:T, 0:T]}\,.
    \end{equation}
\end{lemma}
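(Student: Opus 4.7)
The plan is to interpret $\bm{\delta}^\top \mathbf{C}_T \bm{\delta}$ directly as the optimal cost of the horizon-$T$ noncausal LQR problem established in Lemma~\ref{le:clairvoyant_optimal_policy}, and to derive each inequality of \eqref{eq:clairvoyant_inequalities_chain} by a suitable suboptimal choice of an input or disturbance in the longer, horizon-$(T+1)$ problem. This sidesteps the need to manipulate the closed form in \eqref{eq:clairvoyant_controller_cost_incurred}, whose constituent matrices $\mathbf{F}, \mathbf{G}, \mathbf{Q}, \mathbf{R}$ change dimension between horizons.

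For the left inequality $\mathbf{C}_{T+1}^{[1:T+1,1:T+1]} \preceq \mathbf{C}_T$, I fix an arbitrary $\mathbf{v} = \operatorname{col}(v_0, \ldots, v_T) \in \mathbb{R}^{(T+1)n}$ and identify $\mathbf{v}^\top \mathbf{C}_{T+1}^{[1:T+1,1:T+1]} \mathbf{v}$ as the optimal cost of the horizon-$(T+1)$ clairvoyant problem with $x_0 = 0$ and disturbances $(v_0, \ldots, v_T)$. Choosing the feasible but generally suboptimal input $u_0 = 0$ zeros out the first stage cost and yields $x_1 = v_0$; after the time shift $k' = k - 1$, the residual minimization over $u_1, \ldots, u_T$ is exactly the horizon-$T$ clairvoyant problem with initial state $v_0$ and disturbances $(v_1, \ldots, v_T)$, of optimal value $\mathbf{v}^\top \mathbf{C}_T \mathbf{v}$. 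Optimality of the $(T+1)$-problem then gives $\mathbf{v}^\top \mathbf{C}_{T+1}^{[1:T+1,1:T+1]} \mathbf{v} \leq \mathbf{v}^\top \mathbf{C}_T \mathbf{v}$.

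For the right inequality $\mathbf{C}_T \preceq \mathbf{C}_{T+1}^{[0:T,0:T]}$, I fix $\bm{\delta} = \operatorname{col}(x_0, w_0, \ldots, w_{T-1})$ and identify $\bm{\delta}^\top \mathbf{C}_{T+1}^{[0:T,0:T]} \bm{\delta}$ as the optimal cost of the horizon-$(T+1)$ clairvoyant problem with $w_T = 0$ appended. Since $x_{T+1}$ does not appear in the stage cost, the last input is optimally $u_T = 0$, reducing the objective to $\sum_{k=0}^{T-1}(x_k^\top Q x_k + u_k^\top R u_k) + x_T^\top Q x_T$. For the same $u_0, \ldots, u_{T-1}$, the state trajectory $x_0, \ldots, x_T$ is identical to that of the horizon-$T$ problem, so the $(T+1)$-objective dominates the $T$-objective pointwise by the nonnegative term $x_T^\top Q x_T$ (using $Q \succeq 0$); taking minima yields $\bm{\delta}^\top \mathbf{C}_T \bm{\delta} \leq \bm{\delta}^\top \mathbf{C}_{T+1}^{[0:T,0:T]} \bm{\delta}$.

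The main obstacle I foresee is essentially notational: carefully aligning the block-index extractions of $\mathbf{C}_{T+1}$ with the intended subvectors of $\bm{\delta}_{T+1}$ before and after the time shift, and verifying that $u_0 = 0$ and $u_T = 0$ are indeed admissible in the unconstrained clairvoyant problem (where no causality or safety constraints apply). As both inequalities rest only on variational arguments together with $Q \succeq 0$, no algebraic manipulation of the Riccati-style expression for $\mathbf{C}_T$ is required and the proof should remain short.
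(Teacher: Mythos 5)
Your proposal is correct and follows essentially the same route as the paper's proof: both inequalities are obtained by interpreting the quadratic forms $\mathbf{v}^\top\mathbf{C}_{T+1}^{[1:T+1,1:T+1]}\mathbf{v}$ and $\bm{\delta}^\top\mathbf{C}_{T+1}^{[0:T,0:T]}\bm{\delta}$ as clairvoyant optimal costs for zero-padded data ($x_0=0$, respectively $w_T=0$) and then invoking suboptimality of a padded control sequence (left inequality) and nonnegativity of the extra stage cost $x_T^\top Q x_T + u_T^\top R u_T$ together with independence of the horizon-$T$ cost from $u_T$ (right inequality). The only difference is presentational — you phrase the comparison as partial minimization and pointwise domination of objectives, while the paper constructs the explicit feasible sequences $\bar u_0=0,\ \bar u_t=u_t^\star$ and passes through the intermediate truncated cost $J_T(\bm{\pi}_{T+1}^c,\cdot)$ — so no gap remains.
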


The proof of Lemma~\ref{le:relations_clairvoyant} can be found in Appendix~\ref{app:proof_relations_clairvoyant}. This result states that, for any $\bm{\delta}$, the cost \eqref{eq:clairvoyant_controller_cost_incurred} incurred by $\bm{\pi}_T^c$ is bounded between the cost that $\bm{\pi}_{T+1}^c$ incurs on the extended disturbance sequences $\operatorname{col}(0, \bm{\delta})$ and $\operatorname{col}(\bm{\delta}, 0)$. We remark that non-causality and optimality for each disturbance sequence – properties that are unique to clairvoyant policies –  are crucial to establishing \eqref{eq:clairvoyant_inequalities_chain}.
Indeed, when considering causal policies defined over different horizons, monotonicity properties as those in \eqref{eq:clairvoyant_inequalities_chain} can only be derived with respect to a single performance measure, such as the worst-case control cost \cite{goulart2009control}. Instead, \eqref{eq:clairvoyant_inequalities_chain} indicate stronger connections, which allow recovering monotonicity properties also known for causal policies as simple corollaries.

\begin{corollary}
    \label{cor:eigenvalues_clairvoyant}
    The clairvoyant worst-case cost function $c_{\operatorname{max}} : \mathbb{N} \to \mathbb{R}_{\geq 0}$ defined as $c_{\operatorname{max}}(T) = \lambda_{\operatorname{max}}(\mathbf{C}_T)$ is monotonically non-decreasing with respect to the planning horizon $T$. Similarly, the clairvoyant best-case cost function $c_{\operatorname{min}} : \mathbb{N} \to \mathbb{R}_{\geq 0}$ defined as $c_{\operatorname{min}}(T) = \lambda_{\operatorname{min}}(\mathbf{C}_T)$ is monotonically non-increasing with respect to the planning horizon $T$.
\end{corollary}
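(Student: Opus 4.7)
The plan is to derive the corollary as a direct consequence of Lemma \ref{le:relations_clairvoyant}, by combining the Loewner ordering stated there with Cauchy's interlacing inequalities for principal submatrices of symmetric matrices. The key observation is that both $\mathbf{C}_{T+1}^{[1:T+1, 1:T+1]}$ and $\mathbf{C}_{T+1}^{[0:T, 0:T]}$ are principal submatrices of $\mathbf{C}_{T+1}$ of size $n(T+1) \times n(T+1)$, and $\mathbf{C}_T$ itself has the same dimensions, so Lemma \ref{le:relations_clairvoyant} compares all three matrices at a common size.

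For the non-decreasing claim on $c_{\operatorname{max}}$, I would first invoke the right-hand inequality in \eqref{eq:clairvoyant_inequalities_chain} together with the standard monotonicity of the maximum eigenvalue under the Loewner order to obtain
\begin{equation*}
    \lambda_{\operatorname{max}}(\mathbf{C}_T) \leq \lambda_{\operatorname{max}}\bigl(\mathbf{C}_{T+1}^{[0:T, 0:T]}\bigr)\,.
\end{equation*}
Then, since $\mathbf{C}_{T+1}^{[0:T, 0:T]}$ is a principal submatrix of the symmetric positive semidefinite matrix $\mathbf{C}_{T+1}$, Cauchy's interlacing theorem yields $\lambda_{\operatorname{max}}(\mathbf{C}_{T+1}^{[0:T, 0:T]}) \leq \lambda_{\operatorname{max}}(\mathbf{C}_{T+1})$. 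Chaining these bounds gives $c_{\operatorname{max}}(T) \leq c_{\operatorname{max}}(T+1)$.

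For the non-increasing claim on $c_{\operatorname{min}}$, I would symmetrically use the left-hand inequality in \eqref{eq:clairvoyant_inequalities_chain} to write $\lambda_{\operatorname{min}}(\mathbf{C}_{T+1}^{[1:T+1, 1:T+1]}) \leq \lambda_{\operatorname{min}}(\mathbf{C}_T)$, and then invoke Cauchy's interlacing on the principal submatrix $\mathbf{C}_{T+1}^{[1:T+1, 1:T+1]}$ of $\mathbf{C}_{T+1}$ to obtain $\lambda_{\operatorname{min}}(\mathbf{C}_{T+1}) \leq \lambda_{\operatorname{min}}(\mathbf{C}_{T+1}^{[1:T+1, 1:T+1]})$. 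Combining the two yields $c_{\operatorname{min}}(T+1) \leq c_{\operatorname{min}}(T)$.

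Since both claims are established for the step from $T$ to $T+1$, monotonicity over all of $\mathbb{N}$ follows by induction. There is no serious obstacle here: once Lemma \ref{le:relations_clairvoyant} is available, the corollary is essentially a two-line application of eigenvalue monotonicity and the interlacing property; the only care required is to confirm that the submatrix indexing in \eqref{eq:clairvoyant_inequalities_chain} produces genuinely principal submatrices of $\mathbf{C}_{T+1}$, which is immediate from the block partition induced by $\bm{\delta} = \operatorname{col}(x_0, \mathbf{w})$.
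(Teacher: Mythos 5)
Your proposal is correct and follows essentially the same route as the paper: both rest on Lemma~\ref{le:relations_clairvoyant} combined with the fact that the extreme eigenvalues of a principal submatrix are dominated by those of the full matrix. The only cosmetic difference is that you invoke Cauchy interlacing as a black box, whereas the paper proves the needed half directly by evaluating the Rayleigh quotient of $\mathbf{C}_{T+1}$ at the zero-padded unit vector $\operatorname{col}(\ubar{\bm{\delta}}, 0)$.
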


A proof of Corollary~\ref{cor:eigenvalues_clairvoyant} is given in Appendix~\ref{app:proof_eigenvalues_clairvoyant} for completeness. The inequalities presented in this section are key to our subsequent derivations, as, by induction, they allow accounting for the mismatch between the finite horizon clairvoyant policy $\bm{\pi}^c_T$, which we repeatedly adopt as control benchmark during online operation, and the infinite horizon clairvoyant policy $\bm{\pi}_\infty^c$ we compare against in \eqref{eq:infinite_horizon_dynamic_regret_bound}.

\subsection{Recursive Feasibility Properties}
Inspired by the proof philosophy in \cite{goulart2009control} for receding horizon $\mathcal{H}_\infty$ control and leveraging Lemma~\ref{le:relations_clairvoyant} to bound the effect of introducing a clairvoyant control benchmark via regret minimization, we proceed to show that, under Assumptions~\ref{ass:stab_unit_circle_obs}-\ref{ass:terminal_set}, the optimization problem \eqref{eq:finite_horizon_value_function} for the closed-loop system \eqref{eq:closed_loop_system_receding_horizon_law} can be solved online at all times – assuming that it is feasible initially. %
To do so, we first prove that, for any $\gamma \geq \gamma_f$, the sets $\mathcal{X}_T^\gamma$ defined in \eqref{eq:X_T_gamma_definition} %
are monotonically non-decreasing in $T$ with respect to the set inclusion. To simplify the notation, we define:%
\begin{align}
    \label{eq:h_infinity_objective_qmi}
    \mathbf{H}_T^\gamma(\bm{\pi}_T) &=  \gamma^2 \mathbf{I} -
    \bm{\Phi}_w^\top
    \bar{\mathbf{S}}
    \bm{\Phi}_w
    \succeq 0\,.
\end{align}
This quadratic matrix inequality is reminiscent of \eqref{eq:regret_objective_qmi} and guarantees concavity of the  $\mathcal{H}_\infty$ objective \eqref{eq:h_infinity_bound}. %
Additionally, we observe that, by definition, it holds that:
\begin{equation*}
    \mathbf{R}_T^\gamma(\pi) = \mathbf{H}_T^\gamma(\pi) + \mathbf{C}_{T}^{[1:T, 1:T]}\,.
\end{equation*}

\begin{proposition}
    \label{prop:chain_of_inclusions}
    Let Assumptions~\ref{ass:stab_unit_circle_obs}-\ref{ass:terminal_set} hold with performance level $\gamma_f \leq \gamma$. Then, the following set inclusion property holds:
    \begin{equation*}
        \mathcal{Z}_f \subseteq \mathcal{X}_1^\gamma \subseteq \cdots \subseteq \mathcal{X}_{T-1}^\gamma \subseteq \mathcal{X}_T^\gamma \subseteq \mathcal{X}_{T+1}^\gamma \subseteq \cdots\,.
    \end{equation*}
\end{proposition}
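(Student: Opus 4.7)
The plan is to prove the chain by induction on $T$, with the auxiliary state-feedback gain $K_f$ of Assumption~\ref{ass:terminal_cost} acting as the candidate terminal controller throughout.

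For the base case $\mathcal{Z}_f \subseteq \mathcal{X}_1^\gamma$, I take $x_0 \in \mathcal{Z}_f$ and certify $u_0 = K_f x_0$ as a member of $\bm{\Pi}_1^\gamma(x_0)$. Admissibility---namely $(x_0, K_f x_0) \in \mathcal{Z}$ and $x_1 \in \mathcal{Z}_f$ for every $w_0 \in \mathcal{W}$---is immediate from Assumption~\ref{ass:terminal_set}. For the QMI~\eqref{eq:regret_objective_qmi}, observe that $\bm{\pi}_1^c$ sets $u_0 = 0$ because $x_1$ is unpenalized in $J_1$, so $\mathbf{C}_1^{[1:1,1:1]} = 0$ and the QMI collapses to $\gamma^2 I - P \succeq 0$. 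This holds at $\gamma = \gamma_f$ by Assumption~\ref{ass:terminal_cost} and thus at any $\gamma \geq \gamma_f$.

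For the inductive step $\mathcal{X}_T^\gamma \subseteq \mathcal{X}_{T+1}^\gamma$, given $x_0 \in \mathcal{X}_T^\gamma$ with certifying policy $\bm{\pi}_T \in \bm{\Pi}_T^\gamma(x_0)$, I construct a candidate $\bm{\pi}_{T+1}$ by concatenating $\bm{\pi}_T$ with $u_T = K_f x_T$. Admissibility of the first $T$ steps is inherited from $\bm{\pi}_T$, while Assumption~\ref{ass:terminal_set} applied at $x_T \in \mathcal{Z}_f$ yields admissibility of step $T$ and $x_{T+1} \in \mathcal{Z}_f$. To verify $\mathbf{R}_{T+1}^\gamma(\bm{\pi}_{T+1}) \succeq 0$, I use the Riccati-induced one-step $\mathcal{H}_\infty$ dissipation inequality $x_T^\top Q x_T + u_T^\top R u_T + V_f(x_{T+1}) \leq V_f(x_T) + \gamma_f^2 \|w_T\|^2$ (with $u_T = K_f x_T$) to decompose the closed-loop cost of $\bm{\pi}_{T+1}$. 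Evaluating the resulting quadratic forms at $x_0 = 0$, and combining with the induction hypothesis ${\bm{\Phi}_w^{(T)}}^{\!\top} \bar{\mathbf{S}}\, \bm{\Phi}_w^{(T)} \preceq \gamma^2 \mathbf{I} + \mathbf{C}_T^{[1:T,1:T]}$ together with $\gamma \geq \gamma_f$, I obtain
\begin{equation*}
{\bm{\Phi}_w^{(T+1)}}^{\!\top} \bar{\mathbf{S}}\, \bm{\Phi}_w^{(T+1)} \preceq \gamma^2 \mathbf{I} + \operatorname{blkdiag}(\mathbf{C}_T^{[1:T,1:T]},\, 0).
\end{equation*}

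The final and most delicate step is to dominate the right-hand side above by $\gamma^2 \mathbf{I} + \mathbf{C}_{T+1}^{[1:T+1,1:T+1]}$, thereby concluding the QMI. To this end, I combine two regret-specific observations: first, extracting the $[1{:}T,1{:}T]$ principal submatrix of the inequality $\mathbf{C}_T \preceq \mathbf{C}_{T+1}^{[0:T,0:T]}$ furnished by Lemma~\ref{le:relations_clairvoyant} yields $\mathbf{C}_T^{[1:T,1:T]} \preceq \mathbf{C}_{T+1}^{[1:T,1:T]}$; second, because $w_T$ affects only $x_{T+1}$ and $x_{T+1}$ carries no stage cost in $J_{T+1}$, the last block row and column of $\mathbf{C}_{T+1}^{[1:T+1,1:T+1]}$ vanish, giving $\mathbf{C}_{T+1}^{[1:T+1,1:T+1]} = \operatorname{blkdiag}(\mathbf{C}_{T+1}^{[1:T,1:T]},\, 0)$. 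Together these facts close the chain and yield $\mathbf{R}_{T+1}^\gamma(\bm{\pi}_{T+1}) \succeq 0$. The main obstacle is precisely this coupling: the extra clairvoyant benchmark embedded in regret would break the standard $\mathcal{H}_\infty$ MPC feasibility argument if one relied solely on Assumption~\ref{ass:terminal_cost}, and it is the novel monotonicity property of Lemma~\ref{le:relations_clairvoyant} combined with the block-sparsity of $\mathbf{C}_{T+1}^{[1:T+1,1:T+1]}$ that absorbs the $\gamma_f^2 I$ residual produced by the terminal policy and makes the construction succeed.
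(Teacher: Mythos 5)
Your proof is correct, and it shares the paper's overall skeleton: induction on the horizon, with $K_f$ applied on $\mathcal{Z}_f$ as the base-case certificate and the concatenation $\bm{\pi}_{T+1} = (\bm{\pi}_T, K_f)$ as the inductive candidate, with Lemma~\ref{le:relations_clairvoyant} absorbing the clairvoyant benchmark shift. The technical execution, however, differs in a way worth noting. The paper verifies the QMI propagation at the matrix level by invoking the block relation of \cite[Prop.~3]{goulart2009control} between $\mathbf{H}_{T+1}^\gamma(\bm{\pi}_{T+1})$ and $\mathbf{H}_T^\gamma(\bm{\pi}_T)$ (with $\gamma^2 I - P$ in the corner), and then shows that the residual $\bm{\Delta}_i = \mathbf{C}_{T+1}^{[1:T+1,1:T+1]} - \operatorname{blkdiag}(\mathbf{C}_T^{[1:T,1:T]}, 0)$ is positive semidefinite by interpreting its quadratic form as a difference of clairvoyant costs and appealing to \eqref{eq:initial_subsequences_cost_inequality}. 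You instead give a self-contained trajectory-level argument: the one-step Riccati dissipation inequality from Assumption~\ref{ass:terminal_cost} bounds the appended stage plus terminal cost by $V_f(x_T) + \gamma_f^2\|w_T\|^2$, which combined with the horizon-$T$ QMI yields ${\bm{\Phi}_w^{(T+1)}}^{\top}\bar{\mathbf{S}}\bm{\Phi}_w^{(T+1)} \preceq \gamma^2\mathbf{I} + \operatorname{blkdiag}(\mathbf{C}_T^{[1:T,1:T]},0)$; and you then dominate the right-hand side by $\gamma^2\mathbf{I} + \mathbf{C}_{T+1}^{[1:T+1,1:T+1]}$ through a principal-submatrix extraction of $\mathbf{C}_T \preceq \mathbf{C}_{T+1}^{[0:T,0:T]}$ together with the (correct) observation that the last block row and column of $\mathbf{C}_{T+1}$ vanish because $w_T$ only affects the unpenalized state $x_{T+1}$ — which is exactly the fact the paper itself uses inside the proof of Lemma~\ref{le:relations_clairvoyant}, so your step is equivalent in content to the paper's $\bm{\Delta}_i \succeq 0$. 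Your base case is also slightly sharper: you compute $\mathbf{C}_1^{[1,1]} = 0$ and reduce the QMI to $\gamma^2 I - P \succeq 0$ directly, whereas the paper cites \cite[Prop.~3]{goulart2009control} and only needs $\mathbf{C}_1^{[1,1]} \succeq 0$. What your route buys is independence from the external block-matrix lemma of \cite{goulart2009control}; the only point you should make explicit is that the dissipation inequality and the inductive QMI hold for all real disturbance vectors (not merely $w \in \mathcal{W}$), which is what lets your pointwise trajectory bounds be read off as matrix inequalities.
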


This chain of set inclusions, which we prove in Appendix~\ref{app:proof_chain_of_inclusions}, implies that, for any $x \in \mathcal{X}_T$, the minimum gain function $\gamma^\star_T(\cdot)$ defined in \eqref{eq:min_gain_gamma} satisfies the inequality $\gamma^\star_{T+1}(x) \leq \operatorname{max}\{\gamma_f, \gamma^\star_T(x)\}$.
We continue by showing that the proposed receding horizon scheme is recursively feasible; we defer the proof to Appendix~\ref{app:proof_recursive_feasibility_terminal_rpi}.

\begin{theorem}
    \label{th:recursive_feasibility_terminal_rpi}
    Let Assumptions~\ref{ass:stab_unit_circle_obs}-\ref{ass:terminal_set} hold. Then, for every performance level $\gamma \geq \operatorname{max}\{\gamma_T^\star(x_0), \gamma_f\}$ and every $T \in \mathbb{N}$, the set $\mathcal{X}_T^\gamma$ is robust positively invariant under the receding horizon control law \eqref{eq:receding_horizon_control_law}, i.e., $A x + B\mu_T(x, \gamma) + w \in \mathcal{X}_T^\gamma$ for all $x \in \mathcal{X}_T^\gamma$ and all $w \in \mathcal{W}$.
\end{theorem}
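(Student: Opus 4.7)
The goal is to prove that, for any $x \in \mathcal{X}_T^\gamma$ with $\gamma \geq \operatorname{max}\{\gamma_T^\star(x), \gamma_f\}$ and any $w \in \mathcal{W}$, the successor state $x^+ = A x + B \mu_T(x, \gamma) + w$ also admits a feasible policy in $\bm{\Pi}_T^\gamma(x^+)$. The plan is to follow the shift-and-extend philosophy of robust MPC \cite{goulart2009control}, adapted to the regret-optimal setting by exploiting the monotonicity of the clairvoyant cost matrices from Lemma~\ref{le:relations_clairvoyant}.

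Starting from any feasible $(\bm{\Phi}_x^\star, \bm{\Phi}_u^\star) \in \bm{\Pi}_T^\gamma(x)$, I would construct a candidate $(\bm{\Phi}_x', \bm{\Phi}_u')$ at horizon $T$ from $x^+$ in two steps: first, by shifting the tail of $\bm{\pi}_T^\star$ (stages $1$ through $T-1$) forward by one time step and absorbing the now-observed initial state, control input, and disturbance $(x, \mu_T(x, \gamma), w)$ into the new initial condition $x^+$ via the equivalence between state-feedback and disturbance-feedback SLS parametrizations; and second, by appending $u_{T-1}' = K_f x_{T-1}'$ at the new terminal stage. Causality and the SLS identity \eqref{eq:sls_affine_subspace_constraints} then hold by construction. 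Admissibility \eqref{eq:safety_constraints} follows stage-by-stage: stages $0$ through $T-2$ inherit robust constraint satisfaction from the corresponding tail stages of $\bm{\pi}_T^\star$, while at the terminal stage the state $x_{T-1}' = x_T^\star$ lies in $\mathcal{Z}_f$ by the original terminal constraint, and Assumption~\ref{ass:terminal_set} guarantees both $(x_{T-1}', K_f x_{T-1}') \in \mathcal{Z}$ and $x_T' \in \mathcal{Z}_f$ for every $w_{T-1}' \in \mathcal{W}$.

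The main obstacle is verifying the QMI \eqref{eq:regret_objective_qmi} for the candidate. The plan is to decompose $\bm{\Phi}_w^{\prime\top} \bar{\mathbf{S}} \bm{\Phi}_w'$ into a shifted-tail contribution, bounded via the original QMI at $x$, and a terminal contribution induced by the $K_f$-augmentation, bounded via the $\mathcal{H}_\infty$ Riccati identity of Assumption~\ref{ass:terminal_cost}. Writing $A_f = A + B K_f$, the latter yields $\begin{bmatrix} Q + K_f^\top R K_f + A_f^\top P A_f - P & A_f^\top P \\ P A_f & P - \gamma_f^2 I \end{bmatrix} \preceq 0$, so the terminal block of $\bm{\Phi}_w^{\prime\top} \bar{\mathbf{S}} \bm{\Phi}_w'$ is dominated by $\gamma_f^2 I \preceq \gamma^2 I$, up to cross terms with $x_{T-1}'$ that are handled by the same Riccati matrix. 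To bridge the residual mismatch between the clairvoyant cost matrices attached to $x$ and to $x^+$ after the time shift, I would pass through the intermediate horizon $T+1$: the Riccati extension of $\bm{\pi}_T^\star$ by one step of $K_f$ certifies $x \in \mathcal{X}_{T+1}^\gamma$, and the monotonicity chain \eqref{eq:clairvoyant_inequalities_chain} — specifically $\mathbf{C}_{T+1}^{[1:T+1, 1:T+1]} \preceq \mathbf{C}_T$ — then controls the shifted block structure upon truncation back to horizon $T$ from $x^+$. The hard part will be the careful bookkeeping of block indices so that the Riccati-derived terminal bound, the original QMI applied to the shifted window, and Lemma~\ref{le:relations_clairvoyant} combine cleanly into the target inequality $\bm{\Phi}_w^{\prime\top} \bar{\mathbf{S}} \bm{\Phi}_w' \preceq \gamma^2 I + \mathbf{C}_T^{[1:T, 1:T]}$; unlike in the $\mathcal{H}_\infty$ case, where no clairvoyant reference enters the Hessian condition, the presence of $\mathbf{C}_T$ and $\mathbf{C}_{T+1}$ makes cross-horizon monotonicity genuinely indispensable here.
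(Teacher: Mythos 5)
Your proposal matches the paper's proof essentially step for step: the paper likewise extends the feasible horizon-$T$ policy at $x$ by one step of $K_f$ to certify membership in $\mathcal{X}_{T+1}^\gamma$ (this is Proposition~\ref{prop:chain_of_inclusions}, which uses the right inequality of Lemma~\ref{le:relations_clairvoyant} together with the Riccati identity of Assumption~\ref{ass:terminal_cost}), then takes the length-$T$ tail as the candidate from $x^+$ and verifies the QMI \eqref{eq:regret_objective_qmi} by writing $\mathbf{R}_{T+1}^\gamma$ in block form and absorbing the clairvoyant baseline shift $\bm{\Delta}_f$ via exactly the inequality $\mathbf{C}_{T+1}^{[1:T+1,1:T+1]} \preceq \mathbf{C}_T$ you single out, concluding by positive semidefiniteness of principal submatrices. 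The block bookkeeping you flag as the remaining hard part is precisely what the paper carries out with the $U$, $\mathbf{V}$, $\bm{\Delta}_f$ decomposition in Appendix~\ref{app:proof_recursive_feasibility_terminal_rpi}.
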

When implementing the receding horizon control law \eqref{eq:receding_horizon_control_law}, each time a new measure of the state $x_t$ is made available, the optimal performance level $\gamma_T^\star(x_t)$ can be computed by bisection, see also the discussion after Proposition~\ref{prop:implementation_structured}. In particular, Theorem~\ref{th:recursive_feasibility_terminal_rpi} ensures that $\gamma_T^\star(x_t) \leq \operatorname{max}(\gamma_T^\star(x_0), \gamma_f)$ for all $t \in \mathbb{N}$.

\subsection{Bounded Regret in Receding Horizon Control}
Building on the recursive feasibility properties established in the previous section, we now show that the proposed receding horizon control scheme with terminal ingredients derived from an auxiliary $\mathcal{H}_\infty$ policy is safe, in the sense of $(x_t, u_t) \in \mathcal{Z}$ at all times, and achieves finite regret with respect to the infinite horizon clairvoyant optimal policy $\bm{\pi}_\infty^c$.
\begin{theorem}
    \label{th:rhc_finite_gain_infinite_horizon}
    Let Assumptions~\ref{ass:stab_unit_circle_obs}-\ref{ass:terminal_set} hold.
    Then, for every performance level $\gamma \geq \operatorname{max}(\gamma_T^\star(x_0), \gamma_f)$, the closed-loop system gain from the disturbance energy to the dynamic regret in \eqref{eq:infinite_horizon_dynamic_regret_bound} is bounded above by $\gamma$. Moreover, for any $x_0 \in \mathcal{X}^\gamma_T$, the closed-loop system satisfies $(x_t, u_t) \in \mathcal{Z}$ at all times.
\end{theorem}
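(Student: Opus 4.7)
The plan is to prove the two claims separately. Safety is immediate: by Theorem~\ref{th:recursive_feasibility_terminal_rpi}, the closed-loop state $x_t$ stays in $\mathcal{X}_T^\gamma$ at all times, so the finite horizon problem \eqref{eq:finite_horizon_value_function} is feasible at every step, and its first input $\mu_T(x_t,\gamma)$ is admissible by definition of $\bm{\Pi}_T^\gamma(x_t)$; hence $(x_t, u_t) \in \mathcal{Z}$ for all $t$. The core of the proof is thus the regret bound.

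For the regret bound I would use the value function $V_T^\star(x_t,\gamma)$ from \eqref{eq:finite_horizon_value_function} as a storage/Lyapunov-like function and establish a one-step dissipation inequality. Let $\bm{\pi}_T^\star(x_t,\gamma)$ denote an optimal policy at time $t$, and construct a feasible candidate policy at time $t+1$ by shifting the tail of $\bm{\pi}_T^\star(x_t,\gamma)$ forward and appending the auxiliary $\mathcal{H}_\infty$ gain $K_f$ as the last input. By Assumption~\ref{ass:terminal_set}, this candidate is admissible and the candidate terminal state lies in $\mathcal{Z}_f$; by Assumption~\ref{ass:terminal_cost}, the Riccati equation \eqref{eq:sign_indefinite_dare} yields the one-step identity
\begin{equation*}
    x^\top Q x + (K_f x)^\top R (K_f x) + V_f((A+BK_f)x + w) \leq V_f(x) + \gamma_f^2\|w\|^2,
\end{equation*}
for every $x$ and every $w$, which lets the candidate's extra step be absorbed into the terminal cost at a rate $\gamma_f^2 \leq \gamma^2$. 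Using $V_T^\star(x_{t+1},\gamma) \leq$ (worst-case regret of the candidate) and $V_T^\star(x_t,\gamma) \geq$ (regret of the optimal at the realized disturbance), I obtain a dissipation inequality of the form
\begin{equation*}
    V_T^\star(x_{t+1},\gamma) \leq V_T^\star(x_t,\gamma) - \ell(x_t,u_t) + \gamma^2\|w_t\|^2 + \Delta_t,
\end{equation*}
where $\ell(x,u) = x^\top Q x + u^\top R u$ and $\Delta_t = J_T(\bm{\pi}_T^c,\bm{\delta}_t) - J_T(\bm{\pi}_T^c,\bm{\delta}_{t+1})$ is the mismatch between the two sliding-window clairvoyant benchmarks.

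To turn the dissipation inequality into an infinite horizon regret bound, I would sum it over $t=0,\dots,N-1$, use non-negativity of $V_T^\star$ (Proposition~\ref{prop:convexity_lower_bound}), and bound $\sum_{t=0}^{N-1} \Delta_t$ by a term of the form $J_\infty(\bm{\pi}^c_\infty,\mathbf{w}_\infty)$ plus a finite constant. This last step is where Lemma~\ref{le:relations_clairvoyant} enters: the chain $\mathbf{C}_T \preceq \mathbf{C}_{T+1}^{[0:T,0:T]} \preceq \cdots \preceq \mathbf{C}_\infty^{[0:T,0:T]}$ shows that the finite horizon clairvoyant cost at each step is dominated by the corresponding slice of the infinite horizon clairvoyant cost on the padded disturbance, while the other inequality $\mathbf{C}_{T+1}^{[1:T+1,1:T+1]} \preceq \mathbf{C}_T$ controls the overlap between consecutive windows so that the telescoping does not overcount. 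Passing to the limit $N \to \infty$ and taking $\overline{\gamma} = \gamma$ yields \eqref{eq:infinite_horizon_dynamic_regret_bound} with a constant $\overline{\beta}$ expressible in terms of $V_T^\star(x_0,\gamma)$ and a uniform bound on the clairvoyant-benchmark mismatch inherited from the compactness of $\mathcal{W}$.

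The main obstacle, I expect, is controlling the aggregate mismatch $\sum_t \Delta_t$ by the infinite horizon clairvoyant cost rather than by a diverging quantity: at each step the receding horizon scheme compares against a \emph{different} clairvoyant policy (one truncated at horizon $T$), whereas the theorem's benchmark is $\bm{\pi}_\infty^c$. Lemma~\ref{le:relations_clairvoyant} is precisely the tool needed to absorb the window-to-window discrepancy without losing a finite constant per step, but applying it carefully — keeping track of which blocks of $\mathbf{C}_{T+1}$ are compared against $\mathbf{C}_T$, and which disturbance coordinates are zero-padded — is the delicate bookkeeping step. Once this is handled, recursive feasibility from Theorem~\ref{th:recursive_feasibility_terminal_rpi} and $\gamma \geq \gamma_f$ together ensure that every ingredient of the argument remains valid uniformly in $t$.
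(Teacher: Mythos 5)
Your safety/feasibility part is fine and matches the paper: recursive feasibility (Theorem~\ref{th:recursive_feasibility_terminal_rpi}) plus admissibility of the optimizer at every re-planning step gives $(x_t,u_t)\in\mathcal{Z}$ for all $t$. The gap is in the regret bound, specifically in how you account for the clairvoyant benchmark. Your claimed one-step inequality carries a mismatch term $\Delta_t = J_T(\bm{\pi}_T^c,\bm{\delta}_t) - J_T(\bm{\pi}_T^c,\bm{\delta}_{t+1})$ evaluated at the \emph{realized} disturbances, but this is not what the shifted-candidate argument produces: inside $V_T^\star$ the benchmark $-\bm{\delta}^\top\mathbf{C}_T\bm{\delta}$ is evaluated at the \emph{maximizing hypothetical} disturbance of each window and is quadratically coupled across the window through the off-diagonal blocks of $\mathbf{C}_T$, so the window-to-window benchmark discrepancy cannot be pulled out as a realized, stage-separable quantity (handling exactly this coupling is what the block bookkeeping with $\bm{\Delta}_f$ and Lemma~\ref{le:relations_clairvoyant} does in the proof of Theorem~\ref{th:recursive_feasibility_terminal_rpi}, at the matrix-inequality level). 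Worse, your $\Delta_t$ telescopes: $\sum_{t=0}^{N-1}\Delta_t \leq J_T(\bm{\pi}_T^c,\bm{\delta}_0)$, a constant, so if your dissipation inequality were true it would yield the pure $\mathcal{H}_\infty$-type bound $J_\infty(\bm{\mu}_T,\mathbf{w}_\infty) \leq \gamma^2\normtwo{\mathbf{w}_\infty}^2 + \mathrm{const}$ with no clairvoyant subtraction. That proves too much: the online QMI \eqref{eq:regret_objective_qmi} only dominates the window cost by $\gamma^2\norm{\mathbf{w}}_2^2$ \emph{plus} the clairvoyant window cost, so for persistent disturbances the realized cost may exceed $\gamma^2\normtwo{\mathbf{w}_\infty}^2$ by an amount growing linearly in time; the correct per-step allowance is precisely the clairvoyant stage cost and must not telescope away.

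The paper's proof sidesteps this bookkeeping entirely, and never invokes Lemma~\ref{le:relations_clairvoyant} in this step. It defines the stage loss relative to the \emph{infinite horizon} clairvoyant policy, $\ell_r(x_t,u_t,w_t) = \norm{x_t}_Q^2 + \norm{u_t}_R^2 - \gamma^2\normtwo{w_t}^2 - \norm{x_t^c}_Q^2 - \norm{u_t^c}_R^2$, and uses only two facts: (i) the clairvoyant stage cost is nonnegative and $\gamma \geq \gamma_f$, so $\ell_r \leq \ell \leq \ell_f$ pointwise; (ii) the saddle-point property of $K_f$ gives $\max_{w}[\Delta V_f + \ell_f](x, K_f x, w) = 0$, hence $V_f$ is a robust control Lyapunov function with respect to $\ell_r$. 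The standard robust-MPC argument then yields $[\Delta V_T^\star(\cdot,\gamma) + \ell_r](x_t,\mu_T(x_t,\gamma),w_t) \leq 0$, and summing together with $V_T^\star \geq 0$ (Proposition~\ref{prop:convexity_lower_bound}) gives \eqref{eq:infinite_horizon_dynamic_regret_bound} with $\overline{\gamma} = \gamma$ and $\overline{\beta} = V_T^\star(x_0,\gamma)$. So the fix for your argument is to stop comparing consecutive finite-horizon benchmarks altogether and instead keep the infinite-horizon clairvoyant stage cost inside the stage loss, exploiting only its nonnegativity; your Riccati one-step inequality and the rest of your skeleton (candidate tail plus $K_f$, summation, nonnegativity of the value function) then go through as in the paper.
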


The proof of Theorem~\ref{th:rhc_finite_gain_infinite_horizon} is presented in Appendix~\ref{app:proof_rhc_finite_gain_infinite_horizon}. We remark that, for any $\mathbf{w}_\infty$ such that $\normtwo{\mathbf{w}_\infty}<\infty$, bounded dynamic regret as per \eqref{eq:infinite_horizon_dynamic_regret_bound} implies that the realized control cost $J_\infty(\mu_T, \mathbf{w}_\infty)$ is also bounded. To see this, it is sufficient to note that the $\mathcal{H}_\infty$ controller $K_f$ guarantees bounded cost $J_\infty(K_f, \mathbf{w}_{\infty})$, and that $J_\infty(\bm{\pi}_\infty^c, \mathbf{w}_{\infty}) \leq J_\infty(K_f, \mathbf{w}_{\infty})$ by definition.

We now turn our attention to stability of the closed-loop system \eqref{eq:closed_loop_system_receding_horizon_law} as a Corollary of Theorem~\ref{th:rhc_finite_gain_infinite_horizon}.%

\begin{corollary}
    Let Assumptions~\ref{ass:stab_unit_circle_obs}-\ref{ass:terminal_set} hold with performance level $\gamma \geq \operatorname{max}(\gamma_T^\star(x_0), \gamma_f)$. Then, the origin of the undisturbed closed-loop system $x_{t+1} = Ax_t + B\mu_T(x_t, \gamma)$ is locally exponentially stable with region of attraction $\mathcal{X}_T^\gamma$. Moreover, if $Q\succ 0$, the closed-loop system \eqref{eq:closed_loop_system_receding_horizon_law} exhibits a finite $\ell_2$ gain also from disturbances to state and input trajectories, that is:
    \begin{equation}
    \label{eq:l2_stability_input_state}
    \max_{\normtwo{\mathbf{w}_\infty} < \infty} ~ \frac{\normtwo{\mathbf{x}_\infty}}{\normtwo{\mathbf{w}_\infty}}< \infty,~ \max_{\normtwo{\mathbf{w}_\infty} < \infty} ~ \frac{\normtwo{\mathbf{u}_\infty}}{\normtwo{\mathbf{w}_\infty}} < \infty\,.
\end{equation}
\end{corollary}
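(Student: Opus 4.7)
I would deduce both claims of the corollary directly from Theorem~\ref{th:rhc_finite_gain_infinite_horizon}, after first obtaining a quadratic upper bound on the infinite-horizon clairvoyant cost. Specifically, since $(A,B)$ is stabilizable by Assumption~\ref{ass:stab_unit_circle_obs}, any fixed stabilizing linear gain $K_s$ yields a closed-loop system with finite $\mathcal{H}_\infty$ norm, and hence $J_\infty(K_s,\mathbf{w}_\infty)\le c_1\|x_0\|^2+c_2\|\mathbf{w}_\infty\|_2^2$ for some constants $c_1,c_2\ge 0$. Because $\bm{\pi}_\infty^c$ is globally optimal for every disturbance realization, the same quadratic estimate transfers: $J_\infty(\bm{\pi}_\infty^c,\mathbf{w}_\infty)\le c_1\|x_0\|^2+c_2\|\mathbf{w}_\infty\|_2^2$. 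Plugging this into \eqref{eq:infinite_horizon_dynamic_regret_bound} yields $J_\infty(\mu_T,\mathbf{w}_\infty)\le C_1\|x_0\|^2+(c_2+\gamma^2)\|\mathbf{w}_\infty\|_2^2+\overline{\beta}$, which is the workhorse bound for what follows.

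\textbf{Finite $\ell_2$-gain ($Q\succ 0$).} Fix $x_0=0$, which lies in $\mathcal{X}_T^\gamma$ since the origin is interior to $\mathcal{Z}$ and $\mathcal{Z}_f$. Under the concavity condition \eqref{eq:regret_objective_qmi} that is enforced in \eqref{eq:finite_horizon_value_function}, the quadratic form in $\mathbf{w}$ defining $\mathtt{Reg}_T$ is negative semidefinite when $x_0=0$, so its maximum is attained at $\mathbf{w}=0$ with value zero; consequently $V_T^\star(0,\gamma)=0$, and by tracking the recursive inequalities used in the proof of Theorem~\ref{th:rhc_finite_gain_infinite_horizon} one sees that the constant $\overline{\beta}$ inherits the same property at $x_0=0$. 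The master bound then collapses to $J_\infty(\mu_T,\mathbf{w}_\infty)\le (c_2+\gamma^2)\|\mathbf{w}_\infty\|_2^2$. Since $Q\succ 0$ and $R\succ 0$, the stage cost dominates $\min(\lambda_{\min}(Q),\lambda_{\min}(R))(\|x_t\|^2+\|u_t\|^2)$, so rearranging delivers the bounds in \eqref{eq:l2_stability_input_state}.

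\textbf{Region of attraction and local exponential stability.} For the undisturbed system, set $\mathbf{w}_\infty=0$ and $x_0\in\mathcal{X}_T^\gamma$. The master bound gives $\sum_{t\ge 0}(x_t^\top Qx_t+u_t^\top Ru_t)\le C_1\|x_0\|^2+\overline{\beta}(x_0)<\infty$. Since $R\succ 0$, $u_t\to 0$; since $(A,Q^{1/2})$ is observable on the unit circle, a standard detectability argument then forces $x_t\to 0$, proving global convergence from $\mathcal{X}_T^\gamma$. For local exponential stability, I would exploit that $\mathcal{Z}$ and $\mathcal{Z}_f$ contain the origin in their interior: there exists a neighborhood $\mathcal{N}$ of the origin on which the safety and terminal constraints are inactive in \eqref{eq:finite_horizon_value_function}, so the problem reduces to an unconstrained strictly convex-concave quadratic minimax whose minimizer is a fixed linear gain $K_T^\gamma$ independent of $x_0$. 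For $x_0\in\mathcal{N}$, the closed loop is the linear system $x_{t+1}=(A+BK_T^\gamma)x_t$, and the uniform quadratic bound $\sum_t\|x_t\|^2\lesssim \|x_0\|^2$ forces $A+BK_T^\gamma$ to be Schur, giving exponential decay inside $\mathcal{N}$. Concatenating with the asymptotic convergence into $\mathcal{N}$ from any $x_0\in\mathcal{X}_T^\gamma$ delivers both the region-of-attraction and local exponential stability statements.

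\textbf{Main obstacle.} The delicate step is rigorously controlling the constant $\overline{\beta}$ in the master bound: showing that $\overline{\beta}(0)=0$ and, more generally, that $\overline{\beta}$ scales quadratically in $x_0$ requires revisiting the proof of Theorem~\ref{th:rhc_finite_gain_infinite_horizon} and verifying that the Lyapunov-type recursion used there propagates the positive homogeneity (in $x_0$) of the unconstrained value function. This homogeneity is immediate in a neighborhood of the origin by linearity of the data in $x_0$, but extending it to all of $\mathcal{X}_T^\gamma$ may require a Lipschitz estimate on the parametric minimax value function or an explicit convex-combination argument; everything else in the plan is essentially a reduction to standard LTI arguments.
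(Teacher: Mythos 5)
Your second claim (the finite $\ell_2$ gain) is handled essentially as in the paper: you bound $J_\infty(\bm{\pi}_\infty^c,\mathbf{w}_\infty)$ by the cost of a stabilizing linear controller (the paper uses $K_f$ for exactly this purpose), combine it with the regret bound of Theorem~\ref{th:rhc_finite_gain_infinite_horizon}, and use $Q\succ 0$, $R\succ 0$ to pass from bounded cost to bounded state and input energy; your additional observation that $V_T^\star(0,\gamma)=0$, so the offset vanishes at $x_0=0$, is correct and if anything sharpens the statement. The stability half, however, departs from the paper and contains two genuine gaps. First, the claim that there is a neighborhood of the origin on which the safety and terminal constraints in \eqref{eq:finite_horizon_value_function} are inactive does not follow from the origin being interior to $\mathcal{Z}$ and $\mathcal{Z}_f$: those constraints are imposed robustly for all $\mathbf{w}\in\mathbfcal{W}$, where $\mathcal{W}$ is a fixed compact set with nonempty interior, so they restrict $\bm{\Phi}_w$ (and hence the optimizer) no matter how small $x_0$ is, and nothing guarantees that the unconstrained regret-optimal responses satisfy them. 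The reduction to a constant gain $K_T^\gamma$ near the origin, and with it your local exponential decay argument, is therefore unjustified. Second, your convergence step uses more than is assumed: Assumption~\ref{ass:stab_unit_circle_obs} gives observability of $(A,Q^{1/2})$ on the unit circle only, which permits unobservable modes strictly outside the unit circle, so summability of the stage cost together with $u_t\to 0$ does not force $x_t\to 0$ without a genuine detectability hypothesis; moreover, the exponential-stability claim of the corollary is made without $Q\succ 0$, so the bound $\sum_t\|x_t\|^2\lesssim\|x_0\|^2$ you invoke is not available in that part.

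The paper sidesteps both issues: it recalls from the proof of Theorem~\ref{th:rhc_finite_gain_infinite_horizon} that the terminal cost $V_f$, via the saddle-point property \eqref{eq:zero_sum_game_property} of the auxiliary $\mathcal{H}_\infty$ gain $K_f$ and $P\succ 0$, is a robust control Lyapunov function in a neighborhood of the origin, and then invokes classical MPC stability results to obtain local exponential stability with region of attraction $\mathcal{X}_T^\gamma$. If you want to keep your route, you would need either a detectability (or $Q\succ 0$) assumption for the convergence step and a separate argument for local exponential behavior that does not rely on the robust constraints deactivating near the origin; the Lyapunov-based route through $V_f$ is the cleaner fix.
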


\begin{proof}
    Recalling from the proof of Theorem~\ref{th:rhc_finite_gain_infinite_horizon} that $V_f(\cdot)$ is a robust control Lyapunov function in a neighborhood of the origin, the result for the undisturbed system follows by resorting to classical results in MPC, see, for instance, \cite[Sect. 3]{mayne2000constrained}. Finally, since $J_\infty(\mu_T, \mathbf{w}_\infty) \geq \lambda_{\operatorname{min}}(Q) \normtwo{\mathbf{x}_\infty}^2 + \lambda_{\operatorname{min}}(R) \normtwo{\mathbf{u}_\infty}^2$ and both $\lambda_{\operatorname{min}}(Q)$ and $\lambda_{\operatorname{min}}(R)$ are strictly positive, we conclude that $\normtwo{\mathbf{x}_\infty}^2$ and $\normtwo{\mathbf{u}_\infty}^2$ are themselves bounded. Hence, \eqref{eq:l2_stability_input_state} follows.
\end{proof}

The interplay between regret and classical definitions of stability has been recently studied in \cite{karapetyan2022implications} in the context of linear dynamical systems subject to adversarial disturbances, and in \cite{nonhoff2022relation} for deterministic nonlinear systems. Our analysis takes another step towards understanding the relationship between these notions by showing that implementing a regret-optimal control policy in a receding horizon fashion does not compromise the stability of the closed-loop system – despite the significantly higher performance that can be achieved when disturbances do not follow classical assumptions.

\section{Numerical Experiments}
\label{sec:numerical}
In this section, we first discuss how the minimax policy optimization problem \eqref{eq:finite_horizon_value_function} can be efficiently solved with convex programming techniques. Then, we present numerical results to show how the proposed receding horizon control scheme retains the ability of safe finite horizon regret-optimal policy to interpolate between, or even prevail over, the performance of constrained $\mathcal{H}_2$ and $\mathcal{H}_\infty$ controllers \cite{martin2022safe, didier2022system}. In particular, our numerical experiments showcase that regret-optimal policies can achieve superior control performance despite less frequent solution of the policy optimization problem during online operation.

\subsection{Numerically Efficient Implementation}
\label{subsec:numerical_implementation}
For simplicity, we assume that the disturbance set $\mathbfcal{W}$ is a polytope. Then, there exists a matrix $\mathbf{H}_w$ and a vector $\mathbf{h}_w$ such that $\mathbfcal{W} = \{\mathbf{w} : \mathbf{H}_w \mathbf{w} \leq \mathbf{h}_w\}$; we refer the interested reader to \cite[Ch. 6]{goulart2007affine} for insights on how to generalize our results to disturbance sets characterized by affine conic inequalities.
Similarly, we assume that $\mathbfcal{Z} = \{\mathbf{z} : \mathbf{H}_z \mathbf{z} \leq \mathbf{h}_z\}$.
The following proposition presents an equivalent convex reformulation of the optimal control problem of interest. To ease readability, we defer all proofs to the appendix.
\begin{proposition}
    \label{prop:implementation_structured}
    Consider the evolution of the linear time-invariant system \eqref{eq:lti_system_dynamics} over a planning horizon of length $T \in \mathbb{N}$, and the constrained regret-optimal control problem at performance level $\gamma$:
    \begin{subequations}
    \label{prob:safe_regret_optimal_original}
    \begin{alignat}{3}
    &~\min_{\bm{\Phi}_x,\bm{\Phi}_u} ~ && \hspace{-0.25cm} \mathtt{Reg}_T^\star(\bm{\Phi}_x, \bm{\Phi}_u, \gamma, x_0) \label{eq:safe_regret_optimal_problem_original}\\
        &\st ~&&\eqref{eq:sls_affine_subspace_constraints}\,, \eqref{eq:regret_objective_qmi}\,, \nonumber\\
        & &&
        \mathbf{H}_z
        \begin{bmatrix}
            \bm{\Phi}_x \\ \bm{\Phi}_u
        \end{bmatrix} \hspace{-.1cm}
        \begin{bmatrix}
            x_0 \\ \mathbf{w}
        \end{bmatrix}
        \leq \mathbf{h}_z, ~ \forall \mathbf{w} : \mathbf{H}_w \mathbf{w} \leq \mathbf{h}_w, \label{eq:safety_constraints_compact_Phi}\\
        & && \bm{\Phi}_x, \bm{\Phi}_u \text{ with causal sparsities}\label{eq:closed_loop_responses_causal_sparsities}\,.
    \end{alignat}
    \end{subequations}
    Then, \eqref{prob:safe_regret_optimal_original} is equivalently formulated as the following convex optimization problem:
    \begin{subequations}
    \label{prob:safe_regret_optimal_sdp}
    \begin{alignat}{3}
        &~\min_{\bm{\Phi}_x,\bm{\Phi}_u, \mathbf{Y}, \tau, \bm{\eta}} && \hspace{0.5cm} 2\mathbf{h}_w^\top \bm{\eta} + \tau \label{eq:objective_to_be_reformulated}\\
        &\st ~&& \eqref{eq:sls_affine_subspace_constraints}\,, ~ \eqref{eq:closed_loop_responses_causal_sparsities}\,, ~ \bm{\eta} \geq 0\,, ~ \mathbf{Y}_{ij} \geq 0\,,\nonumber\\
        & &&
        \mathbf{Y}^\top \mathbf{h}_w \leq \mathbf{h}_z - \mathbf{H}_z
        \bm{\Phi}_0
        x_0\,,\label{eq:sdp_dual_safety_constraints_1}\\
        & &&
        \mathbf{H}_z
        \bm{\Phi}_w
        = \mathbf{Y}^\top \mathbf{H}_w\,,\label{eq:sdp_dual_safety_constraints_2}\\
        & && \bm{\Theta}(\bm{\Phi}_x, \bm{\Phi}_u, \gamma, x_0)
         \succeq 0 \label{eq:sdp_schur_constraints}\,,%
    \end{alignat}
    where $\bm{\Theta}(\bm{\Phi}_x, \bm{\Phi}_u, \gamma, x_0)$ is defined as
    \begin{equation*}
        \begin{bmatrix}
            \tau + x_0^\top \mathbf{C}_T^{[0,0]} x_0 & \bm{\eta}^\top \mathbf{H}_w + x_0^\top \mathbf{C}_T^{[0, 1:T]} & x_0^\top \bm{\Phi}_0^\top \bar{\mathbf{S}}^\frac{1}{2}\\
            \mathbf{H}_w

^\top \bm{\eta} + \mathbf{C}_T^{[1:T, 0]} x_0 & \gamma^2 \mathbf{I} + \mathbf{C}_T^{[1:T, 1:T]} & \bm{\Phi}_w^\top \bar{\mathbf{S}}^\frac{1}{2}\\
            \bar{\mathbf{S}}^\frac{1}{2} \bm{\Phi}_0 x_0 &
            \bar{\mathbf{S}}^\frac{1}{2} \bm{\Phi}_w & \mathbf{I}
        \end{bmatrix}\,.
    \end{equation*}
\end{subequations}
\end{proposition}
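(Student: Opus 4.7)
The plan is to reformulate \eqref{prob:safe_regret_optimal_original} by separately dualizing the robust safety constraint \eqref{eq:safety_constraints_compact_Phi} and the inner maximization in the objective \eqref{eq:safe_regret_optimal_problem_original}, and then to apply a Schur complement to linearize the quadratic dependence of the resulting condition on the closed-loop responses $\bm{\Phi}_x,\bm{\Phi}_u$.

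First, I would rewrite the semi-infinite constraint \eqref{eq:safety_constraints_compact_Phi} row-by-row. Decomposing $[\bm{\Phi}_x;\bm{\Phi}_u]\bm{\delta}=\bm{\Phi}_0 x_0+\bm{\Phi}_w\mathbf{w}$, each row requires $\max_{\mathbf{w}\,:\,\mathbf{H}_w\mathbf{w}\le\mathbf{h}_w}$ of an affine functional of $\mathbf{w}$ to be bounded. Since $\mathbfcal{W}$ is a nonempty bounded polytope (so Slater's condition is trivially satisfied), LP strong duality introduces nonnegative dual variables that, collected column-wise into $\mathbf{Y}\ge 0$, yield exactly the coupled linear constraints \eqref{eq:sdp_dual_safety_constraints_1}--\eqref{eq:sdp_dual_safety_constraints_2}. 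This step is classical and already appeared, e.g., in \cite{goulart2007affine}.

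Second, I would lift the inner maximum by an epigraph variable. Writing $\tilde{\tau}=\tau+2\mathbf{h}_w^\top\bm{\eta}$ (anticipating the form of the objective \eqref{eq:objective_to_be_reformulated}), problem \eqref{prob:safe_regret_optimal_original} is equivalent to minimizing $\tilde{\tau}$ subject to $\max_{\mathbf{w}\in\mathbfcal{W}}\mathtt{Reg}_T(\bm{\Phi}_x,\bm{\Phi}_u,\gamma,\bm{\delta})\le\tilde{\tau}$. Because the QMI \eqref{eq:regret_objective_qmi} makes $\mathbf{w}\mapsto\mathtt{Reg}_T$ concave and the constraint $\mathbf{H}_w\mathbf{w}\le\mathbf{h}_w$ is linear with nonempty interior, Lagrangian strong duality holds for the inner problem. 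Introducing a multiplier $\bm{\eta}\ge 0$, the bound is equivalent to the existence of $\bm{\eta}\ge 0$ such that
\begin{equation*}
\tau-\mathtt{Reg}_T(\bm{\Phi}_x,\bm{\Phi}_u,\gamma,\bm{\delta})+2\bm{\eta}^\top\mathbf{H}_w\mathbf{w}\ge 0\quad\text{for all }\mathbf{w}\in\mathbb{R}^{nT}.
\end{equation*}

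Third, using the expansion of $\mathtt{Reg}_T$ already displayed before \eqref{eq:regret_objective_qmi}, I would recast this pointwise inequality as $[1;\mathbf{w}]^\top \mathbf{M}[1;\mathbf{w}]\ge 0$, where
\begin{equation*}
\mathbf{M}=\begin{bmatrix}\tau+x_0^\top\mathbf{C}_T^{[0,0]}x_0 & \bm{\eta}^\top\mathbf{H}_w+x_0^\top\mathbf{C}_T^{[0,1:T]}\\ \ast & \gamma^2\mathbf{I}+\mathbf{C}_T^{[1:T,1:T]}\end{bmatrix}-\begin{bmatrix}x_0^\top\bm{\Phi}_0^\top\\ \bm{\Phi}_w^\top\end{bmatrix}\bar{\mathbf{S}}\begin{bmatrix}\bm{\Phi}_0 x_0 & \bm{\Phi}_w\end{bmatrix},
\end{equation*}
so that the universal-in-$\mathbf{w}$ inequality is equivalent to $\mathbf{M}\succeq 0$. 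Since $\bar{\mathbf{S}}\succeq 0$, a Schur complement applied with respect to the identity block pushes the rank-one negative perturbation into an additional row and column featuring $\bar{\mathbf{S}}^{1/2}\bm{\Phi}_0 x_0$ and $\bar{\mathbf{S}}^{1/2}\bm{\Phi}_w$, yielding exactly $\bm{\Theta}(\bm{\Phi}_x,\bm{\Phi}_u,\gamma,x_0)\succeq 0$ in \eqref{eq:sdp_schur_constraints}. Combining with the safety dualization from the first step and with the remaining affine constraints \eqref{eq:sls_affine_subspace_constraints}, \eqref{eq:closed_loop_responses_causal_sparsities} proves the equivalence of \eqref{prob:safe_regret_optimal_original} and \eqref{prob:safe_regret_optimal_sdp}.

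The main obstacle will be ensuring that the dualization of the inner maximization is \emph{lossless}. The QMI \eqref{eq:regret_objective_qmi} is precisely what is needed here: without it, the objective is nonconcave in $\mathbf{w}$, and the Lagrangian dual could be strictly looser than the primal (the $\mathcal{S}$-procedure is not exact for multiple linear constraints on a nonconcave quadratic). Compactness of $\mathbfcal{W}$ and nonemptiness of its interior furnish both boundedness of the inner maximum and Slater's condition, so that the epigraph variable and the multiplier $\bm{\eta}$ are well defined and the reformulation is tight.
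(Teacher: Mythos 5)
Your proposal is correct and follows essentially the same route as the paper's proof: LP strong duality applied row-by-row to the robust safety constraint \eqref{eq:safety_constraints_compact_Phi} with the dual vectors collected into $\mathbf{Y}$, Lagrangian duality for the inner maximization over $\mathbfcal{W}$ (lossless thanks to the concavity certified by \eqref{eq:regret_objective_qmi} and Slater's condition), and Schur complements leading to \eqref{eq:sdp_schur_constraints}. The only cosmetic difference is the intermediate step—the paper solves the inner unconstrained problem explicitly via the pseudo-inverse of $\mathbf{L}$ together with a range condition and then applies the Schur complement twice, whereas you invoke the equivalence between nonnegativity of a quadratic in $\mathbf{w}$ and positive semidefiniteness of its homogenized bordered matrix—and, as the paper remarks, it is worth stating explicitly that \eqref{eq:regret_objective_qmi} can be omitted from \eqref{prob:safe_regret_optimal_sdp} because it is recovered as a principal submatrix of $\bm{\Theta}(\bm{\Phi}_x, \bm{\Phi}_u, \gamma, x_0) \succeq 0$.
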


Note that the semidefinite optimization problem \eqref{prob:safe_regret_optimal_sdp} can be efficiently solved using off-the-shelf interior point methods. Moreover, given an optimal solution $\{\bm{\Phi}_x^\star, \bm{\Phi}_u^\star\}$ to \eqref{prob:safe_regret_optimal_sdp}, the corresponding optimal state-feedback controller can then be reconstructed by $\mathbf{K}^\star = \bm{\Phi}_u^\star {\bm{\Phi}_x^\star}^{-1}$. Independently of how the disturbances are generated, this optimal safe control policy guarantees that the dynamic regret bound \eqref{eq:dynamic_regret_bound} holds with performance level $\gamma$. Furthermore, the receding horizon control law that results by repeatedly implementing the first $s \in \mathbb{I}_{[1, T]}$ optimal control actions is recursively feasible and satisfies \eqref{eq:infinite_horizon_dynamic_regret_bound} with performance level not greater than $\operatorname{max}(\gamma, \gamma_f)$.

To ensure that the regret-optimal policy synthesized at each planning stage through \eqref{prob:safe_regret_optimal_sdp} achieves the optimal performance level $\gamma_T^\star(x_t)$, one can search over $\gamma$ through bisection and solve several corresponding instances of \eqref{prob:safe_regret_optimal_sdp}. Clearly, this procedure comes with an increased computational cost. To mitigate this aspect, one can further assume that the energy of $\mathbf{w}$ is bounded by a constant $\sigma$ – while its specific realization remains arbitrary.  Using this assumption, our next proposition suggests a convex reformulation in the line of \cite{didier2022system} that jointly synthesizes a regret-optimal policy and an optimal performance level. In the interest of clarity, we point out that our main theoretical results hold irrespective of the bounded energy assumption on $\mathbf{w}$ over the interval $[0, T-1]$.
\begin{proposition}
    \label{prop:implementation_bounded_energy}
    Consider the evolution of the linear time-invariant system \eqref{eq:lti_system_dynamics} over a planning horizon of length $T \in \mathbb{N}$, and the constrained regret-optimal control problem:
    \begin{align}
    &~\min_{\bm{\Phi}_x,\bm{\Phi}_u} ~ \max_{\norm{\mathbf{w}}_2 \leq \sigma} ~ J_T(\bm{\pi}_T, \bm{\delta}) - J_T(\bm{\pi}^c_T, \bm{\delta}) + V_f(x_T)\label{prob:safe_regret_optimal_original_bounded_energy}\\
        &\st ~\eqref{eq:sls_affine_subspace_constraints}\,, \eqref{eq:regret_objective_qmi}\,, \eqref{eq:safety_constraints_compact_Phi}\,, \eqref{eq:closed_loop_responses_causal_sparsities}\,.\nonumber
    \end{align}
    Then, \eqref{prob:safe_regret_optimal_original_bounded_energy} is equivalent to the following convex optimization problem:
    \begin{subequations}
    \vspace{-.2cm}
    \label{prob:safe_regret_optimal_sdp_bounded_energy}
    \begin{align}
        &~\min_{\bm{\Phi}_x,\bm{\Phi}_u, \mathbf{Y}, \lambda, \gamma} ~ \gamma\\
        &\st~ \eqref{eq:sls_affine_subspace_constraints}\,, \eqref{eq:sdp_dual_safety_constraints_1}\,, \eqref{eq:sdp_dual_safety_constraints_2}\,, \eqref{eq:closed_loop_responses_causal_sparsities}\,, \nonumber \\
        & \qquad \lambda \geq 0\,, ~ \mathbf{Y}_{ij} \geq 0\,,\nonumber\\
        & \qquad \bm{\Xi}(\bm{\Phi}_x, \bm{\Phi}_u, \gamma, x_0)
         \succeq 0 \label{eq:sdp_schur_constraints_bounded_energy}\,,%
    \end{align}
    where $\bm{\Xi}(\bm{\Phi}_x, \bm{\Phi}_u, \gamma, x_0)$ is defined as
    \begin{equation*}
        \begin{bmatrix}
            x_0^\top \mathbf{C}_T^{[0,0]} x_0 + \gamma - \sigma \lambda & x_0^\top \mathbf{C}_T^{[0, 1:T]} & x_0^\top \bm{\Phi}_0^\top \bar{\mathbf{S}}^\frac{1}{2}\\
            \mathbf{C}_T^{[1:T, 0]} x_0 & \lambda \mathbf{I} + \mathbf{C}_T^{[1:T, 1:T]} & \bm{\Phi}_w^\top \bar{\mathbf{S}}^\frac{1}{2}\\
            \bar{\mathbf{S}}^\frac{1}{2} \bm{\Phi}_0 x_0 &
            \bar{\mathbf{S}}^\frac{1}{2} \bm{\Phi}_w & \mathbf{I}
        \end{bmatrix}\,.
    \end{equation*}
\end{subequations}
\end{proposition}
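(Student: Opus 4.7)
The plan is to reformulate \eqref{prob:safe_regret_optimal_original_bounded_energy} in two independent moves that closely follow the techniques of Proposition~\ref{prop:implementation_structured}, replacing only the treatment of the inner maximization, which now runs over a Euclidean ball rather than a polytope. First, since the robust safety constraint \eqref{eq:safety_constraints_compact_Phi} is identical to the one appearing in \eqref{prob:safe_regret_optimal_original} and does not involve the energy bound $\sigma$, I would reuse the LP-duality argument verbatim to obtain the non-negative dual multiplier $\mathbf{Y}$ and the two affine constraints \eqref{eq:sdp_dual_safety_constraints_1}-\eqref{eq:sdp_dual_safety_constraints_2}.

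The core move is handling the inner worst-case objective. Combining \eqref{eq:closed_loop_behavior_x_K}-\eqref{eq:closed_loop_behavior_u_K} with the terminal penalty $V_f$ and with \eqref{eq:clairvoyant_controller_cost_incurred}, the cost inside the max rewrites as the single quadratic form
\[
\bm{\delta}^\top\bigl(\bm{\Phi}^\top \bar{\mathbf{S}}\,\bm{\Phi} - \mathbf{C}_T\bigr)\bm{\delta}\,,\qquad \bm{\Phi} := \begin{bmatrix} \bm{\Phi}_x \\ \bm{\Phi}_u \end{bmatrix}\,.
\]
I would then introduce an epigraph variable $\gamma$ upper-bounding the worst-case value over $\|\mathbf{w}\|_2^2 \leq \sigma$. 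Since this is a quadratic implication driven by a single quadratic constraint admitting a strictly feasible interior point ($\mathbf{w}=0$), Yakubovich's S-lemma applies losslessly, yielding the existence of a scalar $\lambda \geq 0$ for which
\[
\gamma - \lambda\sigma + \lambda \|\mathbf{w}\|_2^2 - \bm{\delta}^\top\bigl(\bm{\Phi}^\top \bar{\mathbf{S}}\,\bm{\Phi} - \mathbf{C}_T\bigr)\bm{\delta} \geq 0
\]
holds for every $\mathbf{w}$. Viewing this as an indefinite quadratic form in $(x_0, \mathbf{w})$ jointly, non-negativity is equivalent to positive semidefiniteness of the associated $2 \times 2$ block symmetric matrix.

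Finally, I would linearize the only term that is nonlinear in the closed-loop responses, $-\bm{\Phi}^\top \bar{\mathbf{S}}\,\bm{\Phi}$, via a single Schur complement against an identity block weighted by $\bar{\mathbf{S}}^{1/2}$. This produces exactly the $3 \times 3$ block LMI \eqref{eq:sdp_schur_constraints_bounded_energy} with $\bm{\Xi}(\bm{\Phi}_x, \bm{\Phi}_u, \gamma, x_0)$. I would also observe that the $(2,2)$ block of the resulting LMI, namely $\lambda \mathbf{I} + \mathbf{C}_T^{[1:T, 1:T]} - \bm{\Phi}_w^\top \bar{\mathbf{S}}\,\bm{\Phi}_w \succeq 0$, plays the same role as the concavity QMI \eqref{eq:regret_objective_qmi} with $\gamma^2$ replaced by the S-procedure multiplier $\lambda$; hence \eqref{eq:regret_objective_qmi} is subsumed by $\bm{\Xi}\succeq 0$ and need not be enumerated separately among the SDP constraints. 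Minimizing $\gamma$ in \eqref{prob:safe_regret_optimal_sdp_bounded_energy} therefore coincides with minimizing the worst-case value of the original inner objective, and the two problems share the same optimal value and minimizers.

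I expect the main obstacle to be justifying losslessness of the S-procedure step together with the joint minimization over $(\bm{\Phi}_x, \bm{\Phi}_u, \lambda, \gamma)$: this rests on the exactness of the S-lemma for a single quadratic inequality with a strictly feasible interior point, which is immediate here since $\sigma > 0$. The remaining ingredients --- the SLS affine subspace constraint \eqref{eq:sls_affine_subspace_constraints}, the causal sparsity pattern \eqref{eq:closed_loop_responses_causal_sparsities}, and the polytopic safety reformulation --- carry over without modification from Proposition~\ref{prop:implementation_structured}.
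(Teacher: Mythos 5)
Your proposal is correct and follows essentially the same route as the paper: the paper dualizes the ball-constrained inner maximization using strong duality for a single quadratic constraint under Slater's condition (which is precisely the exact S-procedure you invoke), obtains the same $2\times 2$ block matrix inequality with multiplier $\lambda$, and then applies the Schur complement to arrive at $\bm{\Xi} \succeq 0$, while carrying over the LP-duality treatment of \eqref{eq:safety_constraints_compact_Phi} from Proposition~\ref{prop:implementation_structured} verbatim. Your side remark that \eqref{eq:regret_objective_qmi} is subsumed by the $(2,2)$ principal submatrix of \eqref{eq:sdp_schur_constraints_bounded_energy} likewise matches the paper's own observation in the proof of Proposition~\ref{prop:implementation_structured}.
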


\subsection{Numerical Results}
\begin{table*}[hbt]
    \caption{Performance comparison: average control cost per unit of disturbance energy incurred by the $\mathcal{H}_2$, $\mathcal{H}_\infty$, and regret-optimal ($\mathcal{R}$) receding horizon safe control policy over 10 different disturbance realizations. For each disturbance profile, the minimum cost incurred by any of these policies is highlighted in \textcolor{mygreen}{green}.}
    \label{table:results}
    \vspace{3pt}
    \centering
    \hspace{-4.5cm} $\mathcal{H}_2$ \hspace{8.5cm} $\mathcal{H}_\infty$\vspace{1mm}\\
    \begin{tabular}{c|c}
        \hline
        $\mathbf{w}$ & $s = 1$\\
        \hline
        \rowcolor{lightgray} $\mathcal{N}(0,1)$ & {\color{mygreen} \textbf{8.42}}\\
        $\mathcal{U}_{[0.5, 1]}$ & 14.72\\
        \rowcolor{lightgray} $1$ & 18.56\\
        $\operatorname{ramp}$ & 12.54\\
        \rowcolor{lightgray} $\operatorname{sin}$ & 14.38\\
        $\operatorname{step+sin}$ & 10.19\\
        \rowcolor{lightgray} $\operatorname{sawtooth}$ & 12.15\\
        $\operatorname{stairs}$ & 15.63\\
        \hline
    \end{tabular}
    \hspace{5mm}
    \begin{tabular}{c|cccccccccc}
        \hline
        $\mathbf{w}$ & $s = 1$ & $s = 2$ & $s = 3$ & $s = 4$ & $s = 5$ & $s = 6$ & $s = 10$ & $s = 12$ & $s = 15$ & $s = 20$\\
        \hline
        \rowcolor{lightgray} $\mathcal{N}(0,1)$        &  9.02 & 10.07 & 10.73 & 10.62 & 10.65 & 10.88 & 10.93 & 11.08 & 11.07 & 11.17\\
        $\mathcal{U}_{[0.5, 1]}$  & 13.20 & 12.70 & 13.16 & 14.13 & 14.89 & 15.53 & 17.60 & 18.28 & 19.10 & 18.75\\
        \rowcolor{lightgray} $1$                       & 16.56 & 16.49 & 17.49 & 18.68 & 19.70 & 20.57 & 23.13 & 24.07 & 24.94 & 23.61\\
        $\operatorname{ramp}$     & 11.14 & 10.64 & 11.03 & 11.67 & 12.28 & 12.78 & 14.28 & 14.86 & 15.47 & 15.02\\
        \rowcolor{lightgray} $\operatorname{sin}$      & 13.38 & 12.88 & 13.34 & 14.58 & 15.14 & 15.87 & 19.55 & 17.55 & 17.50 & 16.99\\
        $\operatorname{step+sin}$ &  9.29 &  9.08 &  9.50 &  9.93 & 10.19 & 11.01 & 13.02 & 13.70 & 12.16 & 13.74\\
        \rowcolor{lightgray} $\operatorname{sawtooth}$ & 11.62 & 11.32 & 12.60 & 13.11 & 13.67 & 14.40 & 16.40 & 15.69 & 16.16 & 13.58\\
        $\operatorname{stairs}$   & 14.05 & 14.14 & 14.88 & 16.13 & 17.27 & 17.23 & 21.34 & 21.49 & 20.68 & 21.43\\
        \hline
    \end{tabular}
    \vspace{3mm}\\
    $\mathcal{R}$\vspace{1mm}\\
    \begin{tabular}{c|cccccccccc}
        \hline
        $\mathbf{w}$ & $s = 1$ & $s = 2$ & $s = 3$ & $s = 4$ & $s = 5$ & $s = 6$ & $s = 10$ & $s = 12$ & $s = 15$ & $s = 20$\\
        \hline
        \rowcolor{lightgray} $\mathcal{N}(0,1)$        & {\color{mygreen} \textbf{8.42}} & 11.30 & 12.19 & 12.47 & 12.38 & 12.65 & 12.39 & 12.47 & 12.38 & 12.55\\
        $\mathcal{U}_{[0.5, 1]}$  & 14.72 & 10.74 & 8.96 & 8.37 & 8.07 & 7.85 & 7.31 & 7.21 & {\color{mygreen} \textbf{7.10}} & 7.65\\
        \rowcolor{lightgray} $1$                       & 18.56 & 12.80 & 10.25 &  9.46 &  9.10 &  8.67 &  7.99 &  7.84 &  {\color{mygreen} \textbf{7.68}} &  8.42\\
        $\operatorname{ramp}$     & 12.54 &  9.00 &  7.39 &  6.83 &  6.59 &  6.33 &  5.96 &  5.86 &  {\color{mygreen} \textbf{5.75}} &  6.12\\
        \rowcolor{lightgray} $\operatorname{sin}$      & 14.39 & 10.87 &  9.37 &  8.98 &  8.85 &  8.62 &  {\color{mygreen} \textbf{7.20}} &  8.29 &  8.26 &  7.66\\
        $\operatorname{step+sin}$ & 10.19 &  7.86 &  6.87 &  6.50 &  6.43 &  6.31 &  {\color{mygreen} \textbf{5.71}} &  6.01 &  6.25 &  5.74\\
        \rowcolor{lightgray} $\operatorname{sawtooth}$ & 12.15 & 9.83 & 13.12 &  8.07 &  7.84 & 12.20 &  {\color{mygreen} \textbf{7.52}} & 11.18 & 11.21 &  7.65\\
        $\operatorname{stairs}$   & 15.63 & 11.04 &  9.40 &  8.38 &  8.00 &  8.25 &  {\color{mygreen} \textbf{7.14}} &  7.29 &  7.38 &  7.18\\
        \hline
    \end{tabular}
\end{table*}
We consider a linear time-invariant system \eqref{eq:lti_system_dynamics} with
\begin{equation*}
    A = \rho \begin{bmatrix}
        0.7 & 0.2 & 0\\
        0.3 & 0.7 & -0.1\\
        0 & -0.2 & 0.8
    \end{bmatrix}\,, ~~
    B = \begin{bmatrix}
        1 & 0.2\\
        2 & 0.3\\
        1.5 & 0.5
    \end{bmatrix}\,,
\end{equation*}
with spectral radius $\rho = 0.7$.\footnote{We choose an open-loop stable system to ease computation of terminal constrained sets using Algorithm~1 in \cite{rakovic2005invariant}. Note that our theoretical results hold independently of this choice.} We randomly select the initial condition $x_0 = \begin{bmatrix}
    -3.08 & 1.22 & -0.62
\end{bmatrix}^\top$. Further, we define the safe set $\mathcal{Z} = \{(x, u) \in \mathbb{R}^{5} : \norm{x}_\infty \leq 3.5\,, ~ \norm{u}_\infty \leq 2\}$, and we enforce robust constraint satisfaction for all disturbance realizations taking values in $\mathcal{W} = \{w \in \mathbb{R}^3 : -1 \leq \norm{w}_\infty \leq 1\}$. We consider a planning horizon of length $T = 20$ and a simulation of length $N = 60$. Last, we set the weight matrices to $Q = I_3$ and $R = I_2$ for all experiments.

Next, we synthesize different stabilizing receding horizon control policies following the $\mathcal{H}_2$, $\mathcal{H}_\infty$ and regret minimization philosophies; we discuss implementation details below.

\emph{1) Schemes based on $\mathcal{H}_2$ and $\mathcal{H}_\infty$ objectives:} We compute terminal cost matrices $P_2$ and $P_\infty$ for the expected and worst-case objectives, respectively, by solving the sign-definite algebraic Riccati equation
\begin{equation*}
    P_2 = Q + A^\top P_2 A - A^\top P_2 B (R + B^\top P_2 B)^{-1} B^\top P_2 A\,,
\end{equation*}
and the sign-indefinite algebraic Riccati equation \eqref{eq:sign_indefinite_dare}, where we iteratively determine the optimal value of $\gamma_f$ via bisection with a tolerance of $0.001$. From these two solutions, we derive auxiliary $\mathcal{H}_2$ and $\mathcal{H}_\infty$ optimal controllers as $K_2 = (R + B^\top P B)^{-1} B^\top P_2 A$ and $K_\infty = -(R + B^\top \bar{P}_\infty B)^{-1} B^\top P_\infty A$, where $\bar{P}_\infty = P_\infty + P_\infty (\gamma_f^2 I - P_\infty)^{-1} P_\infty$ in accordance to Assumption~\ref{ass:terminal_cost}.

Concerning the design of the terminal  sets $\mathcal{X}_{f,2}$ and $\mathcal{X}_{f,\infty}$, we endow each scheme with a robust positively invariant polyhedral outer-approximation of the minimal robust positively invariant set under the corresponding local controller $K_2$ or $K_\infty$. To do so, we follow Algorithm~1 in \cite{rakovic2005invariant}, with approximation parameter $\epsilon = 0.5$. Importantly, we verify numerically that $\mathcal{X}_{f,2}$ and $\mathcal{X}_{f,\infty}$ constitute valid terminal sets, since each of them is constraint admissible under the associated local controller as required by Assumption~\ref{ass:terminal_set}.

We remark that, by construction, the receding horizon control laws with $\mathcal{H}_2$ and $\mathcal{H}_\infty$ objectives are recursively feasible and render the closed-loop system input-to-state and $\ell_2$ stable, respectively \cite[Th. 4.20, Th. 5.15]{goulart2007affine}.

\emph{2) Schemes based on regret minimization:} In line with Assumptions~\ref{ass:terminal_cost}-\ref{ass:terminal_set}, we employ the terminal ingredients $P_\infty$ and $\mathcal{X}_{f, \infty}$ derived from the auxiliary $\mathcal{H}_\infty$ control policy previously computed. Moreover, to evaluate the term $\mathbf{C}_T$ that appears in \eqref{prob:safe_regret_optimal_sdp_bounded_energy}, we first compute the closed-loop responses associated with the finite horizon clairvoyant optimal policy $\bm{\pi}^c_T$ by solving the quadratic optimization problem \eqref{eq:clairvoyant_policy_sls}. In \eqref{eq:sdp_schur_constraints_bounded_energy}, we use $\sigma = 1$.

\emph{3) Performance comparison:}
We now compare the average control cost incurred by the receding horizon $\mathcal{H}_2$, $\mathcal{H}_\infty$, and regret-optimal ($\mathcal{R}$) receding horizon safe control policies. We draw the disturbance realizations according to a variety of stochastic and deterministic profiles, including a scaled Gaussian distribution and moving average sinusoidal signals. For each of these scenarios, we simulate the closed-loop system behavior when the number $s$ of control actions applied before the optimization is repeated varies from $1$ to $T$. We collect our results in Table~\ref{table:results};\footnote{The code that reproduces our numerical examples is available at \href{https://github.com/DecodEPFL/RHSafeMinRegret}{https://github.com/DecodEPFL/RHSafeMinRegret}. Please refer to our source code for a precise definition of the disturbance profiles that appear in Table~\ref{table:results}.} note that we only report a single column for the receding horizon scheme with $\mathcal{H}_2$ objective, as we observe that the solution rapidly converges to the unconstrained optimal controller $K_2$.

We verify that, as predicted by \cite[Th. 4.20, Prop. 5.14]{goulart2007affine} for receding horizon $\mathcal{H}_2$ and $\mathcal{H}_\infty$ control, respectively, and by Theorem~\ref{th:recursive_feasibility_terminal_rpi} for the case of regret minimization, all predictive control schemes are recursively feasible. Also, we observe numerically that $\mathcal{X}_{f,\infty} \subset \mathcal{X}_{f,2}$, meaning that the receding horizon $\mathcal{H}_2$ policy yields the largest feasible set and domain of attraction.

Regarding performance, we observe that using an $\mathcal{H}_2$ objective leads to the lowest control costs when the disturbance follows a Gaussian distribution. This is expected since this is precisely the class of disturbances that $\mathcal{H}_2$ controllers are designed to counteract. Instead, despite having imposed more challenging terminal constraints, minimizing regret can yield superior performance for other practically relevant disturbance profiles, such as piece-wise constant or sinusoidal signals, that poorly fit classical design assumptions.

We found that an advantage of regret-optimal policies facing non-stochastic disturbances is that a consistenly lower control cost is attained despite optimizing less frequently, i.e., when $s \in \mathbb{I}_{[1, T]}$ is large. This result may appear counter-intuitive at first; the explanation lies in the inherent capability of regret-optimal policies to adapt to heterogeneous disturbance sequences. Indeed, unlike $\mathcal{H}_2/\mathcal{H}_\infty$-optimal policies, regret-optimal policies exploit the full history of signals even in the unconstrained case in order to favour adaptivity. In the spirit of event-triggered predictive control \cite{lehmann2013event}, we hope that regret-optimal predictive control will especially thrive in applications where communication rates or energy consumption are limited, requiring a significant reduction of the frequency at which the planning optimization problem has to be solved.

On the other extreme, when $s = 1$, we numerically observe that the performance matches that of a receding horizon architecture with $\mathcal{H}_2$ objective. This consideration is consistent with the result of \cite[Th. 4]{sabag2021regret}, which shows that unconstrained regret-optimal policies have a state-feedback law of the $\mathcal{H}_2$ controller plus a correction term obtained by linearly combining the observed disturbance realizations. By optimizing at each time-step without storing additional information about the entire sequence of observed disturbance realizations other than that provided by the state vector at the current time instant, the above-mentioned correction term becomes zero. On the other hand, introducing memory can prove challenging as the state augmentation procedure proposed in \cite{goel2023regret} to solve a finite horizon regret-optimal control problem yields time-varying dynamics governed by a series of Riccati recursions, and we leave this extension as an interesting direction for future research.

\emph{4) Robustness of performance comparison:}
To confirm the robustness of our conclusions, we perform a second set of experiments by focusing on a family of sinusoidal disturbance signals $w_t$. In particular, for $i \in \mathbb{I}_3$, we construct each individual disturbance component $w_t^i = \operatorname{sin}(\omega t + \varphi)$ by choosing 10 different values for the angular frequency $\omega$ and the phase $\varphi$ equally spaced in the intervals $[\frac{3\pi}{N}, \frac{12\pi}{N}] = [0.1571, 0.6283]$ and $[0, 2\pi]$, respectively. We plot the results obtained by simulating the closed-loop system behavior in Figure~\ref{fig:sin_varying}. We visualize the standard deviation of the energy-normalized average control cost $\bar{J}$ across different angular frequencies or phases by means of shaded tubes centered around the mean values. We observe that changing the angular frequency impacts the realized closed-loop performance more, yet our receding horizon control scheme based on regret minimization can outperform classical architectures consistently across nearly all sinusoidal disturbances. Figure~\ref{fig:sin_varying} further validates that the adaptation capabilities inherent to regret-optimal policies can prove key when the disturbances are non-stochastic. In fact, as $s$ decreases, we note that our scheme again approaches the performance of the receding horizon $\mathcal{H}_2$ control law, while the best performance are attained by adopting an $\mathcal{H}_\infty$ objective.

\begin{figure*}[htb]
    \begin{subfigure}{\columnwidth}
      \centering
    \includegraphics[width=.95\columnwidth]{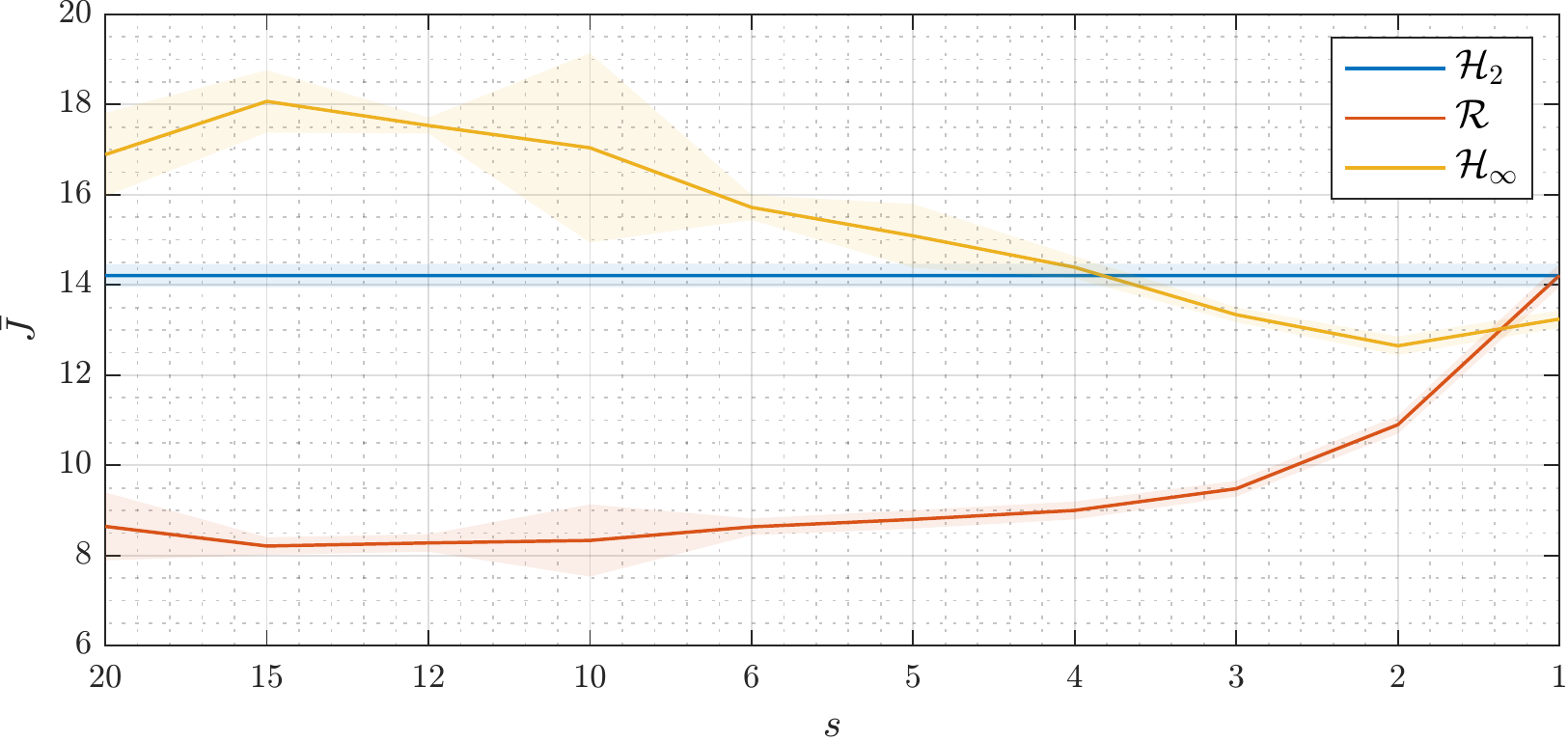}
        \caption{\centering Effect of varying the phase $\varphi$ in $[0, 2\pi]$.}
        \label{fig:sin_varying_phase}
    \end{subfigure}
    \begin{subfigure}{\columnwidth}
        \centering
        \includegraphics[width=.95\columnwidth]{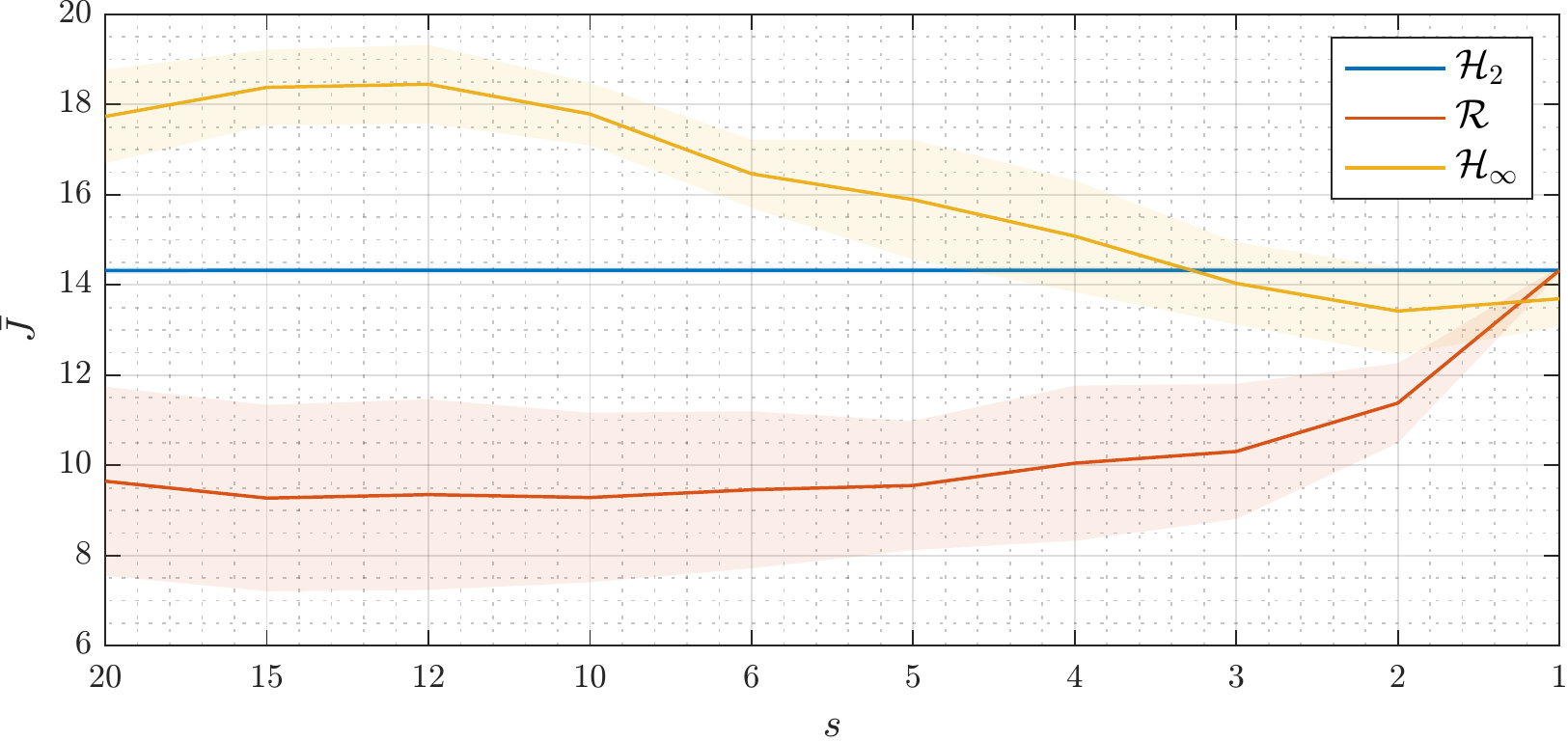}
        \caption{\centering Effect of varying the angular frequency $\omega$ in $[\frac{3\pi}{N}, \frac{12\pi}{N}]$.}
        \label{fig:sin_varying_freq}
    \end{subfigure}
    \caption{Normalized control cost incurred by the $\mathcal{H}_2$, $\mathcal{H}_\infty$, and regret-optimal ($\mathcal{R}$) receding horizon safe control policy in correspondence with sinusoidal disturbance profiles with different angular frequencies and phases: shaded areas denote the standard deviations around mean values.}
    \label{fig:sin_varying}
\end{figure*}

\section{Conclusion}
\label{sec:conclusion}
We studied control of constrained linear systems with bounded additive disturbances from the perspective of receding horizon regret minimization. By imposing terminal constraints derived from an auxiliary unconstrained $\mathcal{H}_\infty$ control law, we have formally shown that repeatedly minimizing dynamic regret guarantees recursive feasibility of the predictive control scheme and $\ell_2$ stability of the resulting closed-loop system. In the process of establishing these theoretical certificates, we have derived novel monotonicity properties of the clairvoyant optimal policy; an interesting direction for future research is to combine these with a precise analysis of the regret-to-go at each time-step to derive tighter infinite horizon regret guarantees. %
Additionally, future work includes developing model-free solutions, addressing computational complexity challenges for real-time implementation, deriving analogous results for estimation, and studying the interplay between different competitive metrics for nonlinear systems.

bibliographystyle{IEEEtran}
\bibliography{references}

% Generated by IEEEtran.bst, version: 1.14 (2015/08/26)
\begin{thebibliography}{10}
\providecommand{\url}[1]{#1}
\csname url@samestyle\endcsname
\providecommand{\newblock}{\relax}
\providecommand{\bibinfo}[2]{#2}
\providecommand{\BIBentrySTDinterwordspacing}{\spaceskip=0pt\relax}
\providecommand{\BIBentryALTinterwordstretchfactor}{4}
\providecommand{\BIBentryALTinterwordspacing}{\spaceskip=\fontdimen2\font plus
\BIBentryALTinterwordstretchfactor\fontdimen3\font minus
  \fontdimen4\font\relax}
\providecommand{\BIBforeignlanguage}[2]{{%
\expandafter\ifx\csname l@#1\endcsname\relax
\typeout{** WARNING: IEEEtran.bst: No hyphenation pattern has been}%
\typeout{** loaded for the language `#1'. Using the pattern for}%
\typeout{** the default language instead.}%
\else
\language=\csname l@#1\endcsname
\fi
#2}}
\providecommand{\BIBdecl}{\relax}
\BIBdecl

\bibitem{hassibi1999indefinite}
B.~Hassibi, A.~H. Sayed, and T.~Kailath, \emph{Indefinite-quadratic estimation
  and control: a unified approach to $\mathcal{H}_2$ and $\mathcal{H}_\infty$
  theories}.\hskip 1em plus 0.5em minus 0.4em\relax SIAM, 1999.

\bibitem{goel2022power}
G.~Goel and B.~Hassibi, ``The power of linear controllers in lqr control,'' in
  \emph{2022 IEEE 61st Conference on Decision and Control (CDC)}.\hskip 1em
  plus 0.5em minus 0.4em\relax IEEE, 2022, pp. 6652--6657.

\bibitem{karapetyan2022regret}
A.~Karapetyan, A.~Iannelli, and J.~Lygeros, ``On the regret of
  $\mathcal{H}_\infty$ control,'' in \emph{2022 IEEE 61st Conference on
  Decision and Control (CDC)}.\hskip 1em plus 0.5em minus 0.4em\relax IEEE,
  2022, pp. 6181--6186.

\bibitem{jacobson1973optimal}
D.~Jacobson, ``Optimal stochastic linear systems with exponential performance
  criteria and their relation to deterministic differential games,'' \emph{IEEE
  Transactions on Automatic control}, vol.~18, no.~2, pp. 124--131, 1973.

\bibitem{bernstein1988lqg}
D.~S. Bernstein and W.~M. Haddad, ``{LQG} control with an $\mathcal{H}_\infty$
  performance bound: A {R}iccati equation approach,'' in \emph{1988 American
  Control Conference}.\hskip 1em plus 0.5em minus 0.4em\relax IEEE, 1988, pp.
  796--802.

\bibitem{doyle1989optimal}
J.~Doyle, K.~Zhou, and B.~Bodenheimer, ``Optimal control with mixed
  $\mathcal{H}_2$ and $\mathcal{H}_\infty$ performance objectives,'' in
  \emph{1989 American Control Conference}.\hskip 1em plus 0.5em minus
  0.4em\relax IEEE, 1989, pp. 2065--2070.

\bibitem{rotea1991h2}
M.~A. Rotea and P.~P. Khargonekar, ``$\mathcal{H}_2$-optimal control with an
  $\mathcal{H}_\infty$-constraint the state feedback case,'' \emph{Automatica},
  vol.~27, no.~2, pp. 307--316, 1991.

\bibitem{scherer1995mixed}
C.~Scherer, ``Mixed $\mathcal{H}_2 {/} \mathcal{H}_\infty$ control,''
  \emph{Trends in control}, pp. 173--216, 1995.

\bibitem{lee2022performance}
B.~D. Lee, T.~T. Zhang, H.~Hassani, and N.~Matni, ``Performance-robustness
  tradeoffs in adversarially robust linear-quadratic control,'' in \emph{2022
  IEEE 61st Conference on Decision and Control (CDC)}.\hskip 1em plus 0.5em
  minus 0.4em\relax IEEE, 2022, pp. 3416--3423.

\bibitem{shalev2011online}
S.~Shalev-Shwartz \emph{et~al.}, ``Online learning and online convex
  optimization,'' \emph{Foundations and trends in Machine Learning}, vol.~4,
  no.~2, pp. 107--194, 2011.

\bibitem{hazan2016introduction}
E.~Hazan, ``Introduction to online convex optimization,'' \emph{Foundations and
  Trends{\textregistered} in Optimization}, vol.~2, no. 3-4, pp. 157--325,
  2016.

\bibitem{lattimore2020bandit}
T.~Lattimore and C.~Szepesv{\'a}ri, \emph{Bandit algorithms}.\hskip 1em plus
  0.5em minus 0.4em\relax Cambridge University Press, 2020.

\bibitem{abbasi2011regret}
Y.~Abbasi-Yadkori and C.~Szepesv{\'a}ri, ``Regret bounds for the adaptive
  control of linear quadratic systems,'' in \emph{Proceedings of the 24th
  Annual Conference on Learning Theory}.\hskip 1em plus 0.5em minus 0.4em\relax
  JMLR Workshop and Conference Proceedings, 2011, pp. 1--26.

\bibitem{dean2018regret}
S.~Dean, H.~Mania, N.~Matni, B.~Recht, and S.~Tu, ``Regret bounds for robust
  adaptive control of the linear quadratic regulator,'' in \emph{Advances in
  Neural Information Processing Systems}, 2018, pp. 4188--4197.

\bibitem{cohen2019learning}
A.~Cohen, T.~Koren, and Y.~Mansour, ``Learning linear-quadratic regulators
  efficiently with only $\sqrt{T}$ regret,'' in \emph{International Conference
  on Machine Learning}.\hskip 1em plus 0.5em minus 0.4em\relax PMLR, 2019, pp.
  1300--1309.

\bibitem{lale2020logarithmic}
S.~Lale, K.~Azizzadenesheli, B.~Hassibi, and A.~Anandkumar, ``Logarithmic
  regret bound in partially observable linear dynamical systems,''
  \emph{Advances in Neural Information Processing Systems}, vol.~33, pp.
  20\,876--20\,888, 2020.

\bibitem{agarwal2019online}
N.~Agarwal, B.~Bullins, E.~Hazan, S.~Kakade, and K.~Singh, ``Online control
  with adversarial disturbances,'' in \emph{International Conference on Machine
  Learning}.\hskip 1em plus 0.5em minus 0.4em\relax PMLR, 2019, pp. 111--119.

\bibitem{foster2020logarithmic}
D.~Foster and M.~Simchowitz, ``Logarithmic regret for adversarial online
  control,'' in \emph{International Conference on Machine Learning}.\hskip 1em
  plus 0.5em minus 0.4em\relax PMLR, 2020, pp. 3211--3221.

\bibitem{hazan2020nonstochastic}
E.~Hazan, S.~Kakade, and K.~Singh, ``The nonstochastic control problem,'' in
  \emph{Algorithmic Learning Theory}.\hskip 1em plus 0.5em minus 0.4em\relax
  PMLR, 2020, pp. 408--421.

\bibitem{simchowitz2020improper}
M.~Simchowitz, K.~Singh, and E.~Hazan, ``Improper learning for non-stochastic
  control,'' in \emph{Conference on Learning Theory}.\hskip 1em plus 0.5em
  minus 0.4em\relax PMLR, 2020, pp. 3320--3436.

\bibitem{hazan2009efficient}
E.~Hazan and C.~Seshadhri, ``Efficient learning algorithms for changing
  environments,'' in \emph{Proceedings of the 26th annual international
  conference on machine learning}, 2009, pp. 393--400.

\bibitem{goel2023regret}
G.~Goel and B.~Hassibi, ``Regret-optimal estimation and control,'' \emph{IEEE
  Transactions on Automatic Control}, 2023, doi:
  \href{https://ieeexplore.ieee.org/document/10061197}{10.1109/TAC.2023.3253304}.

\bibitem{zinkevich2003online}
M.~Zinkevich, ``Online convex programming and generalized infinitesimal
  gradient ascent,'' in \emph{Proceedings of the 20th international conference
  on machine learning}, 2003, pp. 928--936.

\bibitem{zhao2020dynamic}
P.~Zhao, Y.-J. Zhang, L.~Zhang, and Z.-H. Zhou, ``Dynamic regret of convex and
  smooth functions,'' \emph{Advances in Neural Information Processing Systems},
  vol.~33, pp. 12\,510--12\,520, 2020.

\bibitem{zhou2023efficient}
H.~Zhou, Z.~Xu, and V.~Tzoumas, ``Efficient online learning with memory via
  frank-wolfe optimization: Algorithms with bounded dynamic regret and
  applications to control,'' \emph{arXiv preprint arXiv:2301.00497}, 2023.

\bibitem{sabag2021regret}
O.~Sabag, G.~Goel, S.~Lale, and B.~Hassibi, ``Regret-optimal full-information
  control,'' \emph{arXiv preprint arXiv:2105.01244}, 2021.

\bibitem{nehari1957bounded}
Z.~Nehari, ``On bounded bilinear forms,'' \emph{Annals of Mathematics}, pp.
  153--162, 1957.

\bibitem{sabag2022optimal}
O.~Sabag, S.~Lale, and B.~Hassibi, ``Optimal competitive-ratio control,''
  \emph{arXiv preprint arXiv:2206.01782}, 2022.

\bibitem{martin2022safe}
A.~Martin, L.~Furieri, F.~D{\"o}rfler, J.~Lygeros, and G.~Ferrari-Trecate,
  ``Safe control with minimal regret,'' in \emph{Learning for Dynamics and
  Control Conference}.\hskip 1em plus 0.5em minus 0.4em\relax PMLR, 2022, pp.
  726--738.

\bibitem{wang2019system}
Y.-S. Wang, N.~Matni, and J.~C. Doyle, ``A system-level approach to controller
  synthesis,'' \emph{IEEE Transactions on Automatic Control}, vol.~64, no.~10,
  pp. 4079--4093, 2019.

\bibitem{didier2022system}
A.~Didier, J.~Sieber, and M.~N. Zeilinger, ``A system level approach to regret
  optimal control,'' \emph{IEEE Control Systems Letters}, vol.~6, pp.
  2792--2797, 2022.

\bibitem{didier2022generalised}
A.~Didier and M.~N. Zeilinger, ``Generalised regret optimal controller
  synthesis for constrained systems,'' \emph{arXiv preprint arXiv:2211.08101},
  2022.

\bibitem{martin2022follow}
A.~Martin, L.~Furieri, F.~D{\"o}rfler, J.~Lygeros, and G.~Ferrari-Trecate,
  ``Follow the clairvoyant: an imitation learning approach to optimal
  control,'' \emph{arXiv preprint arXiv:2211.07389}, 2022.

\bibitem{martin2023regret}
------, ``Regret optimal control for uncertain stochastic systems,''
  \emph{arXiv preprint arXiv:2304.14835}, 2023.

\bibitem{rawlings2017model}
J.~B. Rawlings, D.~Q. Mayne, and M.~Diehl, \emph{Model predictive control:
  theory, computation, and design}.\hskip 1em plus 0.5em minus 0.4em\relax Nob
  Hill Publishing Madison, 2017, vol.~2.

\bibitem{mayne2000constrained}
D.~Q. Mayne, J.~B. Rawlings, C.~V. Rao, and P.~O. Scokaert, ``Constrained model
  predictive control: Stability and optimality,'' \emph{Automatica}, vol.~36,
  no.~6, pp. 789--814, 2000.

\bibitem{kim2004disturbance}
K.~B. Kim, ``Disturbance attenuation for constrained discrete-time systems via
  receding horizon controls,'' \emph{IEEE Transactions on Automatic Control},
  vol.~49, no.~5, pp. 797--801, 2004.

\bibitem{goulart2009control}
P.~J. Goulart, E.~C. Kerrigan, and T.~Alamo, ``Control of constrained
  discrete-time systems with bounded $\ell_2$ gain,'' \emph{IEEE Transactions
  on Automatic Control}, vol.~54, no.~5, pp. 1105--1111, 2009.

\bibitem{goulart2006optimization}
P.~J. Goulart, E.~C. Kerrigan, and J.~M. Maciejowski, ``Optimization over state
  feedback policies for robust control with constraints,'' \emph{Automatica},
  vol.~42, no.~4, pp. 523--533, 2006.

\bibitem{youla1976modern}
D.~Youla, H.~Jabr, and J.~Bongiorno, ``Modern {W}iener-{H}opf design of optimal
  controllers--part {II}: The multivariable case,'' \emph{IEEE Transactions on
  Automatic Control}, vol.~21, no.~3, pp. 319--338, 1976.

\bibitem{zheng2020equivalence}
Y.~Zheng, L.~Furieri, A.~Papachristodoulou, N.~Li, and M.~Kamgarpour, ``On the
  equivalence of {Y}oula, system-level, and input--output parameterizations,''
  \emph{IEEE Transactions on Automatic Control}, vol.~66, no.~1, pp. 413--420,
  2020.

\bibitem{furieri2019input}
L.~Furieri, Y.~Zheng, A.~Papachristodoulou, and M.~Kamgarpour, ``An
  input--output parametrization of stabilizing controllers: Amidst youla and
  system level synthesis,'' \emph{IEEE Control Systems Letters}, vol.~3, no.~4,
  pp. 1014--1019, 2019.

\bibitem{furieri2023near}
L.~Furieri, B.~Guo, A.~Martin, and G.~Ferrari-Trecate, ``Near-optimal design of
  safe output-feedback controllers from noisy data,'' \emph{IEEE Transactions
  on Automatic Control}, vol.~68, no.~5, pp. 2699--2714, 2023.

\bibitem{james1995robust}
M.~R. James and J.~S. Baras, ``Robust $h_\infty$ output feedback control for
  nonlinear systems,'' \emph{IEEE Transactions on Automatic Control}, vol.~40,
  no.~6, pp. 1007--1017, 1995.

\bibitem{green2012linear}
M.~Green and D.~J. Limebeer, \emph{Linear robust control}.\hskip 1em plus 0.5em
  minus 0.4em\relax Courier Corporation, 2012.

\bibitem{bacsar2008h}
T.~Ba{\c{s}}ar and P.~Bernhard, \emph{$\mathcal{H}_\infty$ optimal control and
  related minimax design problems: a dynamic game approach}.\hskip 1em plus
  0.5em minus 0.4em\relax Springer Science \& Business Media, 2008.

\bibitem{karapetyan2022implications}
A.~Karapetyan, A.~Tsiamis, E.~C. Balta, A.~Iannelli, and J.~Lygeros,
  ``Implications of regret on stability of linear dynamical systems,''
  \emph{arXiv preprint arXiv:2211.07411}, 2022.

\bibitem{nonhoff2022relation}
M.~Nonhoff and M.~A. M{\"u}ller, ``On the relation between dynamic regret and
  closed-loop stability,'' \emph{arXiv preprint arXiv:2209.05964}, 2022.

\bibitem{goulart2007affine}
P.~J. Goulart, ``Affine feedback policies for robust control with
  constraints,'' Ph.D. dissertation, University of Cambridge, 2007.

\bibitem{rakovic2005invariant}
S.~V. Rakovic, E.~C. Kerrigan, K.~I. Kouramas, and D.~Q. Mayne, ``Invariant
  approximations of the minimal robust positively invariant set,'' \emph{IEEE
  Transactions on Automatic Control}, vol.~50, no.~3, pp. 406--410, 2005.

\bibitem{lehmann2013event}
D.~Lehmann, E.~Henriksson, and K.~H. Johansson, ``Event-triggered model
  predictive control of discrete-time linear systems subject to disturbances,''
  in \emph{2013 European Control Conference (ECC)}.\hskip 1em plus 0.5em minus
  0.4em\relax IEEE, 2013, pp. 1156--1161.

\bibitem{rockafellar2009variational}
R.~T. Rockafellar and R.~J.-B. Wets, \emph{Variational analysis}.\hskip 1em
  plus 0.5em minus 0.4em\relax Springer Science \& Business Media, 2009, vol.
  317.

\bibitem{boyd2004convex}
S.~Boyd and L.~Vandenberghe, \emph{Convex optimization}.\hskip 1em plus 0.5em
  minus 0.4em\relax Cambridge university press, 2004.

\end{thebibliography}

\appendix
\allowdisplaybreaks

\subsection{Proof of Lemma~\ref{le:clairvoyant_optimal_policy}}
Recall that $\mathbf{F} = (\mathbf{I} - \mathbf{Z} \mathbf{A})^{-1} \mathbf{Z} \mathbf{B}$ and $\mathbf{G} = (\mathbf{I} - \mathbf{Z} \mathbf{A})^{-1}$ denote the causal response operators that encode the system dynamics as $\mathbf{x} = \mathbf{F}\mathbf{u} + \mathbf{G} \bm{\delta}$. Then, the control cost an arbitrary decision policy $\bm{\pi}_T$ incurs over $T$ steps can be expressed as
\begin{align}
    J_T(\bm{\pi}_T, \bm{\delta})
    &= \left(\mathbf{F} \mathbf{u} + \mathbf{G} \bm{\delta} \right)^\top \mathbf{Q} \left(\mathbf{F} \mathbf{u} + \mathbf{G} \bm{\delta} \right) + \mathbf{u}^\top \mathbf{R} \mathbf{u} \nonumber\\
    &= \mathbf{u}^\top \mathbf{P} \mathbf{u} + 2 \mathbf{u}^\top \mathbf{F}^\top \mathbf{Q} \mathbf{G} \bm{\delta} + \bm{\delta}^\top \mathbf{G}^\top \mathbf{Q} \mathbf{G} \bm{\delta} \label{eq:lqr_cost_compact_FG}\,,
\end{align}
where $\mathbf{P} = \mathbf{R} + \mathbf{F}^\top \mathbf{Q}\mathbf{F} \succ 0$. Further, completing the square by observing that
\begin{equation*}
    \mathbf{Q}\mathbf{F}\mathbf{P}^{-1}\mathbf{F}^\top \mathbf{Q} + \mathbf{Q}(\mathbf{I} + \mathbf{F}\mathbf{R}^{-1}\mathbf{F}^\top\mathbf{Q})^{-1} = \mathbf{Q}\,,
\end{equation*}
thanks to the Woodbury matrix identity, the cumulative control cost in \eqref{eq:lqr_cost_compact_FG} can be equivalently rewritten as
\begin{align}
    J_T(\bm{\pi}_T, \bm{\delta}) &= [\mathbf{P}\mathbf{u} + \mathbf{F}^\top \mathbf{Q} \mathbf{G} \bm{\delta}]^\top
    \mathbf{P}^{-1}
    [\mathbf{P}\mathbf{u} + \mathbf{F}^\top \mathbf{Q} \mathbf{G} \bm{\delta}] + \nonumber\\
    &+ \bm{\delta}^\top \mathbf{G}^\top \mathbf{Q} (\mathbf{I} + \mathbf{F}\mathbf{R}^{-1}\mathbf{F}^\top \mathbf{Q})^{-1} \mathbf{G} \bm{\delta}\label{eq:lqr_cost_compact_split}\,.
\end{align}
We then note that the first addend in \eqref{eq:lqr_cost_compact_split} is non-negative, as $\mathbf{P} \succ 0$ by construction, and that the second term does not depend on the control actions that the decision policy $\bm{\pi}_T$ selects. Hence, in the absence of safety constraints, the cost-minimizing control sequence is obtained by setting the former term is equal to zero. The expression \eqref{eq:clairvoyant_controller_definition} for $\bm{\pi}^c_T$ is then obtained by direct computation. Further, \eqref{eq:clairvoyant_controller_cost_incurred} is simply the second addend in \eqref{eq:lqr_cost_compact_split}, which directly yields the control cost of the policy $\bm{\pi}^c_T$ incurs when the disturbance is $\bm{\delta}$.

\subsection{Proof of Proposition~\ref{prop:convexity_lower_bound}}
\label{app:proof_convexity_lower_bound}
    For any fixed performance level $\gamma \in \mathbb{R}_{\geq 0}$ and perturbation sequence $\mathbf{w} \in \mathbfcal{W}$, the function $(\bm{\Phi}_x, \bm{\Phi}_u) \mapsto \mathtt{Reg}_T(\bm{\Phi}_x, \bm{\Phi}_u, \gamma, \bm{\delta})$ is convex and continuous by inspection, see also \cite{martin2022safe}. Convexity and lower semicontinuity of the function $(\bm{\Phi}_x, \bm{\Phi}_u) \mapsto \mathtt{Reg}_T^\star(\bm{\Phi}_x, \bm{\Phi}_u, \gamma, x_0)$ then follow, since the point-wise maximum of lower semicontinuous and convex functions is lower semicontinuous and convex \cite[Prop. 2.9]{rockafellar2009variational}.
    To prove that this function is bounded from below, we recall that $V_f(x_T) \geq 0$ for any $x_T$ since $P \succeq 0$ by definition, and we note that the optimality of the clairvoyant policy $\bm{\pi}^c_T$ implies that $J_T(\bm{\pi}_T, \bm{\delta}) - J_T(\bm{\pi}^c_T, \bm{\delta}) \geq 0$ for any policy $\bm{\pi}_T$ and disturbance $\mathbf{w}$. Hence, we have that
    \begin{equation*}
         \norm{\bar{\mathbf{S}}^{\frac{1}{2}} \begin{bmatrix}
            \bm{\Phi}_x \\ \bm{\Phi}_u
        \end{bmatrix} \bm{\delta}}_2^2
        -
        \norm{\mathbf{S}^{\frac{1}{2}} \begin{bmatrix}
            \bm{\Phi}_x^c \\ \bm{\Phi}_u^c
        \end{bmatrix} \bm{\delta}}_2^2 \geq 0\,.
    \end{equation*}
    Further, since $\mathbfcal{W}$ contains an open neighborhood of the origin by assumption, we have that
    \begin{align*}
        \mathtt{Reg}_T^\star(\bm{\Phi}_x, \bm{\Phi}_u, \gamma, x_0) &\geq \mathtt{Reg}_T(\bm{\Phi}_x, \bm{\Phi}_u, \gamma, x_0, \mathbf{0}) \geq 0\,,
    \end{align*}
    which concludes the proof of the lower bound. Finally, \eqref{eq:objective_maximization} is also proper, i.e., it only takes finite values for every $\{\bm{\Phi}_x, \bm{\Phi}_u\}$, since the set $\mathbfcal{W}$ is assumed compact.

\subsection{Proof of Lemma~\ref{le:relations_clairvoyant}}
\label{app:proof_relations_clairvoyant}
    We first consider the left inequality $\mathbf{C}_{T+1}^{[1:T+1, 1:T+1]} \preceq \mathbf{C}_{T}$. We equivalently rewrite the cost that the clairvoyant policy $\bm{\pi}_{T+1}^c$ incurs over a planning horizon of length $T+1$ as
    \begin{equation*}
        x_0^\top C_{T+1}^{[0, 0]} x_0 + 2 x_0^\top \mathbf{C}_{T+1}^{[0, 1:T+1]} \bar{\mathbf{w}} + \bar{\mathbf{w}}^\top \mathbf{C}_{T+1}^{[1:T+1, 1:T+1]} \bar{\mathbf{w}}\,,
    \end{equation*}
    where $\bar{\mathbf{w}} = \operatorname{col}(w_0, w_1, \dots, w_T)$ collects the realizations of the disturbance process along the extended planning horizon. By inspection, it follows that the addend $\bar{\mathbf{w}}^\top \mathbf{C}_{T+1}^{[1:T+1, 1:T+1]} \bar{\mathbf{w}}$ constitutes the entire cumulative cost incurred by the clairvoyant policy $\bm{\pi}^c_{T+1}$
    if the system is initialized at $x_0 = 0$, i.e.,
    \begin{equation*}
        \bar{\mathbf{w}}^\top \mathbf{C}_{T+1}^{[1:T+1, 1:T+1]} \bar{\mathbf{w}} = J_{T+1}(\bm{\pi}_{T+1}^c, 0, \mathbf{w}_{0:T})\,.
    \end{equation*}
    On the other hand, for any $\bar{\mathbf{w}}$, we also have that $\bar{\mathbf{w}}^\top \mathbf{C}_T \bar{\mathbf{w}} = J_{T}(\bm{\pi}_{T}^c, w_{0}, \mathbf{w}_{1:T})$,
    i.e., the cost incurred by the clairvoyant policy $\bm{\pi}_{T}^c$ over a planning horizon of length $T$ when the system is initialized at $x_{1} = w_{0}$ and the disturbance sequence $\mathbf{w}_{1:T}$ realizes. Then, to conclude the first part of the proof, we show that
    \begin{equation}
        \label{eq:final_subsequences_cost_inequality}
        J_{T}(\bm{\pi}_{T}^c, w_{0}, \mathbf{w}_{1:T}) \geq J_{T+1}(\bm{\pi}_{T+1}^c, 0, \mathbf{w}_{0:T})\,,
    \end{equation}
    for any $\bar{\mathbf{w}}$. To do so, let us denote by $\mathbf{u}^\star_{1:T}$ the sequence of clairvoyant control actions selected in hindsight by the policy $\bm{\pi}_{T}^c$ to minimize the left-hand side of \eqref{eq:final_subsequences_cost_inequality}. Also, let $\mathbf{x}^\star_{1:T+1}$ be the corresponding state trajectory, initialized at $x^\star_{1} = w_{0}$. Then, the clairvoyant control sequence defined by
    \begin{equation}
        \label{eq:final_subsequences_input_feasible}
        \bar{u}_0 = 0\,, ~~ \bar{u}_t = u_t^\star\,, ~ \forall t \in \mathbb{I}_{[1, T]}\,,
    \end{equation}
    can be implemented by clairvoyant policy $\bm{\pi}_{T+1}^c$, which also has complete foreknowledge of the disturbance realizations $\mathbf{w}_{0:T}$, and results, when the system is initialized at $x_0 = 0$ under $\bar{\mathbf{w}}$, in the state trajectory
    \begin{equation}
        \label{eq:final_subsequences_state_feasible}
        \bar{x}_0 = 0\,, ~~ \bar{x}_t = x_t^\star\,, ~ \forall t \in \mathbb{I}_{[1, T+1]}\,.
    \end{equation}
    It is then straightforward to see that
    \eqref{eq:final_subsequences_input_feasible} and \eqref{eq:final_subsequences_state_feasible} imply that
    \begin{equation*}
        \sum_{k = 1}^{T} {x^\star_k}^\top Q x^\star_k + {u^\star_k}^\top R u_k^\star =
        \sum_{k = 0}^{T} \bar{x}_k^\top Q \bar{x}_k + \bar{u}_k^\top R \bar{u}_k\,,
    \end{equation*}
    namely that, when $x_0 = 0$, the feasible control sequence \eqref{eq:final_subsequences_input_feasible} attains over $T+1$ a cost equal to $J_{T}(\bm{\pi}_{T}^c, w_{0}, \mathbf{w}_{1:T})$. The left inequality in \eqref{eq:clairvoyant_inequalities_chain} then follows by suboptimality of $\bar{\mathbf{u}}_{0:T}$ with respect to $\bm{\pi}_{T+1}^c(\operatorname{col}(0, \bar{\mathbf{w}}))$.

    We next consider the right inequality $\mathbf{C}_{T} \preceq \mathbf{C}_{T+1}^{[0:T, 0:T]}$. We start by rewriting the cost incurred over $T+1$ by the clairvoyant policy $\bm{\pi}_{T+1}^c$ as
    \begin{equation*}
        \ubar{\bm{\delta}}^\top \mathbf{C}_{T+1}^{[0:T, 0:T]} \ubar{\bm{\delta}}
        + 2 \ubar{\bm{\delta}}^\top \mathbf{C}_{T+1}^{[0:T, T+1]} w_{T} + w_{T}^\top C_{T+1}^{[T+1, T+1]} w_{T}\,,
    \end{equation*}
    where $\ubar{\bm{\delta}} = \operatorname{col}(x_0, w_0, \dots, w_{T-1})$ now collects the system initial condition and all disturbance realizations but the last one. By inspection, whenever $w_T = 0$, the previous expression simplifies to $\ubar{\bm{\delta}}^\top \mathbf{C}_{T+1}^{[0:T, 0:T]} \ubar{\bm{\delta}}$, namely
    \begin{equation*}
        \ubar{\bm{\delta}}^\top \mathbf{C}_{T+1}^{[0:T, 0:T]} \ubar{\bm{\delta}} = J_{T+1}(\bm{\pi}_{T+1}^c, x_0, \mathbf{w}_{0:T-1}, 0)\,.
    \end{equation*}
    On the other hand, for any $\ubar{\bm{\delta}}$, we also have that $\ubar{\bm{\delta}}^\top \mathbf{C}_T \ubar{\bm{\delta}} = J_{T}(\bm{\pi}_{T}^c, x_{0}, \mathbf{w}_{0:T-1})$, i.e., the cost incurred by the clairvoyant policy $\bm{\pi}_{T}^c$ over a planning horizon of length $T$ when the system is initialized at $x_{0} \in \mathbb{R}^n$ and the disturbance sequence $\mathbf{w}_{0:T-1}$ realizes. To conclude the second part of the proof, we show that
    \begin{subequations}
        \label{eq:initial_subsequences_cost_inequality}
        \begin{align}
            J_{T}(\bm{\pi}_{T}^c, x_{0}, \mathbf{w}_{0:T-1}) &\leq J_{T}(\bm{\pi}_{T+1}^c, x_0, \mathbf{w}_{0:T})\label{eq:initial_subsequences_cost_inequality_1}\\
            &\leq J_{T+1}(\bm{\pi}_{T+1}^c, x_0, \mathbf{w}_{0:T})\label{eq:initial_subsequences_cost_inequality_2}\,,
        \end{align}
    \end{subequations}
    for any possible choices of $\ubar{\bm{\delta}}$ and $w_T$, and in particular for the case of interest $w_T = 0$. We start by noting that, since the dynamics of the system under control are causal, the cost that any policy incurs during the planning horizon $T$ does not explicitly depend on $w_T$ – see also \eqref{eq:lqr_cost_compact_split}, which depends on $\mathbf{w}_{0:T-1}$ but not on $w_T$. We then deduce that the globally optimal sequence of control actions that minimizes \eqref{eq:lqr_cost_compact_split} also does not depend on $w_T$. Hence, \eqref{eq:initial_subsequences_cost_inequality_1} follows by optimality of $\bm{\pi}^c_T$, since knowing the realization of $w_T$ in hindsight does not influence the cost-minimizing sequence of control inputs. Finally, the inequality \eqref{eq:initial_subsequences_cost_inequality_2} follows, since the linear quadratic stage cost $x_T^\top Q x_T + u_T^\top R u_T$ is non-negative by definition.

\subsection{Proof of Corollary~\ref{cor:eigenvalues_clairvoyant}}
\label{app:proof_eigenvalues_clairvoyant}
    Let the plant under control evolve under the clairvoyant optimal policy $\bm{\pi}^c_T$, and let $\ubar{\bm{\delta}} = \operatorname{col}(x_0, w_0, \dots, w_{T-1})$ denote the disturbance sequence with unit norm that results in the highest control cost, i.e., $\ubar{\bm{\delta}} = \operatorname{argmax}_{\norm{\bm{\delta}}_2 = 1} \bm{\delta}^\top \mathbf{C}_T \bm{\delta}$ and $\ubar{\bm{\delta}}^\top \mathbf{C}_T \ubar{\bm{\delta}} = \lambda_{\operatorname{max}}(\mathbf{C}_T)$. Consider now the unitary norm extended perturbation sequence $\tilde{\bm{\delta}}$ defined by $\tilde{\bm{\delta}} = \operatorname{col}(\ubar{\bm{\delta}}, 0)$. Since $\mathbf{C}_{T} \preceq \mathbf{C}_{T+1}^{[0:T, 0:T]}$ in light of Lemma~\ref{le:relations_clairvoyant}, we have that $\lambda_{\operatorname{max}}(\mathbf{C}_T) = \ubar{\bm{\delta}}^\top \mathbf{C}_T \ubar{\bm{\delta}} \leq \tilde{\bm{\delta}}^\top \mathbf{C}_{T+1} \tilde{\bm{\delta}}$. By suboptimality of $\tilde{\bm{\delta}}$ with respect to $\operatorname{argmax}_{\norm{\bm{\delta}}_2 = 1} \bm{\delta}^\top \mathbf{C}_{T+1} \bm{\delta}$, we also have that $\tilde{\bm{\delta}}^\top \mathbf{C}_{T+1} \tilde{\bm{\delta}} \leq \lambda_{\operatorname{max}}(\mathbf{C}_{T+1})$,
    and the claimed monotonicity property follows. The second part of the proof regarding the minimum eigenvalue follows similar derivations, and is thus omitted.

\subsection{Proof of Proposition~\ref{prop:chain_of_inclusions}}
\label{app:proof_chain_of_inclusions}
    We prove this chain of set inclusions by induction, and consider the base case $\mathcal{Z}_f \subseteq \mathcal{X}_1^\gamma$ first. For any $x_0 \in \mathcal{Z}_f$, the local control law $u_0 = K_f x_0$ can be used to safely maintain the system inside the terminal set while satisfying a quadratic matrix inequality $\mathbf{H}_1^\gamma(K_f) \succeq 0$ of the form \eqref{eq:h_infinity_objective_qmi}, as shown in \cite[Prop. 3]{goulart2009control}. Hence, since $\mathbf{C}_{1}^{[1,1]} \succeq 0$, we conclude that the same candidate policy satisfies a quadratic matrix inequality $\mathbf{R}_1^\gamma(K_f) \succeq 0$ of the form \eqref{eq:regret_objective_qmi}.

    For the induction step, we show that an arbitrary initial condition $x_0 \in \mathcal{X}_T^\gamma$ satisfies $x_0 \in \mathcal{X}_{T+1}^\gamma$. By definition, for any $x_0 \in \mathcal{X}_T^\gamma$, there exists an admissible control policy at performance level $\gamma$ $\bm{\pi}_T \in \bm{\Pi}_{T}^\gamma(x_0, \gamma)$ that drives the system to a terminal state $x_T \in \mathcal{Z}_f$ while satisfying $(x_t, u_t) \in \mathcal{Z}$ along the way. If Assumptions~\ref{ass:stab_unit_circle_obs}-\ref{ass:terminal_set} hold with $\gamma_f \leq \gamma$, then an admissible control $\bm{\pi}_{T+1} \in \bm{\Pi}_{T+1}(x_0)$ can be constructed leveraging the admissible and robust positively invariant local controller $K_f$, namely, appending the input $u_T = \pi_T(x_0, \dots, x_T) = K_f x_T$ to the control sequence in $\bm{\pi}_T$. It only remains to show that this proposed candidate policy $\bm{\pi}_{T+1}$ also satisfies a quadratic matrix inequality of the form \eqref{eq:regret_objective_qmi} with the same performance level $\gamma$, i.e., that $\bm{\pi}_{T+1} \in \bm{\Pi}_{T+1}^\gamma(x_0)$. To do so, in light of the equivalence between SLS and disturbance feedback parametrizations \cite{zheng2020equivalence}, we recall from \cite[Prop. 3]{goulart2009control} that $\mathbf{H}_{T+1}^\gamma(\bm{\pi}_{T+1})$, the $\mathcal{H}_\infty$-type quadratic form associated with the proposed candidate policy $\bm{\pi}_{T+1}$, can be related to $\mathbf{H}_T^\gamma(\bm{\pi}_T)$ through
    \begin{equation*}
        \mathbf{H}_{T+1}^\gamma(\bm{\pi}_{T+1}) = \begin{bmatrix}
            \mathbf{H}_T^\gamma(\bm{\pi}_T) + \mathbf{X} & -\mathbf{Y}^\top A_f^\top P\\
            - P A_f \mathbf{Y} & \gamma^2 I - P
        \end{bmatrix}\,,
    \end{equation*}
    where $A_f = A + BK_f$, for appropriately defined matrices $\mathbf{X}$ and $\mathbf{Y}$. Observing that, by definition, $\mathbf{R}_{T+1}^\gamma(\bm{\pi}_{T+1}) = \mathbf{H}_{T+1}^\gamma(\bm{\pi}_{T+1}) + \mathbf{C}_{T+1}^{[1:T+1, 1:T+1]}$, we rewrite $\mathbf{R}_{T+1}^\gamma(\bm{\pi}_{T+1})$ as
    \begin{equation*}
        \begin{bmatrix}
            \mathbf{H}_T^\gamma(\pi_T) + \mathbf{X} & -\mathbf{Y}^\top A_f^\top P\\
            -P A_f \mathbf{Y} & \gamma^2 I - P
        \end{bmatrix} + \mathbf{C}_{T+1}^{[1:T+1, 1:T+1]}\,.
    \end{equation*}
    Therefore, we deduce that $\mathbf{R}_{T+1}^\gamma(\bm{\pi}_{T+1})$ and $\mathbf{R}_T^\gamma(\bm{\pi}_T)$ can be related through
    \begin{align}
    \label{eq:set_inclusion_regret_quadratic_form_relation_initial}
        \mathbf{R}_{T+1}^\gamma(\bm{\pi}_{T+1}) = \begin{bmatrix}
            \mathbf{R}_T^\gamma(\pi_T) + \mathbf{X} & -\mathbf{Y}^\top A_f^\top P\\
            -P A_f \mathbf{Y} & \gamma^2 I - P
        \end{bmatrix}
        + \bm{\Delta}_{i}\,,
    \end{align}
    where $\bm{\Delta}_i = \mathbf{C}_{T+1}^{[1:T+1, 1:T+1]} - \operatorname{blkdiag}(\mathbf{C}_T^{[1:T, 1:T]}, 0_{n \times n}) $ captures the clairvoyant baseline shift that occurs as planning horizons of different length are considered. Proceeding analogously to \cite[Prop. 3]{goulart2009control}, one can show that $\mathbf{R}_T^\gamma(\bm{\pi}_T) \succeq 0$ and $\gamma^2 I - P \succ 0$, which hold true by assumption, imply that the first addend in \eqref{eq:set_inclusion_regret_quadratic_form_relation_initial} is positive semidefinite. Hence, it suffices to show that $\bm{\Delta}_i \succeq 0$ to conclude that $\mathbf{R}_{T+1}^\gamma(\bm{\pi}_{T+1}) \succeq 0$. To see this, we observe that for any perturbation sequence $\mathbf{w}_{0:T}$, the quadratic form $\bar{\mathbf{w}}^\top \bm{\Delta}_i \bar{\mathbf{w}}$, with $\bar{\mathbf{w}} = \operatorname{col}(w_0, \dots, w_T)$, can be equivalently computed as $J_{T+1}(\bm{\pi}_{T+1}^c, w_0, \mathbf{w}_{1:T}) - J_{T}(\bm{\pi}_{T}^c, w_{0}, \mathbf{w}_{1:T-1})$.
    The positive semidefiniteness of $\bm{\Delta}_i$ then follows from \eqref{eq:initial_subsequences_cost_inequality} in Lemma~\ref{le:relations_clairvoyant}, and the proof is completed.

\subsection{Proof of Theorem~\ref{th:recursive_feasibility_terminal_rpi}}
\label{app:proof_recursive_feasibility_terminal_rpi}
    If Assumptions~\ref{ass:stab_unit_circle_obs}-\ref{ass:terminal_set} hold with performance level $\gamma \geq \operatorname{max}\{\gamma_T^\star(x_0), \gamma_f\}$, then Proposition~\ref{prop:chain_of_inclusions} guarantees that, for any initial condition $x_0 \in \mathcal{X}_T^\gamma$, there exists an admissible control policy $\bm{\pi}_{T+1} \in \bm{\Pi}^\gamma_{T+1}(x_0) \subseteq \bm{\Pi}_{T+1}(x_0)$ with first control action $\mu_T(x_0, \gamma)$. Let $x_1 = A x_0 + B\mu_T(x_0, \gamma) + w_0$, and define $\bm{\pi}_T$ as the tail of length $T$ extracted from the control policy $\bm{\pi}_{T+1}$. By construction, $\bm{\pi}_T \in \bm{\Pi}_T(x_1)$; it thus only remains to show that $\bm{\pi}_T \in \bm{\Pi}_T^\gamma(x_1)$, i.e., the performance level $\gamma$ is met. To do so, we recall from \cite[Lem. 1]{goulart2009control} that $\mathbf{H}_T^\gamma(\bm{\pi}_T)$ in \eqref{eq:h_infinity_objective_qmi} can be related to $\mathbf{H}_{T+1}^\gamma(\bm{\pi}_{T+1})$ by
    \begin{equation}
        \label{eq:set_inclusion_h_infinity_quadratic_form_relation_final}
        \mathbf{H}_{T+1}^\gamma(\bm{\pi}_{T+1}) = \begin{bmatrix}
            U & \mathbf{V}\\
            \mathbf{V}^\top & \mathbf{H}_T^\gamma(\bm{\pi}_T)
        \end{bmatrix}\,,
    \end{equation}
    for appropriately defined matrices $U$ and $\mathbf{V}$. We then deduce that $\mathbf{R}_T^\gamma(\pi_T)$ in \eqref{eq:regret_objective_qmi} and $\mathbf{R}_{T+1}^\gamma(\pi_{T+1})$ can be related by
    \begin{equation}
        \label{eq:set_inclusion_regret_quadratic_form_relation_final}
        \mathbf{R}_{T+1}^\gamma(\bm{\pi}_{T+1}) = \begin{bmatrix}
            U & \mathbf{V}\\
            \mathbf{V}^\top & \mathbf{R}_T^\gamma(\bm{\pi}_{T})
        \end{bmatrix}
        +
        \bm{\Delta}_f \succeq 0\,,
    \end{equation}
    where $\bm{\Delta}_f = \mathbf{C}_{T+1}^{[1:T+1, 1:T+1]} - \operatorname{blkdiag}(0_{n \times n}, \mathbf{C}_{T}^{[1:T, 1:T]})$ captures the clairvoyant baseline shift that occurs as planning horizons of different length are considered. By partitioning $\bm{\Delta}_f$ into blocks of appropriate dimensions, we equivalently rewrite \eqref{eq:set_inclusion_regret_quadratic_form_relation_final} as
    \begin{equation*}
        \begin{bmatrix}
            U + C_{T+1}^{[1, 1]} & \bar{\mathbf{V}}\\
            \bar{\mathbf{V}}^\top & \mathbf{R}_T^\gamma(\bm{\pi}_{T})
        \end{bmatrix}
        +
        \begin{bmatrix}
            0 & 0\\
            0 & \mathbf{C}_{T+1}^{[2:T+1, 2:T+1]} - \mathbf{C}_{T}^{[1:T, 1:T]}
        \end{bmatrix}\,,%
    \end{equation*}
    where $\bar{\mathbf{V}} = \mathbf{V} + \mathbf{C}_{T+1}^{[1, 2:T+1]}$. Note that Lemma~\ref{le:relations_clairvoyant} implies that the second addend in the expression above is negative semidefinite, since $\mathbf{C}_{T+1}^{[2:T+1, 2:T+1]} - \mathbf{C}_{T}^{[1:T, 1:T]} \preceq 0$. Therefore, we deduce that
    \begin{equation*}
        \begin{bmatrix}
            U + C_{T+1}^{[1, 1]} & \bar{\mathbf{V}}\\
            \bar{\mathbf{V}}^\top & \mathbf{R}_T^\gamma(\bm{\pi}_{T})
        \end{bmatrix} \succeq
        0\,.
    \end{equation*}
    Finally, since all principal sub-matrices of a positive semidefinite matrix are positive semidefinite, we conclude that $\mathbf{R}_T^\gamma(\bm{\pi}_T) \succeq 0$, i.e., that $\bm{\pi}_T \in \bm{\Pi}_T^\gamma(x_1)$.

\subsection{Proof of Theorem~\ref{th:rhc_finite_gain_infinite_horizon}}
\label{app:proof_rhc_finite_gain_infinite_horizon}
    For any real-valued scalar map $\varphi : \mathbb{R}^n \to \mathbb{R}$, we define the one-step difference function $\Delta \varphi : \mathbb{R}^{2n+m} \to \mathbb{R}$ as
    \begin{equation*}
        \Delta \varphi (x, u, w) = \varphi (Ax + Bu + w) - \varphi(x)\,.
    \end{equation*}
    Further, inspired by classical literature on $\mathcal{H}_\infty$ control, we let%
    \begin{align*}
        \ell(x_t, u_t, w_t) &= \norm{x_t}_Q^2 + \norm{u_t}_R^2 - \gamma^2\norm{w_t}_2^2\,,\\
        \ell_f(x_t, u_t, w_t) &= \norm{x_t}_Q^2 + \norm{u_t}_R^2 - \gamma_f^2\norm{w_t}_2^2\,,
    \end{align*}
    with $\norm{x_t}_Q^2 = x_t^\top Q x_t$ and $\norm{u_t}_R^2 = u_t^\top R u_t$, denote a quadratic stage loss that is negatively weighted in the energy of the disturbance realization. With this notation in place, the stage loss that measures the regret relative to the infinite horizon clairvoyant optimal policy $\bm{\pi}_\infty^c$ becomes
    \begin{align}
        \label{eq:lr_definition}
        \ell_r(x_t, u_t, w_t) = \ell(x_t, u_t, w_t)
        - \norm{x_t^c}^2_Q - \norm{u_t^c}^2_R\,,
    \end{align}
    where $u_t^c$ and $x_t^c$ denote the clairvoyant optimal control action at time $t$ and the corresponding state value, respectively.

    If Assumptions~\ref{ass:stab_unit_circle_obs}-\ref{ass:terminal_set} hold, the auxiliary feedback policy $K_f$, which is a saddle point solution of an unconstrained $\mathcal{H}_\infty$ zero-sum game, satisfies \cite{bacsar2008h}:
    \begin{equation}
        \label{eq:zero_sum_game_property}
        \max_{w_t \in \mathbb{R}^n} ~ \left[\Delta V_f + \ell_f\right](x_t, K_f x_t, w_t) = 0\,.
    \end{equation}
    Note that, for all $(x_t, u_t, w_t)$, we have that $\ell_r(x_t, u_t, w_t) \leq \ell(x_t, u_t, w_t) \leq \ell_f(x_t, u_t, w_t)$, where the first inequality follows because the stage cost incurred by the clairvoyant optimal policy is non-negative, and the second because $\gamma \geq \operatorname{max}\{\gamma_T^\star(x_0), \gamma_f\} \geq \gamma_f$ by assumption. Hence,
    \begin{equation*}
        \left[\Delta V_f + \ell_r\right](x_t, u_t, w_t) \leq \left[\Delta V_f + \ell_f\right](x_t, u_t, w_t)\,,
    \end{equation*}
    for all $(x_t, u_t, w_t)$. In particular, choosing $u_t = K_f x_t$ as in \eqref{eq:zero_sum_game_property} while restricting the domain of the maximization over the compact set $\mathcal{W} \subset \mathbb{R}^n$ only, we have that
    \begin{equation*}
        \label{eq:delta_terminal_cost_loss_regret_negative}
        \max_{w_t \in \mathcal{W}} ~ \left[\Delta V_f + \ell_r\right](x_t, K_f x_t, w_t) \leq 0\,,
    \end{equation*}
    for any $x_t$ and for any $w_t$, which in turn implies that $V_f$ is a robust control Lyapunov function in a neighborhood of the origin. Following standard arguments in robust receding horizon control \cite[Sect. 4]{mayne2000constrained}, this fact, together with the assumed robust invariance of $\mathcal{Z}_f$ under the local control law $u_t = K_f x_t$, ensures that
    \begin{equation}
        \label{eq:delta_value_function_loss_regret_negative}
        \left[\Delta V_T^\star(\cdot, \gamma) + \ell_r\right](x_t, \mu_T(x_t, \gamma), w_t) \leq 0\,,
    \end{equation}
    for all $x_t \in \mathcal{X}_T^\gamma$ and $w_t \in \mathcal{W}$. Taking the sum of the left-hand side of \eqref{eq:delta_value_function_loss_regret_negative} from time $0$ to a generic time $q \in \mathbb{N}$
    \begin{equation*}
        V_T^\star(x_q, \gamma) \leq V_T^\star(x_0, \gamma) - \sum_{k=0}^{q-1} \ell_r (x_k, \mu_T(x_k, \gamma), w_{k})\,,
    \end{equation*}
    Recalling the definition of $\ell_r(\cdot)$ in \eqref{eq:lr_definition} and since $V_T^\star(x_q, \gamma)$ is non-negative from Proposition~\ref{prop:convexity_lower_bound}, it then follows that
    \begin{align*}
        \sum_{t=0}^{q-1} \norm{x_t}_Q^2 + \norm{u_t}_R^2 &- \norm{x_t^c}^2_Q - \norm{u_t^c}^2_R \\
        &\leq \gamma^2 \sum_{t=0}^{q-1} \norm{w_t}_2^2 + V_T^\star(x_0, \gamma)\,,
    \end{align*}
    for any $q \in \mathbb{N}$, i.e., the amplification from the disturbance energy to the dynamic regret is finite and, in particular, is bounded above by $\gamma$. Lastly, Theorem~\ref{th:recursive_feasibility_terminal_rpi} ensures that the proposed receding horizon control law is recursively feasible, hence the constraints $(x_t, u_t) \in \mathcal{Z}$ are satisfied at all times.

\subsection{Proof of Proposition~\ref{prop:implementation_structured}}
    Let $\mathbf{M} = \mathbf{C}_{T}^{[0, 0]} - \bm{\Phi}_0^\top \bar{\mathbf{S}} \bm{\Phi}_0$, $\mathbf{N} = \mathbf{C}_{T}^{[0, 1:T]} - \bm{\Phi}_0^\top \bar{\mathbf{S}} \bm{\Phi}_w$, and $\mathbf{L} = \gamma^2 \mathbf{I} + \mathbf{C}_{T}^{[1:T, 1:T]} - \bm{\Phi}_w^\top \bar{\mathbf{S}} \bm{\Phi}_w$ so that
    \begin{align*}
        \mathtt{Reg}_T(\cdot) &= -\left(x_0^\top \mathbf{M} x_0  + 2 x_0^\top \mathbf{N} \mathbf{w} + \mathbf{w}^\top \mathbf{L}
        \mathbf{w}\right)\,, \\
        \mathtt{Reg}_T^\star(\cdot) &=
        -\min_{\mathbf{w} \in \mathbfcal{W}} ~ x_0^\top \mathbf{M} x_0  + 2 x_0^\top \mathbf{N} \mathbf{w} + \mathbf{w}^\top \mathbf{L}
        \mathbf{w}\,.
    \end{align*}
    We first observe that strong duality holds since $\mathbfcal{W}$ contains the origin in its interior by assumption. Therefore, we can equivalently rewrite $\mathtt{Reg}_T^\star(\bm{\Phi}_x, \bm{\Phi}_u, \gamma, x_0)$ as the optimal value of the following optimization problem
    \begin{align*}
        - \max_{\bm{\lambda} \geq 0} ~ \min_{\mathbf{w}} ~  x_0^\top \mathbf{M} x_0 + 2x_0^\top \mathbf{N} \mathbf{w} &+ \mathbf{w}^\top \mathbf{L} \mathbf{w}\\
        &+ \bm{\lambda}^\top (\mathbf{H}_w \mathbf{w} - \mathbf{h}_w)\,,
    \end{align*}
    or, equivalently, as the optimal value of
    \begin{align*}
        - \max_{\bm{\lambda} \geq 0} ~ x_0^\top \mathbf{M} x_0 &- \mathbf{h}_w^\top \bm{\lambda}\\ &+ \min_{\mathbf{w}} ~  (2\mathbf{N}^\top x_0 + \mathbf{H}_w^\top \bm{\lambda})^\top \mathbf{w} + \mathbf{w}^\top \mathbf{L} \mathbf{w}\,,
    \end{align*}
    to highlight the presence of an unconstrained minimization over the disturbance sequence $\mathbf{w}$. Since \eqref{eq:regret_objective_qmi} ensures that $\mathbf{L} \succeq 0$, we can then write the explicit solution to the inner minimization following \cite[Sec. A.5.5]{boyd2004convex}. In particular, it holds that $\min_{\mathbf{w}} ~  (2\mathbf{N}^\top x_0 + \mathbf{H}_w^\top \bm{\lambda})^\top \mathbf{w} + \mathbf{w}^\top \mathbf{L} \mathbf{w}$ equals $-\frac{1}{4}  (2\mathbf{N}^\top x_0 + \mathbf{H}_w^\top \bm{\lambda})^\top \mathbf{L}^\dagger (2\mathbf{N}^\top x_0 + \mathbf{H}_w^\top \bm{\lambda})$ if the range condition
    \begin{equation}
        \label{eq:range_condition_implementation}
        (\mathbf{I} - \mathbf{L} \mathbf{L}^\dagger) (2\mathbf{N}^\top x_0 + \mathbf{H}_w^\top \bm{\lambda}) = 0\,,
    \end{equation}
    where $\mathbf{L}^\dagger$ denotes the pseudo-inverse of $\mathbf{L}$, holds; otherwise, the problem is unbounded. By directly enforcing \eqref{eq:range_condition_implementation} as constraint and introducing the auxiliary epigraph variable $\tau$, we can then express $\mathtt{Reg}_T^\star(\bm{\Phi}_x, \bm{\Phi}_u, \gamma, x_0)$ as
    \begin{align*}
        &~\min_{\tau, \bm{\lambda} \geq 0} ~ \mathbf{h}_w^\top \bm{\lambda} + \tau\\
        &\st~ (\mathbf{I} - \mathbf{L} \mathbf{L}^\dagger) (2\mathbf{N}^\top x_0 + \mathbf{H}_w^\top \bm{\lambda}) = 0\,,\\
        & ~ -x_0^\top \mathbf{M} x_0 + \frac{1}{4}  (2\mathbf{N}^\top x_0 + \mathbf{H}_w^\top \bm{\lambda})^\top \mathbf{L}^\dagger (2\mathbf{N}^\top x_0 + \mathbf{H}_w^\top \bm{\lambda}) \leq \tau\,.
    \end{align*}
    Letting $\bm{\eta} = \frac{1}{2} \bm{\lambda}$ and applying the Schur complement, one can reformulate the worst-case regret as
    \begin{align*}
        &~\min_{\tau, \bm{\eta} \geq 0} ~ 2\mathbf{h}_w^\top \bm{\eta} + \tau\\
        &\st~ \begin{bmatrix}
            \tau + x_0^\top \mathbf{M} x_0 &  (\mathbf{N}^\top x_0 + \mathbf{H}_w^\top \bm{\eta})^\top\\
            (\mathbf{N}^\top x_0 + \mathbf{H}_w^\top \bm{\eta}) & \mathbf{L}
        \end{bmatrix} \succeq 0\,.
    \end{align*}
    Recalling the definition of $\mathbf{M}$, $\mathbf{N}$, and $\mathbf{L}$, we rewrite the above matrix inequality constraint as
    \begin{equation*}
        \begin{bmatrix}
            \tau + x_0^\top \mathbf{C}_T^{[0,0]} x_0 & \bm{\eta}^\top \mathbf{H}_w + x_0^\top \mathbf{C}_T^{[0, 1:T]} \\
            \mathbf{H}_w^\top \bm{\eta} + \mathbf{C}_T^{[1:T, 0]} x_0 & \gamma^2 \mathbf{I} + \mathbf{C}_T^{[1:T, 1:T]}
        \end{bmatrix}
        -
        \mathbf{O}^\top \mathbf{O}
        \succeq 0\,,
    \end{equation*}
    where $\mathbf{O} = \begin{bmatrix} \bar{\mathbf{S}}^\frac{1}{2} \bm{\Phi}_0 x_0 & \bar{\mathbf{S}}^\frac{1}{2} \bm{\Phi}_w \end{bmatrix}$. The desired expression \eqref{eq:sdp_schur_constraints} then follows by applying once more the Schur complement. In particular, note that if \eqref{eq:sdp_schur_constraints} is satisfied, then \eqref{eq:regret_objective_qmi} and \eqref{eq:regret_lmi_maximization_concave} are guaranteed to hold since all principle sub-matrices of a positive semidefinite matrix are positive semidefinite.

    To ensure that the safety constraints are robustly satisfied, we proceed as in \cite{martin2022safe} and employ strong duality of linear optimization problems to eliminate the universal quantifier from \eqref{eq:safety_constraints_compact_Phi}. Specifically, we first note that \eqref{eq:safety_constraints_compact_Phi} holds if and only if $\mathbf{H}_z \bm{\Phi}_w \mathbf{w} \leq \mathbf{h}_z - \mathbf{H}_z \bm{\Phi}_0 x_0$ for all $\mathbf{w} \in \mathbfcal{W}$. Then, we observe that
    \begin{alignat*}{3}
        \max_{\mathbf{w} \in \mathbfcal{W}} ~ \left(
        \mathbf{H}_z
        \bm{\Phi}_w
        \right)_i \mathbf{w} = & ~ \min_{\mathbf{y}_i \geq 0} ~ \mathbf{h}_w^\top \mathbf{y}_i\,,\\
        & \st ~ \mathbf{H}_w^\top \mathbf{y}_i = \left(
        \mathbf{H}_z
        \bm{\Phi}_w
        \right)_i^\top\,,
    \end{alignat*}
    where $\mathbf{y}_i$ denotes the dual vector corresponding with the $i$-th row of the maximization. From this dual reformulation, the set of linear constraints \eqref{eq:sdp_dual_safety_constraints_1}-\eqref{eq:sdp_dual_safety_constraints_2} can be derived by concatenating the dual variables that arise from each row of the maximization into the matrix $\mathbf{Y}$.

\subsection{Proof of Proposition~\ref{prop:implementation_bounded_energy}}
    Let $\tilde{\mathbf{L}} = \mathbf{C}_T^{[1:T, 1:T]} - \bm{\Phi}_w^\top \bar{\mathbf{S}} \bm{\Phi}_w$ so that the inner maximization problem in \eqref{prob:safe_regret_optimal_original_bounded_energy} equivalently reads as
    \begin{equation*}
        - \min_{\norm{\mathbf{w}}_2 \leq \sigma} ~ x_0^\top \mathbf{M} x_0 + 2x_0^\top \mathbf{N}^\top \mathbf{w} + \mathbf{w}^\top \tilde{\mathbf{L}} \mathbf{w}\,.
    \end{equation*}
    Then, since $\mathbf{w} = 0$ verifies the Slater's constraint qualification condition, strong duality holds \cite[App. B]{boyd2004convex}. Hence, by taking the dual we rewrite this optimization problem as
    \begin{align*}
        & ~ \min_{\lambda \geq 0, \gamma} ~ \gamma\\
        & \st ~
        \begin{bmatrix}
            x_0^\top \mathbf{M} x_0 + \gamma - \sigma \lambda & x_0^\top \mathbf{N}\\
            \mathbf{N}^\top x_0 & \lambda \mathbf{I} + \tilde{\mathbf{L}}
        \end{bmatrix} \succeq 0\,.
    \end{align*}
    Recalling the definition of $\mathbf{M}$, $\mathbf{N}$, and $\bar{\mathbf{L}}$, we finally express the above matrix inequality constraint as
    \begin{equation*}
        \begin{bmatrix}
            x_0^\top \mathbf{C}_T^{[0,0]} x_0 + \gamma - \sigma \lambda & x_0^\top \mathbf{C}_T^{[0, 1:T]} \\
            \mathbf{C}_T^{[1:T, 0]} x_0 & \lambda \mathbf{I} + \mathbf{C}_T^{[1:T, 1:T]}
        \end{bmatrix}
        -
        \mathbf{O}^\top \mathbf{O}
        \succeq 0\,,
    \end{equation*}
    from which the constraint \eqref{eq:sdp_schur_constraints_bounded_energy} can be derived by directly applying the Schur complement.

\end{document}